\documentclass[11pt]{article}

\usepackage[a4paper,margin=1in]{geometry}
\usepackage[T1]{fontenc}

\usepackage{mathtools} 
\mathtoolsset{showonlyrefs=true}
\usepackage{amssymb}
\usepackage{bm}
\usepackage{dsfont} 
\usepackage{mathrsfs}
\usepackage{upgreek}
\allowdisplaybreaks

\usepackage{algorithm}
\usepackage{algpseudocode}

\usepackage{graphicx}
\usepackage[font=small]{caption}
\usepackage{booktabs}
\usepackage[flushleft]{threeparttable}
\usepackage{tabularx}
\usepackage{float}

\usepackage[dvipsnames]{xcolor}
\usepackage[normalem]{ulem}
\usepackage{soul}
       
\usepackage{amsthm}
\theoremstyle{plain}
\newtheorem{theorem}{Theorem}

\newtheorem{proposition}{Proposition}

\theoremstyle{definition}

\newtheorem{remark}{Remark}
\newtheorem{assumption}{Assumption}

\usepackage[style=numeric, natbib, uniquelist=false, backend=biber, doi=false, url=false, date=year]{biblatex}
\usepackage[colorlinks=true, linkcolor=blue, urlcolor=cyan, citecolor=green]{hyperref}    

\usepackage{todonotes}

\usepackage{appendix}

\newcommand\Eb{\mathbb{E}}
\newcommand\Nb{\mathbb{N}}
\newcommand\Rb{\mathbb{R}}

\newcommand\Cc{\mathcal{C}}

\newcommand\Ec{\mathcal{E}}

\newcommand\Nc{\mathcal{N}}

\newcommand{\ee}{\mathrm{e}}
\newcommand{\BS}{\mathrm{BS}}
\newcommand{\VIX}{\mathrm{VIX}}
\newcommand{\Vega}{\mathrm{Vega}}
\newcommand{\Vomma}{\mathrm{Vomma}}

\newcommand{\He}{\mathrm{He}}

\newcommand{\ind}{\mathds{1}}
\newcommand{\VIXsqExp}{\Eb[\VIX_T^2]}

\newcommand{\sbs}{\sigma_{\mathrm{BS}}}

\newcommand{\wtsp}{\widetilde{\sigma}_P}
\newcommand{\wtspj}{\widetilde{\sigma}_{P,j}}

\newcommand{\dd}{\mathop{}\!\mathrm{d}}
\newcommand{\ds}{\dd s}
\newcommand{\dt}{\dd t}
\newcommand{\du}{\dd u}
\newcommand{\dv}{\dd v}

\newcommand{\dy}{\dd y}

\newcommand{\dX}{\dd X}

\newcommand{\dW}{\dd W}

\newcommand{\mus}{\mu\mathrm{s}}
\newcommand{\ms}{\mathrm{ms}}
\newcommand{\s}{\mathrm{s}}

\DeclareMathOperator*{\argmin}{arg\,min}

\newcommand{\eqlnostar}[2]{\begin{align}\label{#1}#2\end{align}}
\newcommand{\eqstar}[1]{\begin{align*}#1\end{align*}}

\title{Implied Volatility Expansions for VIX Options in Forward Variance Models}
\author{Y.~Liao\thanks{University of Glasgow (\href{mailto:2636546L@student.gla.ac.uk}{2636546L@student.gla.ac.uk})}
\qquad
A.~Agarwal\thanks{Western University (\href{mailto:aagarw93@uwo.ca}{aagarw93@uwo.ca})}
\qquad
F.~Bourgey\thanks{NYU~(\href{mailto:fb2615@nyu.edu}{fb2615@nyu.edu})}
}
\date{}

\begin{document}
\maketitle

\begin{abstract}
\noindent 
We develop closed-form expansions for the implied volatility of VIX options within the class of forward variance models. Our approach builds on weak-approximation techniques for VIX option prices and yields explicit implied volatility expansions with computable correction terms. The resulting formulas enable fast and accurate calibration without requiring numerical root-finding using option prices. We illustrate the performance of the proposed expansions in both standard and rough Bergomi-type models, as well as in mixed specifications, and demonstrate their accuracy through numerical experiments.
\end{abstract}

\section{Introduction}
\label{sec:introduction}
Accurate closed-form implied volatility expansions allow industry practitioners to efficiently calibrate their models to market data. This operation is crucial for computing risk sensitivities of derivative trading books based on current market conditions. In this work, we develop implied volatility expansions for options on the S\&P~500 volatility index (VIX) within the class of forward variance models. Our implied volatility expansions are derived from VIX option price approximations developed by \cite{bourgey_weak_2023}.

The VIX was first introduced in 1993 by the Chicago Board Options Exchange (CBOE) and is designed to measure the 30-day implied volatility of a log-contract on the S\&P 500 index. In practice, the log-contract is replicated using market prices of listed vanilla options across a continuum of strikes. Following \cite{bergomi_smile_2004}, a natural way to model the joint dynamics of the S\&P 500 spot price $S_t$ and its implied volatility surface is to work directly with the term structure of implied variances of log-contracts, referred to as forward variances. As is well known, these forward variances are driftless under the risk–neutral measure and can be treated as additional state variables alongside $S_t$. The resulting models are referred to as \textit{forward variance models}. 

In this framework, the instantaneous forward variance curve $(\xi_t^u)_{t\le u}$ is defined by
\eqlnostar{eq:instantaneous_forward_variance}{
\xi_t^u = \frac{\dd}{\du} \Bigl((u-t) \, \hat{\sigma}_t (u)^2 \Bigr),
}
where $\hat{\sigma}_t (u)^2$ denotes the implied variance of the log-contract with maturity $u$ observed at time $t$ (equivalently, the variance swap volatility with maturity $u$ at time $t$). In this work, we assume that the process $(\xi_t^u)_{t\le u}$ solves the following stochastic differential equation (SDE)
\eqlnostar{eq:instantaneous_forward_variance_sde}{
\dd\xi_t^u = \xi_t^u \, K^u(t) \dW_t,
}
where $K$ is a deterministic $L^2$-kernel function, and $\xi_0^u$ is a given initial forward variance curve. The solution to \eqref{eq:instantaneous_forward_variance_sde} is explicitly given by the log-normal process 
\eqstar{
\xi_t^u = \xi_0^u \exp \Bigl(- \frac{1}{2} \int_0^t K^u(s)^2 \ds + \int_0^t K^u(s) \dW_s\Bigr),
}
for all $u \geq t$. Equivalently, the instantaneous log-forward variance $X_t^u := \ln (\xi_t^u)$, with $u \geq t$, solves the SDE
\eqlnostar{eq:instantaneous_log-forward_variance}{
\dX_t^u = - \frac{1}{2} K^u(t)^2 \dt + K^u(t) \dW_t.
}

\begin{assumption}
\label{assump:initial_instantaneous_forward_variance}
The initial instantaneous forward variance curve $u \mapsto \xi_0^u$ is positive, bounded, and bounded away from zero.
\end{assumption}

\begin{assumption}
\label{assump:kernel_function}
The kernel $K$ in \eqref{eq:instantaneous_forward_variance_sde} satisfies $\int_0^t K^u(s)^2 \ds < \infty$, for every $u\in [T, T+\overline{\Delta}]$, for some $\overline{\Delta} \ge 1$. Moreover, for any $p > 0$, there exists a positive constant $C_p$ such that, for all $\Delta \le \overline{\Delta}$,
\eqlnostar{eq:integrable_assumption}{
\frac{1}{\Delta} \int_T^{T+\Delta} \ee^{\,p \int_0^T K^u (t)^2 \dt} \du \le C_p.
}
\end{assumption}
\noindent 
This class of models encompasses both the standard Bergomi model \citep{bergomi_smile_2005} and the rough Bergomi model \citep{bayer_pricing_2016}. In the standard Bergomi model, the kernel takes the exponential form
\eqlnostar{eq:standard_kernel}{
K^u (t) = \omega \, \ee^{- \kappa (u-t)},
}
where $\omega>0$ is the volatility of forward variances, and $\kappa>0$ is the mean-reversion speed. The rough Bergomi model corresponds to a fractional power-law kernel,
\eqlnostar{eq:power_kernel}{
K^u(t) = \eta \, (u-t)^{H - \frac{1}{2}},
}
where the Hurst exponent $H \in (0,\frac12)$ governs the degree of roughness (or memory decay), and $\eta > 0$ again represents the volatility of forward variances. Unlike \eqref{eq:standard_kernel}, the curve $u \mapsto \xi_t^u$ in the rough Bergomi model does not admit a finite-dimensional Markovian representation. 

\subsection{Our contributions}

Our contributions are threefold. First, building on the weak VIX option price approximation of \cite{bourgey_weak_2023}, recalled as Theorem~\ref{thm:vix_opt_price_approx_single_kernel}, we derive an explicit implied volatility expansion for the standard and rough Bergomi models; see Theorem~\ref{thm:vix_iv_single_kernel} and Section~\ref{sec:vix_iv_proxy_single_kernel}. This gives a direct implied volatility formula, rather than requiring numerical inversion of the price expansion. Second, we extend the construction to mixed-kernel forward variance models. Proposition~\ref{prop:vix_opt_price_hermite_expan} rewrites the mixed VIX option approximation by means of a Hermite expansion, and Theorem~\ref{thm:vix_iv_mixed_kernel} turns this representation into an implied volatility expansion; see Section~\ref{sec:vix_iv_proxy_mixed_kernel}. This complements existing Bergomi and rough-volatility asymptotic results for VIX futures, power payoffs, and volatility derivatives \citep{jacquier2018vix, lacombe2021asymptotics, guyon_vix_2022, alos_smile_2022}. Third, the numerical tests in Sections~\ref{sec:numerical_tests_single_kernel} and \ref{sec:numerical_tests_mixed_kernel}, together with the calibration study in Section~\ref{sec:calibration_test}, show that the expansions allow rapid calibration across a wide range of parameters and that mixed specifications capture market VIX skews more effectively than single log-normal kernels, a phenomenon consistent with the limitations documented in \cite{horvath_volatility_2020}. The accuracy of our formulas deteriorates in mixed-kernel models only in extreme parameter regimes, where the formulas still provide reliable initial guesses which can be used for standard root-finding procedures.

\subsection{Notations and definitions}
\label{sec:notations_and_definitions}

The Black-Scholes (BS) call and put price formulas, as functions of the log-spot $x \in \Rb$, log-strike $k \in \Rb$, and volatility $\sigma > 0$, are defined by
\begin{align}
C^{\BS} (x, k, \sigma) 
&:= \Eb \Bigl[ \bigl( \ee^x \ee^{-\frac{\sigma^2}{2}T +\sigma \sqrt{T} Z} - \ee^{k} \bigr)^{+} \Bigr]
= \ee^x \, \Phi (d_1(x, k, \sigma)) - \ee^k \, \Phi (d_2(x, k, \sigma)),
\label{eq:bs_call} \\
P^{\BS} (x, k, \sigma) 
&:= \Eb \Bigl[ \bigl( \ee^{k} - \ee^x \ee^{-\frac{\sigma^2}{2}T +\sigma \sqrt{T} Z} \bigr)^{+} \Bigr]
= \ee^k \Phi (-d_2(x, k, \sigma)) - \ee^x \Phi (-d_1(x, k, \sigma)),
\label{eq:bs_put}
\end{align}
where $Z \sim \Nc(0,1)$ and
\eqlnostar{eq:d_bs}{
d_1(x, k, \sigma) := \frac{x-k}{\sigma \sqrt{T}} + \frac{\sigma \sqrt{T}}{2}, 
\qquad 
d_2 (x, k, \sigma) := d_1 (x, k, \sigma) - \sigma \sqrt{T}.
}
Here, $\Phi(x) := \int_{-\infty}^{x} \phi(t) \dt$ denotes the cumulative distribution function (CDF) of the standard normal random variable, with probability density function (PDF) $\phi(t) := \ee^{-t^2/2} / \sqrt{2\pi}$.

The first- and second-order partial derivatives of $C^\BS$ with respect to the volatility $\sigma$ are given by (see, e.g., \cite[Proposition 32]{bompis_stochastic_2013})
\begin{align}
    \Vega (x, k, \sigma) 
    &:= \partial_\sigma C^{\BS} (x, k, \sigma) 
    = \ee^x \sqrt{T} \phi \bigl(d_1 (x, k, \sigma)\bigr),
    \label{eq:vega}
    \\
    \Vomma (x, k, \sigma) &:= \partial_\sigma^2 C^{\BS} (x, k, \sigma) =  \ee^x \sqrt{T} \phi (d_1 (x, k, \sigma)) \biggl( \frac{(x-k)^2}{\sigma^3 T} - \frac{\sigma T}{4} \biggr).
    \label{eq:vomma}
\end{align}
From \cite[Proposition 34]{bompis_stochastic_2013}, the following useful identities involving spatial derivatives hold
\eqlnostar{eq:identities_bs_price}{
( \partial_x^2 -  \partial_x ) C^\BS (x, k, \sigma) &= \frac{\Vega (x, k, \sigma)}{\sigma T}, \nonumber\\
( \partial_x^3 - \partial_x^2 ) C^\BS (x, k, \sigma) &= \Vega (x, k, \sigma) \Bigl(-\frac{x-k}{\sigma^3 T^2} + \frac{1}{2 \sigma T} \Bigr). 
}

\paragraph{VIX definition.} The VIX index at time $T$ is given by

\eqlnostar{eq:vix_single_kernel}{
\VIX_T := \sqrt{\frac{1}{\Delta} \int_T^{T+\Delta} \xi_T^u \du} 
= \sqrt{\frac{1}{\Delta} \int_T^{T+\Delta} \xi_0^u \, \ee^{Y_T^u} \du}, 
}
where $\Delta = 30/365$ corresponds to 30 days, and 
\eqstar{
Y_T^u := X_T^u - X_0^u 
= - \frac{1}{2} \int_0^T K^u(t)^2 \dt + \int_0^T K^u(t) \dW_t,
}
denotes the increment of the log-forward variance over the interval $[0,T]$.

\paragraph{VIX option prices.}
The VIX call, put, and futures prices for fixed strike and maturity, $\ee^k$ and $T$, respectively, are defined as
\eqstar{
C := \Eb \Bigl[ \bigl(\VIX_{T}-\ee^k \bigr)^{+} \Bigr], \qquad
P := \Eb \Bigl[ \bigl(\ee^k-\VIX_{T} \bigr)^{+} \Bigr], \qquad
F := \Eb[\VIX_{T}].
}

\paragraph{VIX implied volatility.}
The implied volatility of the VIX call price $C$ with log-strike $k$ and maturity $T$, denoted by $\sbs (k, T)$, is the unique non-negative solution of
\eqlnostar{eq:vix_iv}{
C^{\BS} (\ln(F), k, \sbs (k, T)) = C.
}

\paragraph{Hermite polynomials.}
Let $(\He_j)_{j\in \Nb}$ denote the probabilist's Hermite polynomials, defined by
\eqlnostar{eq:probabilist's_hermite_polynomial}{
\He_j (x) := (-1)^j \ee^{\frac{x^2}{2}} \partial_x^{j} \, \Bigl(\ee^{-\frac{x^2}{2}} \Bigr),  
\qquad 
\forall x \in \Rb.
}

The rest of the paper is organized as follows. 
In Sections \ref{sec:price_approx_single_kernel} and \ref{sec:price_approx_mixed_kernel}, we provide approximations for VIX option prices under single- and mixed-kernel forward variance models, respectively. 
In Section \ref{sec:vix_iv_proxy_single_kernel}, we present explicit implied volatility expansions for VIX options and assess their accuracy through numerical experiments in Section \ref{sec:numerical_tests_single_kernel} for both the standard and the rough Bergomi models. In Section \ref{sec:vix_iv_proxy_mixed_kernel}, we extend these expansions to mixed-kernel forward variance models and report the corresponding numerical results in Section \ref{sec:numerical_tests_mixed_kernel}. In Section \ref{sec:calibration_test}, we calibrate our implied volatility expansions to market VIX smiles under both the single- and mixed-kernel forward variance frameworks. Finally, we conclude in Section \ref{sec:conclusion}. 

A GitHub repository reproducing the numerical results is available at \url{https://github.com/fbourgey/vix-expansion-bergomi}.

\section{Single-kernel forward variance models}
\label{sec:single_kernel}

In this section, we outline the key steps for deriving approximations of VIX option prices within the single-kernel forward variance models introduced in \cite{bourgey_weak_2023}. The same steps are used later in Section \ref{sec:price_approx_mixed_kernel} to derive price approximations in mixed-kernel forward variance models. These results form the basis for subsequent expansions of VIX implied volatility under the two model specifications.

\subsection{VIX price approximation}
\label{sec:price_approx_single_kernel}

Motivated by the idea of approximating an arithmetic mean by its geometric mean (see \cite[Section 2.1]{bourgey_weak_2023} for the intuition), we introduce a log-normal proxy for the squared VIX, defined by
\eqlnostar{eq:vix_proxy_single_kernel}{
\VIX_{T,P}^2 := \VIXsqExp \exp \, \biggl(
\frac{1}{\Delta} \int_{T}^{T+\Delta} \frac{\xi_0^u}{\VIXsqExp} Y_T^u \du
\biggr),
}
where $\VIXsqExp = \frac{1}{\Delta} \int_{T}^{T+\Delta} \xi_0^u \du$.

\begin{proposition}{(\cite[Proposition 2.3]{bourgey_weak_2023})}
\label{prop:proxy_vix}
The proxy $\VIX_{T,P}^2$ is log-normal, that is, $\ln ( \VIX_{T,P}^2 )$ $\sim \Nc (\mu_P, \sigma_P^2)$, where the mean and variance are given by
\eqlnostar{eq:mean}{
\mu_P := \ln (\VIXsqExp) - \frac{1}{2} \int_0^T \biggl(\frac{1}{\Delta} \int_{T}^{T+\Delta} \frac{\xi_0^u}{\VIXsqExp} K^u(t)^2 \du \biggr) \dt,
}
and
\eqlnostar{eq:variance}{
\sigma_P^2 := \int_0^T \biggl( \frac{1}{\Delta} \int_{T}^{T+\Delta} \frac{\xi_0^u}{\VIXsqExp} K^u(t) \du \biggr)^2 \dt.
}
\end{proposition}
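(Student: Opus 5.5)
The plan is to write $\ln(\VIX_{T,P}^2)$ explicitly as a deterministic constant plus a single Wiener integral with a deterministic integrand. Once it has that form, Gaussianity and the two moments follow from the Itô isometry. Abbreviate the weights by $w(u) := \xi_0^u/\VIXsqExp$. By Assumption~\ref{assump:initial_instantaneous_forward_variance} these weights are bounded and nonnegative, and $\frac{1}{\Delta}\int_T^{T+\Delta} w(u)\du = 1$.

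Taking logarithms in \eqref{eq:vix_proxy_single_kernel} and substituting the expression for $Y_T^u$ gives
\begin{align}
\ln(\VIX_{T,P}^2) = \ln(\VIXsqExp) - \frac{1}{2}\frac{1}{\Delta}\int_T^{T+\Delta} w(u)\int_0^T K^u(t)^2\dt\du + \frac{1}{\Delta}\int_T^{T+\Delta} w(u)\int_0^T K^u(t)\dW_t\du .
\end{align}
In the deterministic double integral I apply Tonelli, since the integrand is nonnegative. Its finiteness follows from Assumption~\ref{assump:kernel_function}: apply \eqref{eq:integrable_assumption} with $p=1$ together with $x \le \ee^{x}$. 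Swapping the order of integration reproduces exactly the drift term in \eqref{eq:mean}.

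The stochastic term requires a stochastic Fubini theorem, which I expect to be the only genuinely delicate step. The goal is to justify
\begin{align}
\frac{1}{\Delta}\int_T^{T+\Delta} w(u)\int_0^T K^u(t)\dW_t\du = \int_0^T g(t)\dW_t, \qquad g(t) := \frac{1}{\Delta}\int_T^{T+\Delta} w(u)K^u(t)\du .
\end{align}
The integrand $(t,u)\mapsto w(u)K^u(t)$ is deterministic and measurable, so the standard stochastic Fubini theorem (e.g.\ in the form of Protter or Veraar) applies as soon as
\begin{align}
\int_T^{T+\Delta}\Bigl(\int_0^T w(u)^2K^u(t)^2\dt\Bigr)^{1/2}\du < \infty .
\end{align}
This follows from the boundedness of $w$, Cauchy--Schwarz in $u$, and the bound $\frac{1}{\Delta}\int_T^{T+\Delta}\int_0^T K^u(t)^2\dt\du \le C_1$ established above. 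The same estimate, combined with Jensen or Minkowski, gives $g\in L^2([0,T])$, so the Wiener integral on the right is well defined.

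Finally, because $g$ is deterministic and square integrable, $\int_0^T g(t)\dW_t$ is a centered Gaussian random variable with variance $\int_0^T g(t)^2\dt$ by the Itô isometry. This variance is precisely $\sigma_P^2$ in \eqref{eq:variance}. Adding the deterministic part identified above shows that $\ln(\VIX_{T,P}^2)\sim\Nc(\mu_P,\sigma_P^2)$, which is the claimed log-normality of the proxy.
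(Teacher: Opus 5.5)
Your proof is correct. It takes essentially the same approach as the source of this result, which the paper imports from \cite{bourgey_weak_2023} without reproving. A stochastic Fubini interchange rewrites the weighted $u$-average of $Y_T^u$ as a deterministic drift plus a Wiener integral of the deterministic $L^2$ integrand $t\mapsto \frac{1}{\Delta}\int_T^{T+\Delta} \frac{\xi_0^u}{\VIXsqExp} K^u(t)\,\du$, from which Gaussianity and the stated mean and variance follow directly.
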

We require the proxy $\VIX_{T,P}^2$ to be non-degenerate for which we make the following assumption.
\begin{assumption}
\label{assump:no_degenerate}
For any $p>0$, there exist positive constants $a_1$, $a_2$, $c$, and $\Delta_0$ such that, for all $\Delta \leq \Delta_0$,
{\small
\begin{align}
\Gamma_{\Delta, T, p} &:= \biggl( \frac{1}{\Delta} \int_{T}^{T+\Delta} \frac{\xi_0^u}{\VIXsqExp} \bigg| 
\int_0^T \Bigl( K^u(t)^2 -\frac{1}{\Delta} \int_T^{T+\Delta} \frac{\xi_0^v}{\VIXsqExp} K^v(t)^2 \dv \Bigr)\dt 
\bigg|^p \du \biggr)^{\frac{1}{p}} 
\le c \Delta^{a_1}, \label{eq:gamma}\\
\Lambda_{\Delta, T, p} &:= \biggl( \frac{1}{\Delta} \int_{T}^{T+\Delta} \frac{\xi_0^u}{\VIXsqExp} \bigg|
\int_0^T \Bigl( K^u(t) - \frac{1}{\Delta} \int_T^{T+\Delta} \frac{\xi_0^v}{\VIXsqExp} K^v(t) \dv \Bigr)^2\dt
\bigg|^p \du \biggr)^{\frac{1}{p}}
\le c \Delta^{a_2}, \label{eq:delta}
\end{align}
}
Additionally, we assume
\eqstar{
\sup_{\Delta \le \Delta_0} |\mu_P| \le c, 
\qquad
\frac{1}{c} \le \inf_{\Delta \le \Delta_0} \sigma_P \le \sup_{\Delta \le \Delta_0} \sigma_P \le c.
}
\end{assumption}
In view of the proxy, the VIX option price $\Eb [\varphi(\VIX_T^2)]$ is approximated by a Taylor expansion around $\VIX_{T,P}^2$, under the assumption that the payoff function $\varphi$ is sufficiently smooth. In what follows, we present an approximation formula for VIX option prices within the single-kernel forward variance models.

\begin{theorem}{(\cite[Theorem 2.7]{bourgey_weak_2023})}
\label{thm:vix_opt_price_approx_single_kernel}
Let $\varphi: \Rb \mapsto \Rb$ be a $\theta$-H\"older continuous function for some $\theta \in (0,1]$. Under Assumptions \ref{assump:initial_instantaneous_forward_variance}, \ref{assump:kernel_function}, and \ref{assump:no_degenerate}, the price of a VIX option with payoff $\varphi(\VIX_T^2)$ admits the following approximation
\eqlnostar{eq:vix_opt_price_approx_single_kernel}{
\Eb \bigl[ \varphi (\VIX_T^2) \bigr] = \Eb \bigl[ \varphi (\VIX_{T,P}^2) \bigr] 
+ \sum_{i=1}^3 \gamma_i \, \partial_\epsilon^i \, \Eb \bigl[ \varphi \bigl( \VIX_{T,P}^2 \, e^\epsilon \bigr) \bigr] \big|_{\epsilon=0} + \Ec_\varphi,
}
where the remainder term $\Ec_\varphi$ satisfies the bound $|\Ec_\varphi| \le C \Delta^{3(a_1 \land \frac{a_2}{2})}$.
The coefficients $(\gamma_i)_{i\in \{1,2,3\}}$ are given explicitly by
{\small
\eqlnostar{eq:coefficients}{
\gamma_1 &:= \frac{1}{8\Delta} \int_T^{T+\Delta} \frac{\xi_0^u}{\VIXsqExp} 
\Biggl( \int_0^T 
\Biggl( K^u(t)^2 - \frac{1}{\Delta} \int_T^{T+\Delta} \frac{\xi_0^v}{\VIXsqExp}  K^v(t)^2 \dv
\Biggr)\dt
\Biggr)^2 \du \nonumber \\
&\quad + \frac{1}{2\Delta} \int_T^{T+\Delta} \frac{\xi_0^u}{\VIXsqExp} \int_0^T 
\Biggl( K^u(t) - \frac{1}{\Delta} \int_T^{T+\Delta} \frac{\xi_0^v}{\VIXsqExp} K^v(t) \dv \Biggr)^2 \dt\,
\du, \nonumber \\[1.2ex] 
\gamma_2 &:= - \frac{1}{2\Delta} \int_T^{T+\Delta} \frac{\xi_0^u}{\VIXsqExp}
\Biggl( \int_0^T
\Biggl( K^u(t)^2 - \frac{1}{\Delta} \int_T^{T+\Delta} \frac{\xi_0^v}{\VIXsqExp} K^v(t)^2 \dv
\Biggr)\dt \Biggr) \\
&\quad \times
\Biggl( \int_0^T \frac{1}{\Delta} \int_T^{T+\Delta} \frac{\xi_0^v}{\VIXsqExp} K^v(t) \dv
\Biggl( K^u(t) - \frac{1}{\Delta} \int_T^{T+\Delta} \frac{\xi_0^v}{\VIXsqExp}  K^v(t) \dv
\Biggr)\dt
\Biggr) \du, \\[1.2ex]
\gamma_3 &:= \frac{1}{2\Delta}\! \int_T^{T+\Delta}\! \frac{\xi_0^u}{\VIXsqExp}
\Biggl(\! \int_0^T\! \frac{1}{\Delta}\! \int_T^{T+\Delta}\! \frac{\xi_0^v}{\VIXsqExp}  K^v(t)\! \dv
\Biggl(\! K^u(t) - \frac{1}{\Delta}\! \int_T^{T+\Delta}\! \frac{\xi_0^v}{\VIXsqExp} K^v(t) \dv \!
\Biggr)\! \dt \!
\Biggr)^2\! \du.
}
}
\end{theorem}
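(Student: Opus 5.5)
Write $\bar\xi^u:=\xi_0^u/\VIXsqExp$ for the normalized weights, so that $\frac{1}{\Delta}\int_T^{T+\Delta}\bar\xi^u\du=1$. I would write $\VIX_T^2=\VIX_{T,P}^2\,\ee^{R}$ with $R:=\ln\bigl(\frac{1}{\Delta}\int_T^{T+\Delta}\bar\xi^u \ee^{Y_T^u-\bar Y}\du\bigr)$, where $\bar Y:=\frac{1}{\Delta}\int_T^{T+\Delta}\bar\xi^u Y_T^u\du$. Then $\ln\VIX_{T,P}^2=\ln\VIXsqExp+\bar Y$ is Gaussian by Proposition~\ref{prop:proxy_vix}. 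The deviation $D^u:=Y_T^u-\bar Y$ has zero $\bar\xi$-weighted mean in $u$. It splits into a deterministic part $-\frac12\int_0^T(K^u(t)^2-\overline{K^2}(t))\dt$ and a Gaussian part $\int_0^T(K^u(t)-\bar K(t))\dW_t$, where bars denote the $\bar\xi$-weighted averages. By Assumption~\ref{assump:no_degenerate}, these two parts are of size $\Delta^{a_1}$ and $\Delta^{a_2/2}$ in the weighted $L^p$ norms. Set $\delta:=\Delta^{a_1\wedge a_2/2}$. Expanding $\ee^{D^u}=1+D^u+\frac12(D^u)^2+\dots$ and averaging gives $R=\frac12\overline{(D)^2}+O(\delta^3)$, since the first-order term vanishes. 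So $R$ is a nonnegative quantity of order $\delta^2$.

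Next I expand the price. Define $g(\epsilon):=\Eb[\varphi(\VIX_{T,P}^2\ee^\epsilon)]$. This function is smooth in $\epsilon$ even for H\"older $\varphi$, because the lognormal density can be differentiated: a shift $\epsilon$ only translates the Gaussian mean. However, $R$ is not independent of $\VIX_{T,P}^2$, so I cannot simply write $\Eb[\varphi(\VIX_{T,P}^2\ee^R)]=\Eb[g(R)]$. To handle the dependence, I decompose the Gaussian part of $D^u$ into its projection onto $G:=\int_0^T\bar K(t)\dW_t$ and an independent remainder. The projection coefficient is $\beta^u:=\int_0^T\bar K(t)(K^u(t)-\bar K(t))\dt/\sigma_P^2$; the integral $\int_0^T\bar K(K^u-\bar K)\dt$ is exactly the expression appearing in $\gamma_2,\gamma_3$. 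I then Taylor expand $\varphi(\VIX_{T,P}^2\ee^{R})$ to third order in $R$ and the smoothed terms. Conditional expectations given $G$ turn powers of $G$ into Hermite-type factors, and Gaussian integration by parts in $G$ turns these into $\epsilon$-derivatives $\partial^i_\epsilon g(0)$, via the identity $\Eb[G\,\psi(G)]=\sigma_P^2\Eb[\psi'(G)]$.

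Collecting the terms:
\begin{itemize}
\item The mean of $\frac12\overline{D^2}$ produces $\gamma_1$. The deterministic part squared gives the $\frac18(\int(K^2-\overline{K^2}))^2$ contribution, and the variance of the Gaussian part gives the $\frac12\int(K-\bar K)^2$ contribution.
\item The cross term between the deterministic part and the $G$-projection produces $\gamma_2$, after one integration by parts, which brings out the factor $\sigma_P^2\beta^u$.
\item The square of the projection produces $\gamma_3$, after two integrations by parts.
\end{itemize}
Matching the normalizations between the $\epsilon$-derivatives and the $\partial_G$ derivatives (the chain rule through $\sigma_P^2$) should reproduce exactly the stated constants. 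A consistency check helps here: with $\varphi=\mathrm{id}$ one must recover $\Eb[\VIX_T^2]=\VIXsqExp$ up to the error.

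The main obstacle is the remainder estimate $|\Ec_\varphi|\le C\delta^3$ for merely $\theta$-H\"older $\varphi$. A pathwise Taylor bound requires derivatives of $\varphi$. Instead, I would first mollify: replace $\varphi$ by $\varphi_h$ with $\|\varphi-\varphi_h\|_\infty\lesssim h^\theta$ and $\|\varphi_h^{(j)}\|\lesssim h^{\theta-j}$. Then I would push all derivatives onto the Gaussian density of $\ln\VIX_{T,P}^2$ by Malliavin or Gaussian integration by parts. This is possible because $\sigma_P$ is bounded below by Assumption~\ref{assump:no_degenerate}, and it gives bounds uniform in $h$. The fourth-order Taylor remainders and the $O(\delta^3)$ errors in $R$ are controlled by H\"older and moment bounds on $\ee^{p\int_0^T (K^u)^2}$, using Assumption~\ref{assump:kernel_function} and Assumption~\ref{assump:initial_instantaneous_forward_variance}. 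Finally I let $h\to0$. This is where the detailed bookkeeping lies, and it mirrors the argument in \cite{bourgey_weak_2023}.
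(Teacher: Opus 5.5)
The paper does not prove this statement; it quotes it from \cite{bourgey_weak_2023}. Your skeleton has the same structure as that weak-approximation argument: write $\VIX_T^2=\VIX_{T,P}^2\ee^{R}$, expand around the proxy, and integrate by parts in $G$ to produce $\epsilon$-Greeks of the proxy. Your coefficient matching is also correct and needs no extra normalization. Set $a^u:=\int_0^T(K^u(t)^2-\overline{K^2}(t))\dt$, $c^u:=\int_0^T\bar K(t)(K^u(t)-\bar K(t))\dt=\sigma_P^2\beta^u$ and $g(\epsilon):=\Eb[\varphi(\VIX_{T,P}^2\ee^{\epsilon})]$. Then
\[
\Eb\Bigl[\partial_\epsilon\varphi\bigl(\VIX_{T,P}^2\ee^{\epsilon}\bigr)\big|_{\epsilon=0}\,\tfrac12\overline{D^2}\Bigr]
=\Bigl(\tfrac18\overline{a^2}+\tfrac12\overline{\int_0^T(K-\bar K)^2\dt}\Bigr)g'(0)-\tfrac12\overline{ac}\,g''(0)+\tfrac12\overline{c^2}\,g'''(0),
\]
which is exactly $\gamma_1g'(0)+\gamma_2g''(0)+\gamma_3g'''(0)$. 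This works because $\partial_G$ acting on functions of $\VIX_{T,P}^2$ is $\partial_\epsilon$, and the factor $\sigma_P^2$ from $\Eb[G\psi(G)]=\sigma_P^2\Eb[\psi'(G)]$ is absorbed by $\beta^u=c^u/\sigma_P^2$. Only first order in $R$ is needed. Since $R=O(\delta^2)$, the second-order Taylor remainder is $O(\delta^4)$, and the $\delta^3$ error comes from $R-\frac12\overline{D^2}$.

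The gap is in the remainder for H\"older $\varphi$. Some terms evaluate $\varphi$ at $\VIX_{T,P}^2$ itself: the main term, the $\gamma_i$-terms, and $\Eb[\partial_\epsilon\varphi(\VIX_{T,P}^2\ee^{\epsilon})|_{\epsilon=0}(R-\frac12\overline{D^2})]$. These can be handled as you say, by integrating by parts in $G\sim\Nc(0,\sigma_P^2)$ conditionally on the independent part $N$, with $\sigma_P\ge 1/c$ making the weights uniform in $h$. The second-order remainder $\int_0^1(1-\lambda)\,\Eb\bigl[\partial_\epsilon^2\varphi_h(\VIX_{T,P}^2\ee^{\epsilon})|_{\epsilon=\lambda R}\,R^2\bigr]\dd\lambda$ is different: it evaluates $\varphi_h''$ at $\VIX_{T,P}^2\ee^{\lambda R}$. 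The log of that point, $F_\lambda:=\ln\VIX_{T,P}^2+\lambda R$, does not have the Gaussian law of $\ln\VIX_{T,P}^2$, so there is no density to push derivatives onto. The crude bound $\|\varphi_h''\|\lesssim h^{\theta-2}$ then blows up as $h\to0$. What you need is an integration by parts for $F_\lambda$ uniformly in $\lambda\in[0,1]$, that is, control of its Malliavin covariance $\int_0^T(\bar K(t)+\lambda\,\mathcal{D}_tR)^2\dt$. The lower bound on $\sigma_P$ gives this only at $\lambda=0$.

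For $\lambda>0$ the missing step has three parts:
\begin{enumerate}
\item Show that $\mathcal{D}_tR=\overline{(\ee^{D}-1)(K(t)-\bar K(t))}/\overline{\ee^{D}}$ satisfies $\|\mathcal{D}R\|_{L^2(0,T)}=O(\delta\,\Delta^{a_2/2})$ in every $L^p$. Use $\overline{\ee^{D}}\ge1$ and the exponential moments from Assumption~\ref{assump:kernel_function}.
\item Localize with a smooth cutoff on $\{\|\mathcal{D}R\|_{L^2(0,T)}\le\sigma_P/4\}$. There the covariance is at least $\sigma_P^2/4$, so you can integrate by parts.
\item Bound the complementary event, whose probability is $O(\Delta^q)$ for every $q$. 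This is where the H\"older continuity and growth of $\varphi$ and moment bounds enter.
\end{enumerate}
Equivalently, in your conditional-on-$N$ picture, you would need $G\mapsto G+\lambda R$ to have derivative $1+\lambda\,\overline{\beta\ee^{D}}/\overline{\ee^{D}}$ bounded away from zero. Assumption~\ref{assump:no_degenerate} only controls $\beta^u$ in weighted $L^p$, so this does not hold pathwise. Without this non-degeneracy and localization step, the $O(\delta^3)$ bound for merely H\"older payoffs is not justified by your sketch.
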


\begin{remark}
\label{re:payoff_func}
The payoff of a VIX call with log-strike $k$ is $\varphi(x) = (\sqrt{x} - \ee^k)^+$, a VIX put is $\varphi(x) = (\ee^k - \sqrt{x})^+$, and a VIX futures is $\varphi(x) =  \sqrt{x}$. In all these cases, the function $\varphi$ is $\frac{1}{2}$-H\"older continuous.
\end{remark}

\paragraph{VIX option prices proxy.}
The proxy prices of a VIX call, put, and futures are defined as 
\eqstar{
C_{P,0} := \Eb \Bigl[ \bigl(\VIX_{T,P} - \ee^{k} \bigr)^{+} \Bigr], \qquad
P_{P,0} := \Eb \Bigl[ \bigl(\ee^{k} - \VIX_{T,P} \bigr)^{+} \Bigr], \qquad
F_{P,0} := \Eb[\VIX_{T,P}],
}
and the corrected proxy prices as
\eqstar{
C_P &:= C_{P,0}
+ \sum_{i=1}^3 \gamma_i \, \partial_\epsilon^i \Eb \Bigl[ \Bigl(\VIX_{T,P}\, \ee^{\frac{\epsilon}{2}} - \ee^k \Bigr)^+ \Bigr] \Big|_{\epsilon=0}, \\
P_P &:= P_{P,0}
+ \sum_{i=1}^3 \gamma_i \, \partial_\epsilon^i \Eb \Bigl[ \Bigl(\ee^k - \VIX_{T,P}\, \ee^{\frac{\epsilon}{2}} \Bigr)^+ \Bigr] \Big|_{\epsilon=0}, \\
F_P &:= F_{P,0} + \sum_{i=1}^3 \gamma_i \, \partial_\epsilon^i \Eb \Bigl[\VIX_{T,P}\, \ee^{\frac{\epsilon}{2}} \Bigr] \Big|_{\epsilon=0}.
}

\begin{remark}
\label{re:explicit_formula_for_coefficients}
In the standard Bergomi model, there are closed-form expressions for $\mu_P$, $\sigma_P$, and $(\gamma_i)_{i=\{1,2,3\}}$, if the initial forward variance curve $u \mapsto \xi_0^u$ is constant on $u\in [T, T+\Delta]$. In contrast, in the rough Bergomi model, explicit expressions exist for $\mu_P$ and $\sigma_P$, while $(\gamma_i)_{i\in\{1,2,3\}}$ do not admit an explicit form even when the forward variance curve is flat. See \cite[Propositions 2.9 and 2.11]{bourgey_weak_2023} for details.
\end{remark}

\subsection{Implied volatility expansion}
\label{sec:vix_iv_proxy_single_kernel}

In view of Remark \ref{re:payoff_func}, the approximations for VIX call, put, and futures prices in \eqref{eq:vix_opt_price_approx_single_kernel} can be expressed as linear combinations of the BS price formulas \eqref{eq:bs_call}-\eqref{eq:bs_put} and their derivatives with respect to the log-spot price. Let $\wtsp := \frac{\sigma_P}{\sqrt{T}}$ and define the effective log-spot $x_P := \frac{\mu_P}{2} + \frac{\sigma_P^2}{8}$. Consider a call payoff $\varphi(x) = (\sqrt{x}-\ee^k)^+$.
From Proposition \ref{prop:proxy_vix} and Theorem \ref{thm:vix_opt_price_approx_single_kernel}, we have
\eqstar{
C_{P,0} = \Eb \bigl[ \varphi \bigl( \VIX_{T,P}^2 \bigr) \bigr] = C^{\BS} \bigl(x_P, k, \tfrac{\wtsp}{2} \bigr).
}
The $i$th-order derivative, $i \in \{1,2,3\},$ in \eqref{eq:vix_opt_price_approx_single_kernel} is given by
\eqstar{
\partial_\epsilon^i \Eb \bigl[ \varphi \bigl( \VIX_{T,P}^2 \, \ee^\epsilon \bigr) \bigr] \big|_{\epsilon=0}
= \partial_\epsilon^i C^{\BS} \bigl(x_P + \tfrac{\epsilon}{2}, k, \tfrac{\wtsp}{2} \bigr) \big|_{\epsilon = 0} 
= 2^{-i} \partial_x^i C^{\BS} \bigl(x_P, k, \tfrac{\wtsp}{2} \bigr), 
}
where
\eqlnostar{eq:bs_call_derivatives}{
\partial_x^i C^{\BS}\! (x, k, \sigma) 
= 
\ee^x \Phi (d_1(x, k, \sigma)) 
+ \ind_{i\ge 2}\, \ee^x \phi (d_1(x, k, \sigma)) \sum_{j=1}^{i-1} \binom{i-1}{j}\, (-1)^{j-1} \frac{\He_{j-1} (d_1(x, k, \sigma))}{\sigma^j T^{\frac{j}{2}}}.
}
Thus, the VIX call price approximation in \eqref{eq:vix_opt_price_approx_single_kernel} rewrites as
\eqlnostar{eq:vix_call_proxy_single_kernel}{
C_P = 
C^{\BS}\bigl(x_P, k, \tfrac{\wtsp}{2}\bigr)
+ \sum_{i=1}^3 2^{-i} \gamma_i\, \partial_x^i C^{\BS} \bigl(x_P, k, \tfrac{\wtsp}{2}\bigr).
}
We can analogously rewrite the price approximation in \eqref{eq:vix_opt_price_approx_single_kernel} for VIX puts, using the combination of the BS put price formula in \eqref{eq:bs_put} and its derivatives. The VIX put price approximation formula is then given by
\eqlnostar{eq:vix_put_proxy_single_kernel}{
P_P = 
P^{\BS} \bigl(x_P, k, \tfrac{\wtsp}{2}\bigr)
+ \sum_{i=1}^3 2^{-i} \gamma_i\, \partial_x^i P^{\BS} \bigl(x_P, k, \tfrac{\wtsp}{2}\bigr),
}
where
\eqlnostar{eq:bs_put_derivatives}{
\partial_x^i P^{\BS}\! (x, k, \sigma) 
= \!
-\ee^x \Phi\! (-d_1(x, k, \sigma))\! 
+\! \ind_{i\ge 2} \ee^x \phi\! (d_1(x, k, \sigma))\! \sum_{j=1}^{i-1} \binom{i-1}{j}\! (-1)^{j-1} \frac{\He_{j-1} (d_1(x, k, \sigma))}{\sigma^j T^{\frac{j}{2}}}.
}
For VIX futures prices, applying the same Taylor expansion to the payoff $\sqrt{x}$ and dropping the strike contribution yields
\eqlnostar{eq:vix_fut_proxy_single_kernel}{
F_P = \ee^{x_P} \Bigl( 1 + \frac{\gamma_1}{2} + \frac{\gamma_2}{4} + \frac{\gamma_3}{8} \Bigr).
}
In the above, the put-call parity holds, that is, $C_P - P_P = F_P - \ee^k$. Consequently, when the futures price \eqref{eq:vix_fut_proxy_single_kernel} is used as the spot, the implied volatility extracted from VIX call prices \eqref{eq:vix_call_proxy_single_kernel} coincides with that extracted from VIX put prices \eqref{eq:vix_put_proxy_single_kernel}.

Inspired by \cite{bompis_analytical_2018}, we derive an approximation for the volatility implied by VIX call prices in \eqref{eq:vix_call_proxy_single_kernel} by using relations between BS-Greeks and sensitivities of VIX option prices with respect to log-spot $x$ and volatility $\sigma$ in \eqref{eq:identities_bs_price}.

\begin{theorem}
\label{thm:vix_iv_single_kernel}
Under Assumptions \ref{assump:initial_instantaneous_forward_variance}, \ref{assump:kernel_function} and \ref{assump:no_degenerate}, the VIX implied volatility $\sbs(k,T)$ defined in \eqref{eq:vix_iv} is approximated by
\eqlnostar{eq:vix_iv_proxy_single_kernel}{
\sbs(k,T) \approx \frac{\wtsp}{2} 
+ \frac{\gamma_2}{2 \wtsp T} 
+ \frac{3 \gamma_3}{8 \wtsp T} 
- \frac{\gamma_3}{\wtsp^3 T^2} (x_P-k), 
}
where $\wtsp = \frac{\sigma_P}{\sqrt{T}}$, $x_P = \frac{\mu_P}{2} + \frac{\sigma_P^2}{8}$, and the coefficients $(\gamma_i)_{i\in\{1,2,3\}}$ are given in \eqref{eq:coefficients}.
\end{theorem}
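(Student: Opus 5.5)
We start from the call price approximation \eqref{eq:vix_call_proxy_single_kernel}, $C_P = C^{\BS}(x_P,k,\sigma_0) + \sum_{i=1}^3 2^{-i}\gamma_i \partial_x^i C^{\BS}(x_P,k,\sigma_0)$ with $\sigma_0 := \wtsp/2$. The aim is to rewrite the $\partial_x$-corrections as a shift of the spot plus a shift of the volatility. The spot shift will be absorbed by the futures $F_P$ in \eqref{eq:vix_fut_proxy_single_kernel}, and the volatility shift is read off via a first-order Vega inversion, in the spirit of \cite{bompis_analytical_2018}. Throughout, $\gamma_i$ are treated as small (of order $\Delta^{a_1\wedge a_2/2}$ by Assumption~\ref{assump:no_degenerate}), and only first-order terms in $(\gamma_i)$ are retained. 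The ``$\approx$'' in Theorem~\ref{thm:vix_iv_single_kernel} is understood modulo terms quadratic in $\gamma$.

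\textbf{Step 1: regroup the derivatives.} Write
\[
\partial_x^2 = \partial_x + (\partial_x^2-\partial_x), \qquad \partial_x^3 = \partial_x + 2(\partial_x^2-\partial_x) + (\partial_x^3-\partial_x^2)
\]
(check: $\partial_x + 2\partial_x^2 - 2\partial_x + \partial_x^3 - \partial_x^2 = \partial_x^3$). Then the correction becomes
\[
\Bigl(\tfrac{\gamma_1}{2}+\tfrac{\gamma_2}{4}+\tfrac{\gamma_3}{8}\Bigr)\partial_x C^{\BS} + \Bigl(\tfrac{\gamma_2}{4}+\tfrac{\gamma_3}{4}\Bigr)(\partial_x^2-\partial_x)C^{\BS} + \tfrac{\gamma_3}{8}(\partial_x^3-\partial_x^2)C^{\BS}.
\]
Applying the identities \eqref{eq:identities_bs_price} with $\sigma=\sigma_0$, the last two terms equal $\Vega(x_P,k,\sigma_0)$ times
\[
\frac{\gamma_2/4+\gamma_3/4}{\sigma_0 T} + \frac{\gamma_3}{8}\Bigl(\frac{1}{2\sigma_0 T}-\frac{x_P-k}{\sigma_0^3T^2}\Bigr).
\]

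\textbf{Step 2: absorb the spot shift.} Set $\delta := \tfrac{\gamma_1}{2}+\tfrac{\gamma_2}{4}+\tfrac{\gamma_3}{8}$, so that $\ln F_P = x_P + \ln(1+\delta) = x_P + \delta + O(\gamma^2)$. By Taylor expansion, $C^{\BS}(x_P,k,\sigma_0) + \delta\,\partial_x C^{\BS}(x_P,k,\sigma_0) = C^{\BS}(\ln F_P,k,\sigma_0) + O(\gamma^2)$. Replacing $x_P$ by $\ln F_P$ inside the Vega term likewise only costs $O(\gamma^2)$. Hence
\[
C_P = C^{\BS}(\ln F_P,k,\sigma_0) + \Vega(\ln F_P,k,\sigma_0)\,\sigma_1 + O(\gamma^2),
\]
where $\sigma_1$ denotes the bracket from Step 1.

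\textbf{Step 3: invert.} Expand $C^{\BS}(\ln F_P,k,\sigma_0+\sigma_1) = C^{\BS}(\ln F_P,k,\sigma_0) + \Vega\,\sigma_1 + \tfrac12\Vomma\,\sigma_1^2+\dots$ and compare with the expression for $C_P$ obtained in Step 2. Uniqueness of the implied volatility in \eqref{eq:vix_iv}, with $F,C$ replaced by their proxies, then gives $\sbs \approx \sigma_0 + \sigma_1$.

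Substituting $\sigma_0=\wtsp/2$ gives $\frac{1}{\sigma_0 T} = \frac{2}{\wtsp T}$ and $\frac{1}{\sigma_0^3T^2} = \frac{8}{\wtsp^3T^2}$. Therefore
\[
\sigma_1 = \frac{\gamma_2+\gamma_3}{2\wtsp T} - \frac{\gamma_3}{8\wtsp T} + \ldots
\]
Collecting the $\gamma_3$ contributions, $\frac{\gamma_3}{2\wtsp T} - \frac{\gamma_3}{8\wtsp T} = \frac{3\gamma_3}{8\wtsp T}$. The sign needs care here: $(\partial_x^3-\partial_x^2)$ carries $+\frac{1}{2\sigma T}$, giving $+\frac{\gamma_3}{16\sigma_0 T} = +\frac{\gamma_3}{8\wtsp T}$. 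So I will recheck the decomposition coefficients, since the target $\frac{3}{8}$ suggests the coefficient of $(\partial_x^2-\partial_x)$ for $\gamma_3$ is effectively $\tfrac{1}{8}\cdot\tfrac32$. The skew term comes out as $-\frac{\gamma_3}{8}\cdot\frac{8}{\wtsp^3T^2}(x_P-k) = -\frac{\gamma_3}{\wtsp^3T^2}(x_P-k)$, which matches the statement.

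\textbf{Main obstacle.} The main obstacle is this bookkeeping of the $\gamma_3$ constant together with a justification of which terms are genuinely of second order. The second-order terms are $\Vomma\,\sigma_1^2$ and the cross terms from moving $x_P$ to $\ln F_P$. They are controlled using the boundedness of $\sigma_P$ and $\mu_P$ from Assumption~\ref{assump:no_degenerate}, which keeps $\Vomma/\Vega$ bounded for fixed $k$. I would settle the $\gamma_3$ constant by directly expanding $\partial_x^3 C^{\BS}$ via \eqref{eq:bs_call_derivatives} with $i=3$ and matching coefficients of $\phi(d_1)$ and $\phi(d_1)d_1$ against $\Vega$ and $\Vega\cdot(x-k)$.
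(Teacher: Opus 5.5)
Your strategy is the paper's. You split the $\partial_x^i$ corrections into a pure $\partial_x$ part with weight $\delta=\frac{\gamma_1}{2}+\frac{\gamma_2}{4}+\frac{\gamma_3}{8}$, which $F_P$ absorbs, plus $\Vega$ terms via \eqref{eq:identities_bs_price}, and then invert to first order in volatility. Steps 2--3 are fine: you move $C_P$ to spot $\ln F_P$, whereas the paper expands $C^{\BS}(\ln F,k,\sbs)$ around $x_P$, and the two are equivalent at first order.

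\textbf{The gap is in Step 1.} It rests on a false operator identity, so as written the proposal does not produce the $\frac{3}{8}$ coefficient. Your check is miscomputed:
\[
\partial_x+2(\partial_x^2-\partial_x)+(\partial_x^3-\partial_x^2)=\partial_x^3+\partial_x^2-\partial_x\neq\partial_x^3 .
\]
With your weight $\frac{\gamma_2}{4}+\frac{\gamma_3}{4}$ on $(\partial_x^2-\partial_x)$, the constant comes out as $\frac{5\gamma_3}{8\wtsp T}$. The minus sign in your $\sigma_1=\frac{\gamma_2+\gamma_3}{2\wtsp T}-\frac{\gamma_3}{8\wtsp T}$ has no justification; as you note yourself, the $\frac{1}{2\sigma T}$ term in the third-order identity enters with a plus sign. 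You leave this inconsistency unresolved.

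\textbf{The fix.} Use the telescoping identity $\partial_x^3=\partial_x+(\partial_x^2-\partial_x)+(\partial_x^3-\partial_x^2)$. The weight of $(\partial_x^2-\partial_x)$ is then $\frac{\gamma_2}{4}+\frac{\gamma_3}{8}$, while $\partial_x$ keeps weight $\delta$; this is why your Step 2 was unaffected. With $\sigma_0=\wtsp/2$, so that $\frac{1}{\sigma_0T}=\frac{2}{\wtsp T}$ and $\frac{1}{\sigma_0^3T^2}=\frac{8}{\wtsp^3T^2}$, one gets
\[
\sigma_1=\Bigl(\frac{\gamma_2}{4}+\frac{\gamma_3}{8}\Bigr)\frac{2}{\wtsp T}+\frac{\gamma_3}{8}\Bigl(\frac{1}{\wtsp T}-\frac{8(x_P-k)}{\wtsp^3T^2}\Bigr)=\frac{\gamma_2}{2\wtsp T}+\frac{3\gamma_3}{8\wtsp T}-\frac{\gamma_3}{\wtsp^3T^2}(x_P-k).
\]
Here $\frac{3}{8}=\frac{1}{4}+\frac{1}{8}$ with both pieces positive, which is exactly the paper's $\Delta\sigma$.

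Your fallback via \eqref{eq:bs_call_derivatives} confirms this:
\[
(\partial_x^3-\partial_x)C^{\BS}=\ee^x\phi(d_1)\Bigl(\frac{2}{\sigma\sqrt T}-\frac{d_1}{\sigma^2T}\Bigr)=\Vega\Bigl(\frac{3}{2\sigma T}-\frac{x-k}{\sigma^3T^2}\Bigr).
\]
So the net weight $\frac{\gamma_3}{8}\cdot\frac{3}{2}$ on $\Vega/(\sigma T)$ that you guessed is right. It arises as $\frac{\gamma_3}{8}$ from $(\partial_x^2-\partial_x)$ plus $\frac{\gamma_3}{16}$ from the $+\frac{1}{2\sigma T}$ piece, not from a sign change.
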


\begin{proof}
Let us set $\delta = \frac{\gamma_1}{2} + \frac{\gamma_2}{4} + \frac{\gamma_3}{8}$. Using the VIX futures approximation in \eqref{eq:vix_fut_proxy_single_kernel}, $F \approx F_P = e^{x_P}(1 + \delta)$, we have
\eqstar{
\ln (F) \approx \ln (F_P) = x_P + \ln(1+\delta) = x_P + \delta + O(\delta^2).
}
Hence, by a first-order Taylor expansion in the first argument,
\eqstar{
C 
= C^{\BS} (\ln (F), k, \sbs)
\approx C^{\BS} (x_P, k, \sbs) + \delta\, \partial_x C^{\BS} (x_P, k, \sbs),
}
up to terms of order $O(\delta^2)$. This shows that the additional $\partial_x C^{\BS}$ term arises from replacing $\ln (F)$ by its approximation.

From \eqref{eq:vix_call_proxy_single_kernel} and the identities in \eqref{eq:identities_bs_price}, we have
\eqstar{
C_P = C^{\BS} \bigl(x_P, k, \tfrac{\wtsp}{2}\bigr)
+ \delta\,\partial_x C^{\BS} \bigl(x_P, k, \tfrac{\wtsp}{2}\bigr)
+ \Delta\sigma\,\Vega \bigl(x_P, k, \tfrac{\wtsp}{2}\bigr),
}
where
\eqstar{
\Delta\sigma := \frac{\gamma_2}{2 \wtsp T}
+ \frac{3\gamma_3}{8 \wtsp T}
- \frac{\gamma_3}{\wtsp^3 T^2} (x_P-k).
}
A first-order Taylor expansion in the volatility argument yields
\eqstar{
C^{\BS} \bigl(x_P, k, \tfrac{\wtsp}{2} + \Delta\sigma \bigr)
= C^{\BS} \bigl(x_P, k, \tfrac{\wtsp}{2} \bigr)
+ \Delta\sigma \Vega \bigl(x_P, k, \tfrac{\wtsp}{2}\bigr)
+ O((\Delta\sigma)^2).
}
Therefore,
\eqstar{
C_P \approx C^{\BS} \bigl(x_P, k, \tfrac{\wtsp}{2} + \Delta\sigma \bigr)
+ \delta\, \partial_x C^{\BS} \bigl(x_P, k, \tfrac{\wtsp}{2} \bigr).
}
\end{proof}

\subsection{Numerical tests}
\label{sec:numerical_tests_single_kernel}

In this section, we evaluate the accuracy of the VIX implied volatility expansion \eqref{eq:vix_iv_proxy_single_kernel} (referred to as \textit{Expansion}) under the standard and rough Bergomi models, and compare it with implied volatilities obtained from the weak approximation in Theorem \ref{thm:vix_opt_price_approx_single_kernel} (referred to as \textit{Approx}). In the standard Bergomi model, implied volatilities are computed via two-dimensional Gauss-Legendre quadratures (\textit{quadrature}) with $120$ nodes each. In the rough Bergomi model, we use a trapezoidal rule with 300 time steps for the time integral and $10^6$ Monte Carlo paths (\textit{Monte Carlo}) (see \cite[Remark 8]{bourgey_weak_2023} for details).

We consider VIX maturities $T \in \{1, 3, 6\ \text{months}\}$ with $10$ evenly spaced log-moneyness ($k-\ln(F)$) in $[-0.1, 0.4]$, and assume a flat forward variance $\xi_0^u = 0.24^2$. Figure \ref{fig:single_case_bergomi} shows VIX smiles in the standard Bergomi model for $\omega=2$, $ \kappa=0.25$, and $\omega=8$, $\kappa=10$, with parameters from \cite[Figure 3]{guyon_vix_2022}.
Figure \ref{fig:single_case_rough_bergomi} displays the rough Bergomi model with $H=0.1$, $\eta=1$, from \cite[Section 2]{bourgey_weak_2023}, and $H=0.23$, $\eta=1.02$, from \cite{bourgey2024smile}\footnote{Note that $\eta\sqrt{2H}$ is used instead of $\eta$ in \cite{bourgey2024smile}.}.
In all four cases, the implied volatility expansion performs well with relative errors below 1\%.
On a MacBook Air (M5) with 24 GB unified memory, in the standard Bergomi model, the runtimes are $19.8 \, \ms \pm \, 237 \mus$ (\textit{quadrature}), $4.8 \, \ms \pm 43 \, \mus$ (\textit{Approx}), and $151 \, \mus \pm 1.63 \, \mus$ (\textit{Expansion}). In the rough Bergomi model, runtimes are $8.44 \, \s \pm 293 \, \ms$ (\textit{Monte Carlo}), $4.31 \, \ms \pm 6.47 \, \mus$ (\textit{Approx}), and $160 \, \mus \pm 1.37 \, \mus$ (\textit{Expansion}). Thus, the implied volatility expansion is approximately $131\times$ faster than the quadrature method
and $32\times$ faster than the weak approximation in the standard Bergomi model,
and approximately $52{,}750\times$ faster than the Monte Carlo method and $27\times$ faster than
the weak approximation in the rough Bergomi model.

\begin{figure}[H]
    \centering
    \hspace{-0.5cm}
    \includegraphics[width=0.5\linewidth]{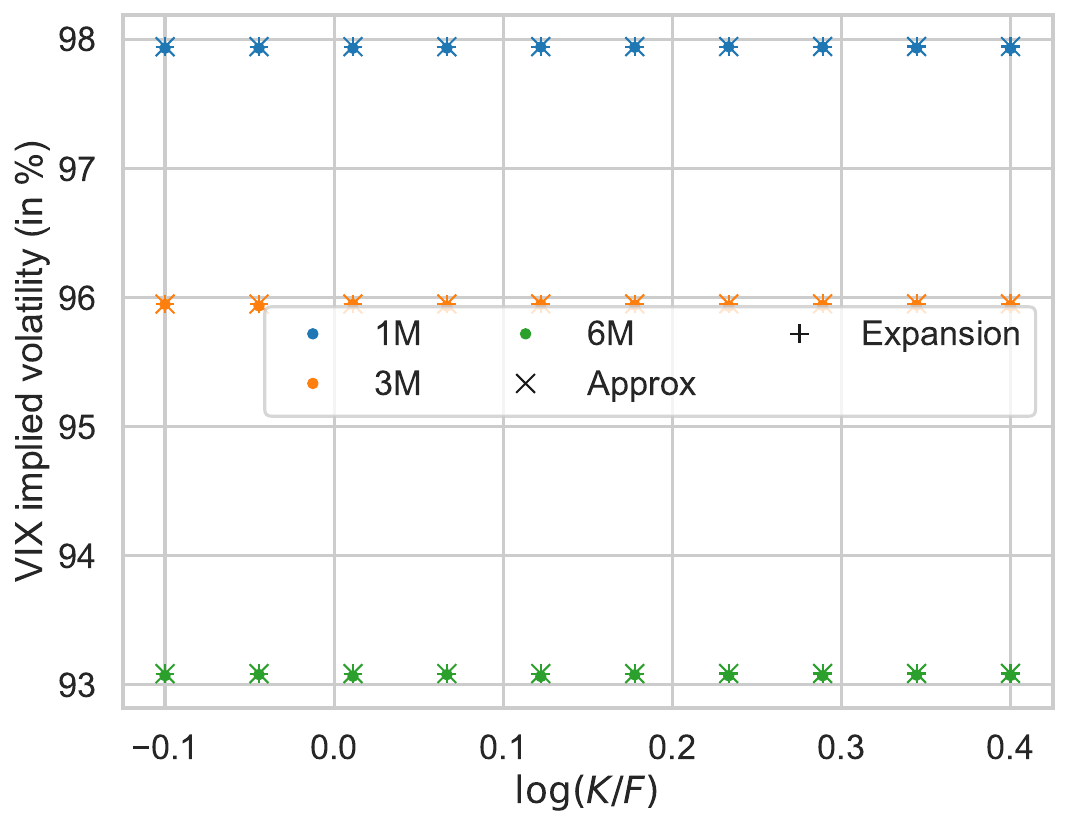}
    \includegraphics[width=0.5\linewidth]{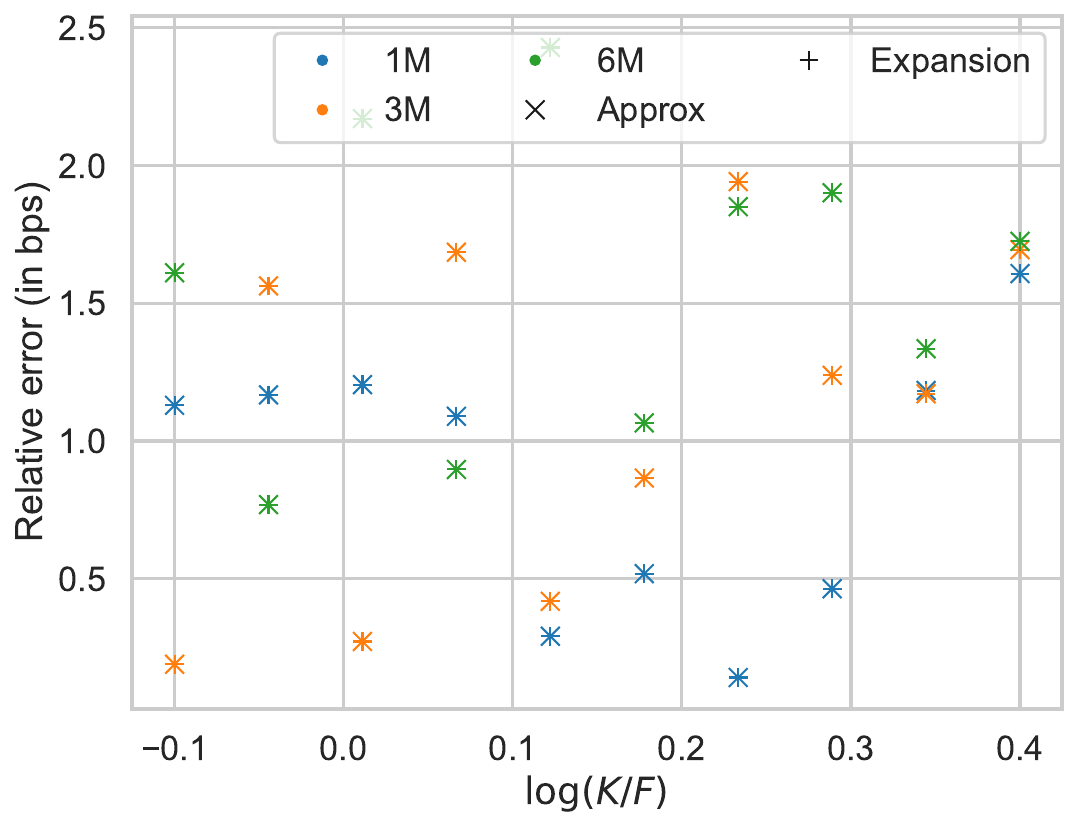}\\
    \hspace{-0.5cm}
    \includegraphics[width=0.5\linewidth]{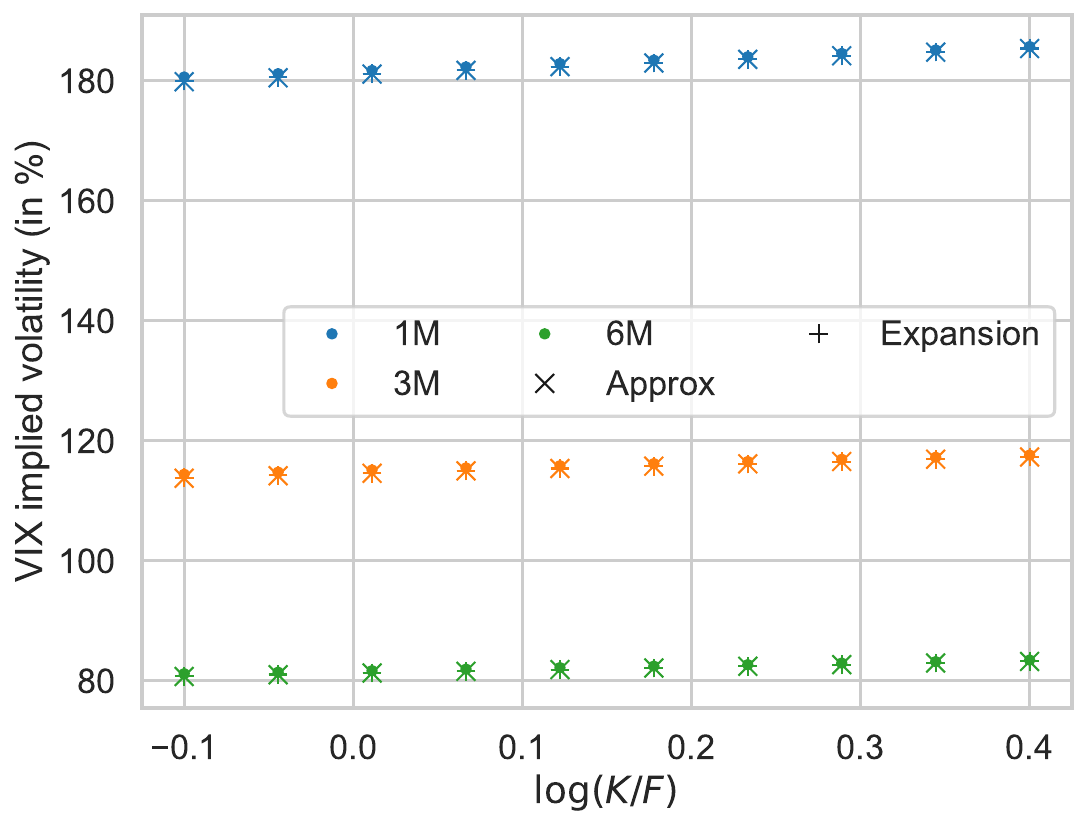}
    \includegraphics[width=0.5\linewidth]{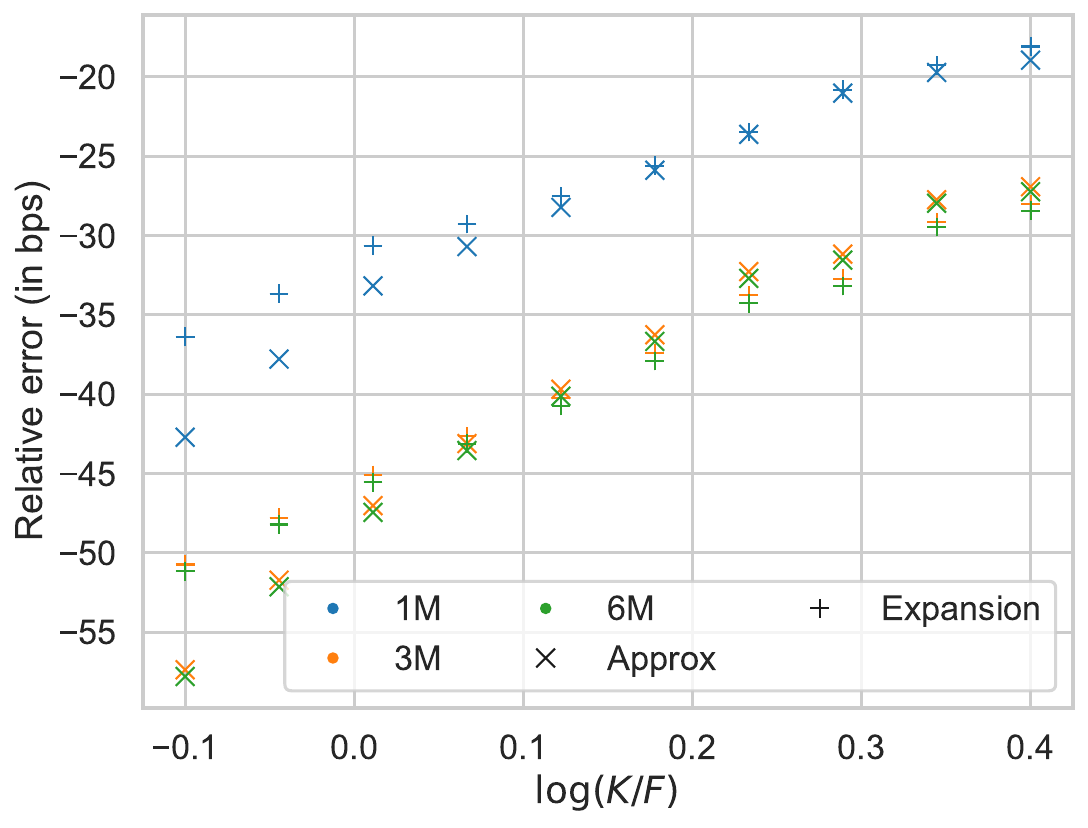}
    \caption{VIX smiles in the standard Bergomi model. Top: $\omega=2.0$, $\kappa=0.25$. Bottom: $\omega=8.0$, $\kappa=10.0$. Right: corresponding relative errors.}
    \label{fig:single_case_bergomi}
\end{figure}

\begin{figure}[H]
    \centering
    \hspace{-0.5cm}
    \includegraphics[width=0.5\linewidth]{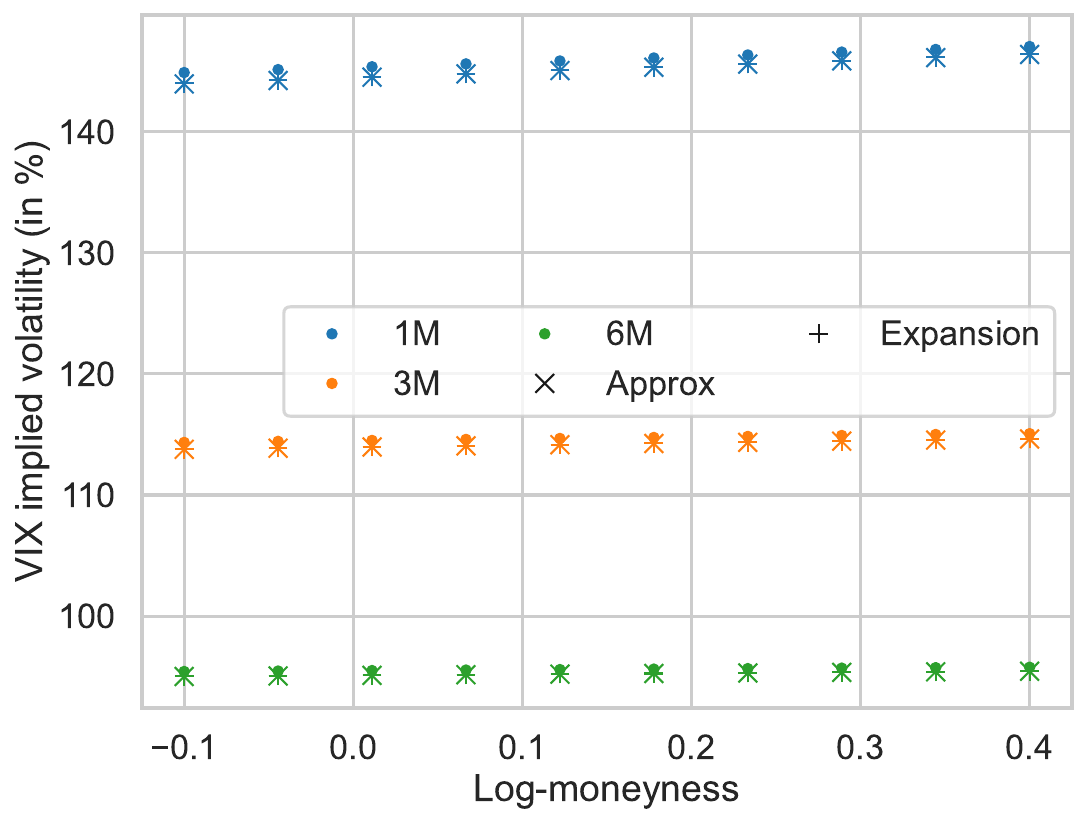}
    \includegraphics[width=0.5\linewidth]{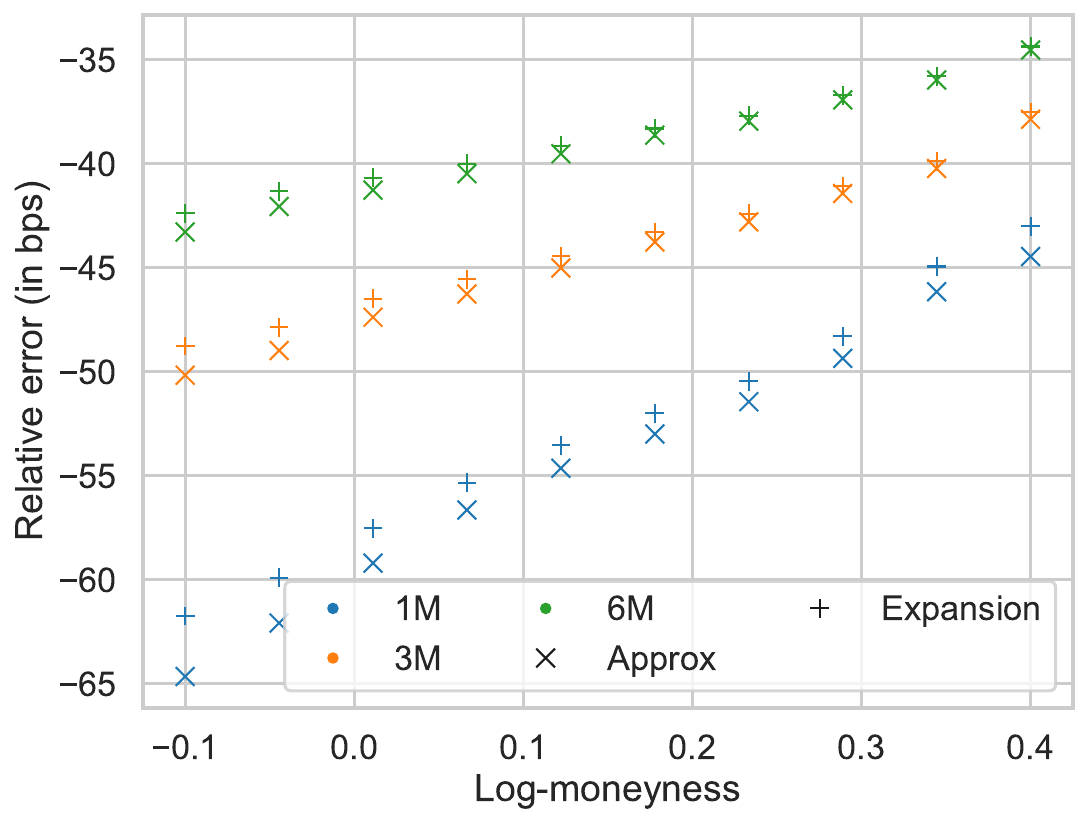}\\
    \hspace{-0.5cm}
    \includegraphics[width=0.5\linewidth]{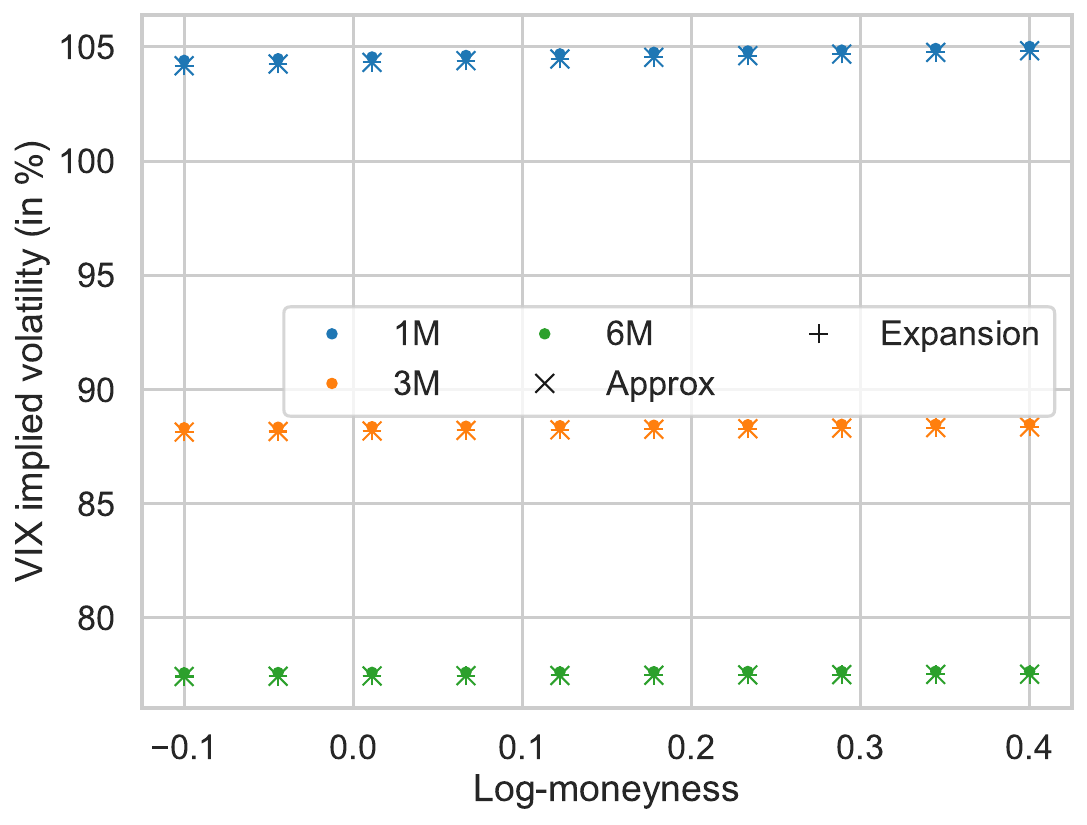}
    \includegraphics[width=0.5\linewidth]{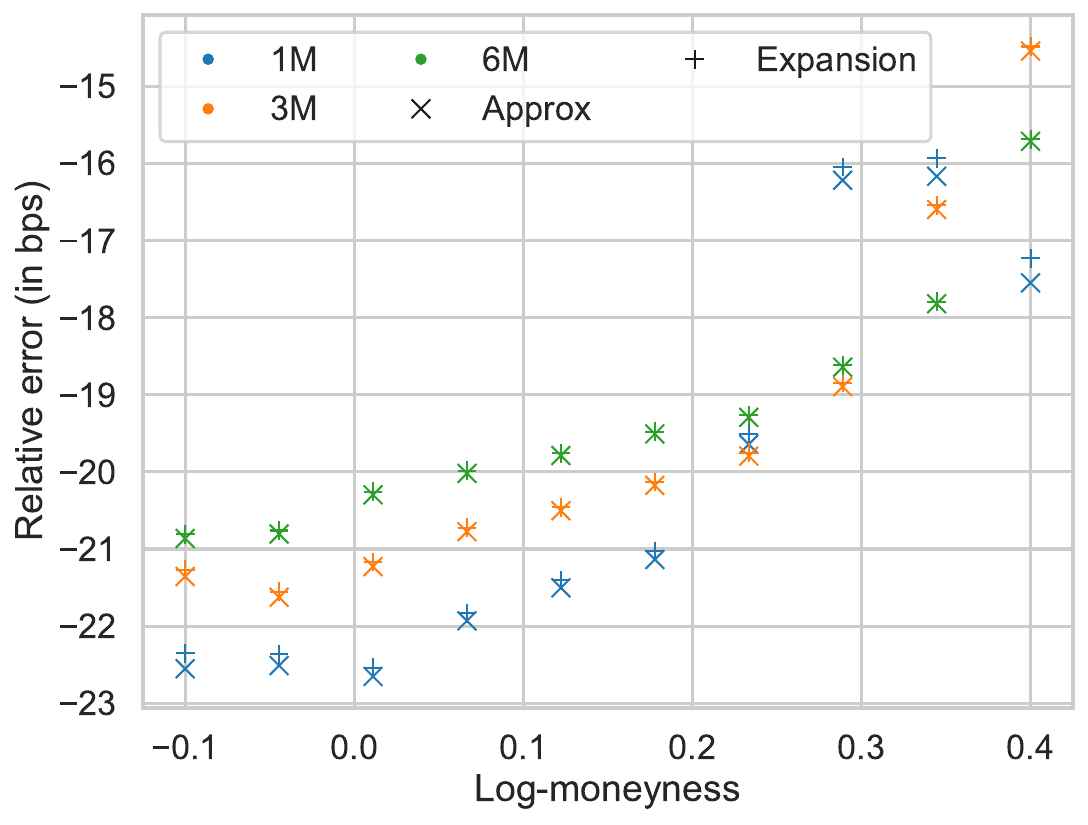}
    \caption{VIX smiles in the rough Bergomi model. Top: $H=0.1$, $\eta=1.0$. Bottom: $H=0.23$, $\eta=1.02$. Right: corresponding relative errors.}
    \label{fig:single_case_rough_bergomi}
\end{figure}

The VIX smile generated by \eqref{eq:instantaneous_forward_variance_sde} has been consistently reported to be nearly flat in \cite{bergomi_smile_2005, bayer_pricing_2016, horvath_volatility_2020}, which is also evident from Figures \ref{fig:single_case_bergomi} and \ref{fig:single_case_rough_bergomi}. This is due to an approximation of $\VIX^2$ by the log-normal proxy $\VIX_{P}^2$. Consequently, the VIX itself is approximately log-normal, resulting in an almost flat implied volatility smile. 
In contrast, market VIX smiles exhibit a pronounced positive skew, motivating the introduction of more flexible specifications, in particular mixed-kernel forward variance models.
In the next section, we apply the same methodology to derive a closed-form expansion for the implied volatility of VIX options within the extended framework.

\section{Mixed-kernel forward variance models}
\label{sec:mixed_kernel}

Mixed-kernel forward variance models introduced in \cite{bergomi_smile_2008} are more capable of reproducing observed market VIX smiles than single-kernel forward variance models. In this case, the forward variance process $(\xi_t^u)_{t\le u}$ is modeled as a convex mixture of two log-normal components
\eqlnostar{eq:instantaneous_forward_variance_mixed}{
\xi_T^u = \xi_0^u \sum_{j=1}^{2} \lambda_j \exp(Y_{T,j}^u),
\qquad
Y_{T, j}^u := - \frac{1}{2} \int_0^T K_j^u(t)^2 \dt + \int_0^T K_j^u(t) \dW_t,
}
where $\lambda_1 = \lambda \in [0,1]$ is a mixing parameter and $\lambda_2 = 1 - \lambda$.
The mixed formulation in \eqref{eq:instantaneous_forward_variance_mixed} encompasses various forward variance models depending on the choice of the kernels $K_j^u(t)$.
For kernels defined by $K_j^u(t) := \omega_j \, \ee^{-\kappa(u-t)}$, \eqref{eq:instantaneous_forward_variance_mixed} reduces to the mixed standard Bergomi model. 
Alternatively, for $K_j^u (t) := \eta_j \, (u-t)^{H-\frac{1}{2}}$, one obtains the mixed rough Bergomi model proposed in \cite{de_marco_volatility_2018,guyon_joint_2018}.
Within this extended framework, $\VIX_T^2$ is a convex combination of integral means
\eqlnostar{eq:vix_mixed_kernel}{
\VIX_T^2 = \sum_{j=1}^{2} \lambda_j \VIX_{T,j}^2,
\qquad
\VIX_{T,j}^2 = \frac{1}{\Delta} \int_T^{T+\Delta} \xi_0^u \exp (Y_{T,j}^u) \du.
}

\subsection{VIX option price approximation}
\label{sec:price_approx_mixed_kernel}

Following the geometric-mean approximation used earlier in the single-kernel case, each arithmetic mean of exponential terms in \eqref{eq:vix_mixed_kernel} is approximated by its geometric mean. This leads to the following proxy for $\VIX_T^2$
\eqlnostar{eq:proxy_mixed_kernel}{
\VIX_{T,P}^2 = \sum_{j=1}^{2} \lambda_j \VIX_{T,P,j}^2,
\qquad
\VIX_{T,P,j}^2 = \VIXsqExp \exp \biggl(
\frac{1}{\Delta} \int_{T}^{T+\Delta} \frac{\xi_0^u}{\VIXsqExp} Y_{T,j}^u \du
\biggr).
}
By Proposition \ref{prop:proxy_vix}, we have
\eqstar{
\ln ( \VIX_{T,P,j}^2 ) \sim \Nc \bigl(\mu_{P,j}, \sigma_{P,j}^2\bigr),
\qquad
j \in \{1, 2\},
}
where
\eqlnostar{eq:mean_mixed}{
\mu_{P,j} := \ln \VIXsqExp - \frac{1}{2} \int_0^T \biggl(
\frac{1}{\Delta} \int_{T}^{T+\Delta} \frac{\xi_0^u}{\VIXsqExp} K_j^u(t)^2 \du
\biggr) \dt,
}
and
\eqlnostar{eq:variance_mixed}{
\sigma_{P,j}^2 := \int_0^T \biggl( 
\frac{1}{\Delta} \int_{T}^{T+\Delta} \frac{\xi_0^u}{\VIXsqExp} K_j^u(t) \du
\biggr)^2 \dt.
}

\begin{theorem}(\cite[Theorem 3.1]{bourgey_weak_2023})
\label{thm:vix_opt_price_approx_mixed_kernel}
Let $\varphi \in \Cc_b^2$. We have
\eqlnostar{eq:vix_opt_price_approx_mixed_kernel}{
\Eb \bigl[\varphi(\VIX_T^2) \bigr] = \Eb \bigl[\varphi(\VIX_{T,P}^2) \bigr]
+ \sum_{j=1}^2 \sum_{i=1}^3 \gamma_{i,j}\,P_{i,j}
+ \Ec_\varphi,
}
where the remainder term satisfies $|\Ec_\varphi|\le C\,\Delta^{3(a_1\wedge a_2/2)}$, with $a_1$ and $a_2$ given in \eqref{eq:gamma} and \eqref{eq:delta}. The coefficients
$(\gamma_{i,j})_{i\in\{1,2,3\},\,j\in\{1,2\}}$ are the same as in Theorem~\ref{thm:vix_opt_price_approx_single_kernel}, replacing $K$ by $K_j$.
\noindent
In the mixed rough Bergomi model, the main term reads
\eqlnostar{eq:main_term_mixed_kernel}{
\Eb \bigl[\varphi(\VIX_{T,P}^2) \bigr] = \Eb \biggl[ \varphi \Bigl(
\sum_{j=1}^{2} \lambda_j \,\ee^{\mu_{P,j}+\sigma_{P,j}\,Z}
\Bigr) \biggr],
\qquad 
Z \sim \Nc(0,1),
}
and the correction terms are given, for $i\in\{1,2,3\}$ and $j\in\{1,2\}$, by
\eqlnostar{eq:correction_terms_mixed_kernel}{
P_{i,j} = \Eb \Bigl[ \Psi_j^{(i-1)}(X_j) \Bigr],
\qquad
X_j := \mu_{P,j} - L + \sigma_{P,j}\,Z,
\qquad
L := \ln \bigl(\VIXsqExp \bigr),
}
where
\eqstar{
\Psi_j(x) = \lambda_j\,\ee^{x}\, \VIXsqExp\, \varphi^\prime \Bigl(\VIXsqExp\,S_j(x)\Bigr),
}
and
\begin{align}
S_1(x) &:= \lambda_1\,\ee^{x} 
+ \lambda_2\, \ee^{(\frac{\eta_1}{\eta_2}-1)(L-\mu_{P,2})+\frac{\eta_2}{\eta_1}x},
\label{eq:S1} \\
S_2(x) &:= \lambda_1\, \ee^{(\frac{\eta_2}{\eta_1}-1)(L-\mu_{P,1})+\frac{\eta_1}{\eta_2}x}
+ \lambda_2\,\ee^{x}.
\label{eq:S2}
\end{align}
The formulas for the mixed standard Bergomi model are obtained by replacing the 
kernel and $\eta_j$ with $\omega_j$ in \eqref{eq:S1} and \eqref{eq:S2}.
\end{theorem}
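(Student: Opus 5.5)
Proof proposal for Theorem~\ref{thm:vix_opt_price_approx_mixed_kernel}.

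I would repeat the single-kernel Taylor argument of Theorem~\ref{thm:vix_opt_price_approx_single_kernel} for each component $j$ separately, and then simplify the resulting expectations using the special structure of the two kernels. Write $\VIX_T^2 = \sum_j \lambda_j \VIX_{T,j}^2$ and $\VIX_{T,P}^2 = \sum_j \lambda_j \VIX_{T,P,j}^2$. Each single-kernel error $\VIX_{T,j}^2 - \VIX_{T,P,j}^2$ is small in $L^p$: by Assumptions~\ref{assump:kernel_function} and \ref{assump:no_degenerate} (with $K$ replaced by $K_j$), its size is controlled by $\Gamma_{\Delta,T,p}$ and $\Lambda_{\Delta,T,p}$.

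\textbf{Step 1: Taylor expansion of the payoff.} Since $\varphi\in\Cc_b^2$, I would Taylor-expand $\varphi$ at $\VIX_{T,P}^2$:
\begin{align}
\varphi(\VIX_T^2) = \varphi(\VIX_{T,P}^2) + \varphi'(\VIX_{T,P}^2)\sum_j \lambda_j(\VIX_{T,j}^2-\VIX_{T,P,j}^2) + R.
\end{align}
The quadratic remainder $R$ is bounded by $\|\varphi''\|_\infty$ times the square of the increments. These are of order $\Delta^{2(a_1\wedge a_2/2)}$, which is at least as small as the target order after the third-order bookkeeping used in the single-kernel proof.

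\textbf{Step 2: the linear term.} For each $j$, I would expand $\VIX_{T,j}^2-\VIX_{T,P,j}^2$ exactly as in the single-kernel case. Set
\begin{align}
\VIX_{T,j}^2/\VIX_{T,P,j}^2 - 1 = \frac{1}{\Delta}\int_T^{T+\Delta}\frac{\xi_0^u}{\VIXsqExp}\bigl(\ee^{Y^u_{T,j}-\bar Y_{T,j}}-1\bigr)\du,
\end{align}
where $\bar Y_{T,j}$ is the $\xi_0$-weighted average of $Y^u_{T,j}$. I would expand the exponential to second order, and decompose the Gaussian fluctuation $Y^u_{T,j}-\bar Y_{T,j}$ into its projection onto the driver $G_j := \int_0^T \bar K_j(t)\dW_t$ of $\ln\VIX_{T,P,j}^2$ plus an independent part. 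Integrating out the independent part, and then applying Gaussian integration by parts in $G_j$, produces the three coefficients $\gamma_{1,j},\gamma_{2,j},\gamma_{3,j}$. Each comes paired with an $(i-1)$-th derivative along $G_j$ of $\lambda_j\VIX_{T,P,j}^2\,\varphi'(\VIX_{T,P}^2)$. This reproduces the same algebra as Theorem~\ref{thm:vix_opt_price_approx_single_kernel}, with the same error bound.

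\textbf{Step 3: reduction to one Gaussian variable.} In the mixed rough (or mixed standard) Bergomi model, $K_1^u = (\eta_1/\eta_2)K_2^u$. Hence both $\ln\VIX_{T,P,j}^2$ are affine functions of the single Gaussian $Z$, which gives \eqref{eq:main_term_mixed_kernel}. To obtain the correction terms, I would express everything through $X_j = \ln(\VIX_{T,P,j}^2/\VIXsqExp)$. Solving the affine relation for the other component gives
\begin{align}
X_{3-j} = \tfrac{\eta_{3-j}}{\eta_j}X_j + \bigl(\tfrac{\eta_{3-j}}{\eta_j}-1\bigr)(L-\mu_{P,3-j}),
\end{align}
up to the labeling in \eqref{eq:S1}--\eqref{eq:S2}. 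Therefore $\VIX_{T,P}^2 = \VIXsqExp\,S_j(X_j)$, and $\lambda_j\VIX_{T,P,j}^2\varphi'(\VIX_{T,P}^2) = \Psi_j(X_j)$. The derivatives along $G_j$ then become ordinary derivatives $\Psi_j^{(i-1)}(X_j)$, which gives $P_{i,j}$.

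\textbf{Main obstacle.} The main difficulty is Step 2. There the derivatives fall on $\varphi'(\VIX_{T,P}^2)$, which depends on both components, rather than on a function of $\VIX_{T,P,j}^2$ alone. One must check that the integration by parts still closes with the same $\gamma_{i,j}$. The kernel proportionality of Step 3 is what makes this work, so I would carry out Steps 2 and 3 jointly. I would also need to verify the integrability of $\Psi_j^{(i)}$, which uses only the boundedness of $\varphi'$ and $\varphi''$, and for $\Psi_j''$ possibly a regularization argument.
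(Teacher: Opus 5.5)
The paper gives no proof of this statement; it quotes it from \cite{bourgey_weak_2023}, so there is nothing internal to compare against. Judged on its own, your strategy is sound. You Taylor-expand $\varphi\in\Cc_b^2$ to first order around $\VIX_{T,P}^2$, expand each $\ee^{Y^u_{T,j}-\bar Y_{T,j}}$ to second order, split the fluctuation into its projection on $G_j$ plus an independent part, and apply one-dimensional Gaussian integration by parts. The last step is legitimate precisely because $K_1\propto K_2$ makes $\VIX_{T,P}^2$ a function of $G_j$ alone. Carrying out Step~2 does produce exactly $\gamma_{1,j},\gamma_{2,j},\gamma_{3,j}$, paired with $\Eb[\Psi_j^{(i-1)}(X_j)]$. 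Two points, however, are wrong as written.

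\textbf{Order count in Step~1.} Put $a:=a_1\wedge\frac{a_2}{2}$ and $\overline{f}:=\frac1\Delta\int_T^{T+\Delta}\frac{\xi_0^u}{\VIXsqExp}f^u\du$.
\begin{itemize}
\item If ``these'' means the squared increments, then $\Delta^{2a}$ is not below the target $\Delta^{3a}$; it is the size of the corrections $\gamma_{i,j}$ themselves.
\item The bounds $\Gamma_{\Delta,T,p}+\Lambda_{\Delta,T,p}^{1/2}$ on $Y^u_{T,j}-\bar Y_{T,j}$ only give $\VIX_{T,j}^2-\VIX_{T,P,j}^2=O(\Delta^{a})$ in $L^p$.
\item The missing ingredient is the exact cancellation $\overline{Y_{T,j}-\bar Y_{T,j}}=0$, which holds because the proxy is exactly the weighted geometric mean. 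It yields $\VIX_{T,j}^2-\VIX_{T,P,j}^2=\VIX_{T,P,j}^2\bigl(\tfrac12\overline{(Y_{T,j}-\bar Y_{T,j})^2}+\rho_j\bigr)=O(\Delta^{2a})$, hence $R=O(\Delta^{4a})$.
\item The $\Delta^{3a}$ error then comes from the cubic remainder $\rho_j$ in the linear term. Bounding it requires the exponential moments of Assumption~\ref{assump:kernel_function}, H\"older in $(\omega,u)$, and $\Gamma_{\Delta,T,p},\Lambda_{\Delta,T,p}$ for large $p$.
\end{itemize}
This is the real content of the error estimate and should be written out, not deferred to ``third-order bookkeeping''.

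\textbf{Constant in Step~3.} Your affine relation is wrong, and no relabelling fixes it. From $G_{3-j}=\frac{\eta_{3-j}}{\eta_j}G_j$ and $X_j=\mu_{P,j}-L+G_j$ one gets
\[
X_{3-j}=\tfrac{\eta_{3-j}}{\eta_j}X_j+\tfrac{\eta_{3-j}}{\eta_j}(L-\mu_{P,j})-(L-\mu_{P,3-j}).
\]
By proportionality, $L-\mu_{P,j}=\eta_j^2c_0$ with $c_0$ independent of $j$. So this equals $\frac{\eta_{3-j}}{\eta_j}X_j+\bigl(\frac{\eta_j}{\eta_{3-j}}-1\bigr)(L-\mu_{P,3-j})$, which is exactly \eqref{eq:S1}--\eqref{eq:S2}. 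Your constant $\bigl(\frac{\eta_{3-j}}{\eta_j}-1\bigr)(L-\mu_{P,3-j})$ is off by the factor $-\eta_{3-j}/\eta_j$.

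\textbf{Reading $P_{3,j}$.} Here $\Psi_j''$ involves $\varphi'''$, but no regularization is needed. Stop one integration by parts early: for $G\sim\Nc(0,\sigma^2)$, $\Eb[F(G)(G^2-\sigma^2)]=\sigma^2\Eb[F'(G)G]$. So read $P_{3,j}:=\sigma_{P,j}^{-1}\Eb[\Psi_j'(X_j)Z]$. This uses only $\varphi''$ and coincides with $\Eb[\Psi_j''(X_j)]$ when $\varphi\in\Cc_b^3$.
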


In the mixed-kernel forward variance model, the VIX option price approximation given in \eqref{eq:vix_opt_price_approx_mixed_kernel} does not admit a closed-form expression. Thus, the formula cannot be rewritten directly in the form of call \eqref{eq:vix_call_proxy_single_kernel} and put \eqref{eq:vix_put_proxy_single_kernel} price approximations as in the single-kernel case. However, they can still be expressed as a one-dimensional integral representation that resembles a BS price formula and its derivatives.

\begin{proposition}(\cite[Corollary 6.3.2]{bourgey2020stochastic})
\label{prop:rewrite_vix_opt_price_approx_mixed_kernel}
Define the function
\eqlnostar{eq:func_h}{
h(x) := \sum_{j=1}^{2} \lambda_j \,\ee^{\mu_{P,j}+\sigma_{P,j}\,x},
\qquad 
x \in \Rb,
}
and let $A := h^{-1} (\ee^{2k})$, where $\ee^k$ denotes the VIX option strike. Then, the leading-order term $\Eb [ \varphi (\VIX_{T,P}^2) ]$ of the VIX call, put, and futures, appearing in \eqref{eq:main_term_mixed_kernel} can be written as, for $j \in \{1,2\}$,
\begin{align}
C^j_{P,0} 
&= \Eb \Bigl[ \bigl(\sqrt{h(Z)} - \ee^k \bigr)^+ \Bigr] 
= \sqrt{\lambda_j}\, \ee^{x_{P,j}}
\int_{A - \frac{\sigma_{P,j}}{2}}^\infty \bigl(1 + C_j\, \ee^{\Delta \sigma_{P,j} y} \bigr)^\frac{1}{2} \phi(y) \dy 
- \ee^{k} \Phi(-A),
\label{eq:vix_call_main_term_mixed_kernel}
\\
P^j_{P,0}
&= \Eb \Bigl[ \bigl(\ee^k - \sqrt{h(Z)} \bigr)^+ \Bigr]  
= \ee^{k} \Phi(A) 
- \sqrt{\lambda_j}\, \ee^{x_{P,j}}
\int_{-\infty}^{A - \frac{\sigma_{P,j}}{2}} (1 + C_j\, \ee^{\Delta \sigma_{P,j} y} )^\frac{1}{2} \phi(y) \dy,
\label{eq:vix_put_main_term_mixed_kernel}
\\
F^j_{P,0}
&= \Eb[\sqrt{h(Z)}]
= \sqrt{\lambda_j}\, \ee^{x_{P,j}}
\int_\Rb (1 + C_j\, \ee^{\Delta \sigma_{P,j} y} )^\frac{1}{2} \phi(y) \dy,
\label{eq:vix_fut_main_term_mixed_kernel}
\end{align}
where
\eqstar{
x_{P,j} := \frac{\mu_{P,j}}{2} + \frac{\sigma_{P,j}^2}{8},
\ 
\Delta\sigma_{P,j} := (-1)^j (\sigma_{P,1} - \sigma_{P,2}),
\ 
C_j := \Bigl(\frac{\lambda_1}{\lambda_2} \Bigr)^{(-1)^j} \ee^{(-1)^j (\mu_{P,1} - \mu_{P,2}) + \frac{\sigma_{P,j}}{2} \Delta\sigma_{P,j}}.
}
\end{proposition}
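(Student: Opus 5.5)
Since $\sigma_{P,j}>0$, the function $h$ in \eqref{eq:func_h} is a sum of two increasing exponentials in $x$. It is therefore strictly increasing from $0$ to $\infty$, so $A=h^{-1}(\ee^{2k})$ is well defined and $\{\sqrt{h(Z)}>\ee^k\}=\{Z>A\}$. For the call this gives
\begin{align}
C_{P,0}=\Eb\bigl[\sqrt{h(Z)}\,\ind_{\{Z>A\}}\bigr]-\ee^k\,\Phi(-A),
\end{align}
which already produces the strike term. The put and futures cases follow in the same way, using the event $\{Z<A\}$ and the whole real line respectively.

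The main step is to rewrite $\Eb[\sqrt{h(Z)}\ind_{\{Z>A\}}]$ by factoring out the $j$-th component. I will write $i$ for the other index and factor
\begin{align}
h(z)=\lambda_j\ee^{\mu_{P,j}+\sigma_{P,j}z}\Bigl(1+\tfrac{\lambda_i}{\lambda_j}\ee^{(\mu_{P,i}-\mu_{P,j})+(\sigma_{P,i}-\sigma_{P,j})z}\Bigr).
\end{align}
Taking square roots gives $\sqrt{\lambda_j}\,\ee^{\mu_{P,j}/2+\sigma_{P,j}z/2}\phi(z)$ times the bracket to the power $\tfrac12$. I then complete the square,
\begin{align}
\ee^{\sigma_{P,j}z/2}\phi(z)=\ee^{\sigma_{P,j}^2/8}\,\phi\bigl(z-\tfrac{\sigma_{P,j}}{2}\bigr),
\end{align}
and substitute $y=z-\sigma_{P,j}/2$. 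This is the Girsanov/Esscher shift that produces the prefactor $\ee^{x_{P,j}}$ with $x_{P,j}=\mu_{P,j}/2+\sigma_{P,j}^2/8$. It also moves the lower limit to $A-\sigma_{P,j}/2$ and turns the bracket into
\begin{align}
1+\tfrac{\lambda_i}{\lambda_j}\ee^{(\mu_{P,i}-\mu_{P,j})+(\sigma_{P,i}-\sigma_{P,j})\sigma_{P,j}/2}\;\ee^{(\sigma_{P,i}-\sigma_{P,j})y}.
\end{align}

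The last step is to match the sign conventions against the stated constants. For $j=1$ (so $i=2$), the statement has $(-1)^j=-1$, so $\Delta\sigma_{P,1}=\sigma_{P,2}-\sigma_{P,1}$, the ratio $(\lambda_1/\lambda_2)^{-1}=\lambda_2/\lambda_1$, and the exponent $-(\mu_{P,1}-\mu_{P,2})+\tfrac{\sigma_{P,1}}{2}\Delta\sigma_{P,1}$. For $j=2$ all signs flip, matching $i=1$. So $C_j$ and $\Delta\sigma_{P,j}$ agree with the constants in the statement, which is the expected main bookkeeping obstacle.

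The put formula \eqref{eq:vix_put_main_term_mixed_kernel} is the same computation over $\{Z<A\}$, whose shifted upper limit is again $A-\sigma_{P,j}/2$. The futures formula \eqref{eq:vix_fut_main_term_mixed_kernel} takes the integral over $\Rb$, with no strike term. All three integrals converge, since the integrand grows at most like $\ee^{c|y|}$ against a Gaussian density. As a consistency check, subtracting the put from the call recovers put–call parity $C^j_{P,0}-P^j_{P,0}=F^j_{P,0}-\ee^k$.
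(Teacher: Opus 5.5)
Your proposal is correct. The three steps (strict monotonicity of $h$, factoring out the $j$-th exponential, and the completing-the-square shift $y=z-\sigma_{P,j}/2$) reproduce exactly the prefactor $\sqrt{\lambda_j}\,\ee^{x_{P,j}}$, the constants $C_j$ and $\Delta\sigma_{P,j}$, and the shifted limit $A-\sigma_{P,j}/2$ for the call, put and futures.

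The paper gives no proof of its own and only cites Corollary 6.3.2 of the thesis, so your direct computation is the natural argument behind that result. The only implicit requirements are $\lambda\in(0,1)$ and $\sigma_{P,j}>0$, and the statement already needs these for $C_j$ to be defined.
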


\begin{remark}
\label{re:rewrite_vix_opt_price_approx_mixed_kernel}
Observe that $h$ in \eqref{eq:func_h} satisfies
\eqstar{
\partial_{\mu_{P,j}} h (x) = \lambda_j \,\ee^{\mu_{P,j}+\sigma_{P,j}\,x},
\qquad
j \in \{1, 2\}.
}
By the chain rule, the correction terms in \eqref{eq:correction_terms_mixed_kernel} can be rewritten as differential operators acting on $\Eb[\varphi(h(Z))]$, with $Z\sim \Nc(0,1)$, under the mixed standard and mixed rough Bergomi models. That is, for $i \in \{1, 2, 3\}$, the correction terms are given by
\eqstar{
P_{i,1} = \partial_{\mu_{P,1}} \Bigl( 
\partial_{\mu_{P,1}} + \frac{\sigma_{P,2}}{\sigma_{P,1}} \partial_{\mu_{P,2}}
\Bigr)^{i-1}
\Eb [\varphi(h(Z))],
\quad
P_{i,2} = \partial_{\mu_{P,2}} \Bigl(
\partial_{\mu_{P,2}} + \frac{\sigma_{P,1}}{\sigma_{P,2}} \partial_{\mu_{P,1}}
\Bigr)^{i-1}
\Eb [\varphi(h(Z))].
}
\end{remark}

\subsection{Implied volatility approximation}
\label{sec:vix_iv_proxy_mixed_kernel}

Although Proposition~\ref{prop:rewrite_vix_opt_price_approx_mixed_kernel} and Remark~\ref{re:rewrite_vix_opt_price_approx_mixed_kernel} show that the VIX option price approximations admit a representation that closely resembles BS price formulas together with their derivatives, this representation is not itself a BS price corresponding to a single effective log-spot, strike, and volatility. Therefore, the technique of Theorem~\ref{thm:vix_iv_single_kernel} still cannot be applied directly to derive an implied volatility expansion in mixed-kernel forward variance models. To overcome this difficulty, we compare the representation with the BS price formula and expand, in Hermite polynomials, only the non-Gaussian factor in the integrand, rather than the integral representation as a whole. This procedure enables us to rewrite the approximation as a combination of a genuine BS price formula and its derivatives. Subsequently, we use the technique of Theorem~\ref{thm:vix_iv_single_kernel} to derive the implied volatility expansions.

The non-Gaussian factor appearing in \eqref{eq:vix_call_main_term_mixed_kernel} admits the Hermite expansion
\eqlnostar{eq:hermite_expan_of_func_g}{
g^j(y) := (1 + C_j\, \ee^{\Delta \sigma_{P,j} y} )^\frac{1}{2} 
= \sum_{n=0}^\infty \omega_{n,j} \He_n (y),
\qquad
j \in \{1,2\},
}
where the weights $\omega_{n,j}$ are given by
\eqlnostar{eq:weight_func}{
\omega_{n,j} := \frac{1}{n!} \int_{-\infty}^\infty g^j(y) \He_n(y) \phi (y) \dy,
\qquad 
j \in \{1,2\},\ n \in \Nb.
}
Using \eqref{eq:hermite_expan_of_func_g} along with the following identities, which are valid for any $x \in \Rb$, 
\eqstar{
\int_x^\infty \He_n (y) \phi(y) \dy =
\begin{cases}
\He_{n-1}(x) \phi(x), \qquad &n \ge 1,\\
\Phi(-x), \qquad &n=0,
\end{cases}
}
the Hermite expansion of the integral in \eqref{eq:vix_call_main_term_mixed_kernel} is
\eqlnostar{eq:Integral_0}{
\int_{A - \frac{\sigma_{P,j}}{2}}^\infty g^j(y) \phi(y) \dy 
&= \sum_{n=0}^\infty \omega_{n,j} \int_{A - \frac{\sigma_{P,j}}{2}}^\infty \He_n (y) \phi(y) \dy\\ 
&\approx \omega_{0,j} \Phi \bigl(-A + \tfrac{\sigma_{P,j}}{2} \bigr) 
+ \sum_{n=1}^N \omega_{n,j} \He_{n-1} \bigl(A - \tfrac{\sigma_{P,j}}{2} \bigr) \phi \bigl(A - \tfrac{\sigma_{P,j}}{2} \bigr) \\
&=: I^j_{0,N} \bigl(A - \tfrac{\sigma_{P,j}}{2} \bigr).
}
Thus, the $N$th-order Hermite approximations for the leading-order term of the VIX call $C^j_{P,0}$ \eqref{eq:vix_call_main_term_mixed_kernel} and put $P^j_{P,0}$ \eqref{eq:vix_put_main_term_mixed_kernel} are derived as follows
\begin{align}
C^j_{P,0,N}  &:= \sqrt{\lambda_j}\, \ee^{x_{P,j}} I^j_{0,N} \bigl(A - \tfrac{\sigma_{P,j}}{2} \bigr)
- \ee^{k} \Phi (-A),
\label{eq:vix_call_main_term_hermite_expan}\\ 
P^j_{P,0,N} &:= \ee^{k} \Phi (A) 
- \sqrt{\lambda_j}\, \ee^{x_{P,j}} \bigl( \omega_{0,j} - I^j_{0,N} \bigl(A - \tfrac{\sigma_{P,j}}{2} \bigr) \bigr).
\label{eq:vix_put_main_term_hermite_expan}
\end{align}
For the VIX futures, we have $F^j_{P,0} = \sqrt{\lambda_j}\, \ee^{x_{P,j}}\, \omega_{0,j}$.

\begin{proposition}
\label{prop:vix_opt_price_hermite_expan}
The $N$th-order Hermite approximation of the VIX call price in \eqref{eq:vix_opt_price_approx_mixed_kernel} under the mixed standard and mixed rough Bergomi models is given as follows
\eqlnostar{eq:vix_call_hermite_expan}{
C^j_{P,N} := \sqrt{\lambda_j}\, \ee^{x_{P,j}} \sum_{i=0}^{3} c_{i,j}\, I^j_{i,N} \bigl(A - \tfrac{\sigma_{P,j}}{2} \bigr) 
- \ee^{k} \Phi(-A),
\qquad
j \in \{1,2\},
}
where the coefficients $(c_{i,j})_{i = 0, 1, 2, 3}$ are given by
\begin{align}
c_{0,j} &:= \biggl(1 + \sum_{n=1}^3 2^{-n} \gamma_{n,j}\biggr),
\label{eq:c0} \\
c_{1,1} &:= \biggl( \gamma_{1,1} 
+ \Bigl(1 - \frac{\sigma_{P,2}}{2 \sigma_{P,1}} \Bigr) \gamma_{2,1} 
+ \Bigl(\frac{3}{4} - \frac{\sigma_{P,2}}{2 \sigma_{P,1}} \Bigr) \gamma_{3,1}
- \sum_{n=1}^3 \Bigl(\frac{\sigma_{P,1}}{2 \sigma_{P,2}} \Bigr)^{n-1} \gamma_{n,2} 
\biggr), 
\label{eq:c11} \\
c_{1,2} &:= \biggl( \sum_{n=1}^3 \Bigl(\frac{\sigma_{P,2}}{2 \sigma_{P,1}} \Bigr)^{n-1} \gamma_{n,1} 
- \gamma_{1,2} 
- \Bigl(1 - \frac{\sigma_{P,1}}{2 \sigma_{P,2}} \Bigr) \gamma_{2,2} 
- \Bigl(\frac{3}{4} - \frac{\sigma_{P,1}}{2 \sigma_{P,2}} \Bigr) \gamma_{3,2}
\biggr), 
\label{eq:c12} \\
c_{2,1} &:= \biggl( \Bigl(1 -  \frac{\sigma_{P,2}}{\sigma_{P,1}} \Bigr) (\gamma_{2,1} + \gamma_{3,1})
+ \frac{1}{2} \Bigl( 1 - \frac{ \sigma_{P,2}}{\sigma_{P,1}} \Bigr)^2 \gamma_{3,1} 
+\Bigl(1 - \frac{\sigma_{P,1}}{\sigma_{P,2}} \Bigr) \Bigl(\gamma_{2,2} +  \frac{\sigma_{P,1}}{\sigma_{P,2}} \gamma_{3,2} \Bigr) 
\biggr), 
\label{eq:c21} \\
c_{2,2} &:= \biggl( \Bigl(1 - \frac{\sigma_{P,2}}{\sigma_{P,1}} \Bigr) \Bigl(\gamma_{2,1} +  \frac{\sigma_{P,2}}{\sigma_{P,1}} \gamma_{3,1} \Bigr) 
+ \Bigl(1 -  \frac{\sigma_{P,1}}{\sigma_{P,2}} \Bigr) (\gamma_{2,2} + \gamma_{3,2})
+ \frac{1}{2} \Bigl( 1 - \frac{ \sigma_{P,1}}{\sigma_{P,2}} \Bigr)^2 \gamma_{3,2} 
\biggr), 
\label{eq:c22} \\
c_{3,1} &:= \biggl( \Bigl(1 - \frac{\sigma_{P,2}}{\sigma_{P,1}} \Bigr)^2 \gamma_{3,1}
- \Bigl(1 - \frac{\sigma_{P,1}}{\sigma_{P,2}} \Bigr)^2 \gamma_{3,2} 
\biggr) =:c_{3,2} , 
\label{eq:c3}
\end{align}
and 
\eqlnostar{eq:Integral_0_n}{
I^j_{i,N} (x) &:= \partial_{\mu_{P,1}}^i \omega_{0,j} \Phi(-x) 
+ \sum_{n=1}^N \partial_{\mu_{P,1}}^i \omega_{n,j} \He_{n-1} (x) \phi (x), 
\qquad
\forall x \in \Rb,\ i \in \{0,1,2,3\}.
}
For the VIX put price,
\eqlnostar{eq:vix_put_hermite_expan}{
P^j_{P,N} := \ee^{k} \Phi (A) 
- \sqrt{\lambda_j}\, \ee^{x_{P,j}} \sum_{i=0}^{3} c_{i,j}\, \Bigl( \partial_{\mu_{P,1}}^{i} \omega_{0,j} - I^j_{i,N} \bigl(A - \tfrac{\sigma_{P,j}}{2} \bigr) \Bigr),
}
and for the VIX futures price,
\eqlnostar{eq:vix_fut_hermite}{
F^j_{P} = \sqrt{\lambda_j}\, \ee^{x_{P,j}} \sum_{i=0}^{3} c_{i,j}\, \partial_{\mu_{P,1}}^{i} \omega_{0,j}.
}
\end{proposition}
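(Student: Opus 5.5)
Following Remark~\ref{re:rewrite_vix_opt_price_approx_mixed_kernel}, I would start by writing every correction term $P_{i,j}$ in Theorem~\ref{thm:vix_opt_price_approx_mixed_kernel} as a differential operator in $(\mu_{P,1},\mu_{P,2})$ acting on the single function $G(\mu_{P,1},\mu_{P,2}) := \Eb[\varphi(h(Z))]$. The VIX call price approximation then reads
\begin{align}
G + \sum_{j=1}^2\sum_{i=1}^3 \gamma_{i,j}\, \mathcal{D}_j^{(i)} G ,
\end{align}
where $\mathcal{D}_1^{(i)} = \partial_{\mu_{P,1}}(\partial_{\mu_{P,1}} + \tfrac{\sigma_{P,2}}{\sigma_{P,1}}\partial_{\mu_{P,2}})^{i-1}$ and symmetrically for $j=2$. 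The key reduction is to trade all $\partial_{\mu_{P,2}}$ derivatives for $\partial_{\mu_{P,1}}$ derivatives plus lower-order terms. For this I would use the representation of Proposition~\ref{prop:rewrite_vix_opt_price_approx_mixed_kernel}. In the main term, $\mu_{P,1}$ and $\mu_{P,2}$ enter only through the prefactor $\ee^{x_{P,j}}$ (which depends on $\mu_{P,j}$ alone) and through $C_j$ (which depends on the difference $\mu_{P,1}-\mu_{P,2}$). The threshold $A$ also moves, but the strike contribution $-\ee^k\Phi(-A)$ together with the boundary term should cancel, since the integrand at the boundary is exactly the strike (the kink of the payoff). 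Hence $\partial_{\mu_{P,1}}+\partial_{\mu_{P,2}}$ acts on $C^j_{P,0}$ essentially as multiplication by $\tfrac12$ on the $\sqrt{\lambda_j}\ee^{x_{P,j}}$ part, that is, as a scaling. Consequently $\partial_{\mu_{P,2}} = \tfrac12 - \partial_{\mu_{P,1}}$ on the relevant expressions, after accounting for boundary terms.

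Next I would substitute this relation into each operator $\mathcal{D}_j^{(i)}$ and expand the powers. The result is a polynomial of degree at most $3$ in $\partial_{\mu_{P,1}}$ with scalar coefficients built from $\gamma_{i,j}$ and the ratios $\sigma_{P,2}/\sigma_{P,1}$. Collecting the coefficient of $\partial_{\mu_{P,1}}^i$ for $i=0,\dots,3$ should give $c_{0,j},\dots,c_{3,j}$. For instance, the degree-zero part produces $1+\sum_n 2^{-n}\gamma_{n,j}$, consistent with \eqref{eq:c0}, and the top-degree part involves $(1-\sigma_{P,2}/\sigma_{P,1})^2\gamma_{3,1}$ and its mirror term, consistent with \eqref{eq:c3}. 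The sign convention $(-1)^j$ in $\Delta\sigma_{P,j}$ and $C_j$ explains why the $j=1$ and $j=2$ versions differ by sign and by the swap of the indices $1\leftrightarrow 2$.

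Then I would let $\partial_{\mu_{P,1}}^i$ act on the Hermite-truncated main term \eqref{eq:vix_call_main_term_hermite_expan}. The points $A-\sigma_{P,j}/2$ and the factor $\ee^{x_{P,j}}$ are treated via the scaling identity above, so the derivative falls only on the weights $\omega_{n,j}$ through $C_j$. This yields the functions $I^j_{i,N}$ of \eqref{eq:Integral_0_n}, while the strike term $-\ee^k\Phi(-A)$ is left unchanged. The put formula follows by the same computation using $\int_{-\infty}^x \He_n\phi = -\He_{n-1}(x)\phi(x)$ for $n\ge 1$. The futures formula is the case with no strike, where only $\omega_{0,j}$ survives.

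The main obstacle is the bookkeeping of the first step: justifying that the $A$-dependence and the $\ee^{x_{P,j}}$-dependence produce exactly the stated constants. This is especially delicate for $c_{1,j}$, where the mixed factors $(\sigma_{P,1}/(2\sigma_{P,2}))^{n-1}$ appear. There I would verify the identity carefully at first order and then iterate it.
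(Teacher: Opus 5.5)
Your identity $\partial_{\mu_{P,1}}+\partial_{\mu_{P,2}}=\tfrac12$ is correct on expressions $\sqrt{\lambda_j}\,\ee^{x_{P,j}}\,W(\mu_{P,1}-\mu_{P,2})$ with $A$ frozen. The gap is in using it to eliminate $\partial_{\mu_{P,2}}$ for both values of $j$. Substituting $\partial_{\mu_{P,2}}=\tfrac12-\partial_{\mu_{P,1}}$ into $1+\sum_{i,l}\gamma_{i,l}\mathcal D_l^{(i)}$ gives a single polynomial in $\partial_{\mu_{P,1}}$ that does not depend on $j$. Its coefficients are exactly $c_{0,2},\dots,c_{3,2}$. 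For instance, the constant term is $1+\sum_n 2^{-n}\gamma_{n,2}$: each $\mathcal D_1^{(i)}$ starts with $\partial_{\mu_{P,1}}$ and contributes nothing in degree zero, while $\mathcal D_2^{(i)}$ contributes $2^{-i}$.

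This is right for $j=2$, because $\ee^{x_{P,2}}$ does not depend on $\mu_{P,1}$, so $\partial_{\mu_{P,1}}^i$ really falls only on $\omega_{n,2}$. For $j=1$, your claim that ``the derivative falls only on the weights'' is false. $\partial_{\mu_{P,1}}$ also hits $\ee^{x_{P,1}}$, so it acts as $\tfrac12+\partial$ on the weights. You would therefore put $1+\sum_n 2^{-n}\gamma_{n,2}$ in front of $I^1_{0,N}$ instead of $c_{0,1}=1+\sum_n 2^{-n}\gamma_{n,1}$. Your consistency check with \eqref{eq:c0} holds only for $j=2$. The top-degree check passes for both $j$ only because it never sees the $\tfrac12$. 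The sign convention $(-1)^j$ in $C_j$ plays no role in the coefficients; it only affects the values of $\partial^i_{\mu_{P,1}}\omega_{n,j}$.

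The paper avoids this by applying the antisymmetry only at the level of $g^j$, hence of $\omega_{n,j}$. Since $g^j$ depends on $\mu_{P,1}-\mu_{P,2}$ only, $\partial_{\mu_{P,2}}g^j=-\partial_{\mu_{P,1}}g^j$, and the prefactor is differentiated separately by the product rule. Writing $\partial$ for $\partial_{\mu_{P,1}}$ acting on the weights, this is the substitution
\[
j=1:\quad \partial_{\mu_{P,1}}\mapsto \tfrac12+\partial,\quad \partial_{\mu_{P,2}}\mapsto -\partial;
\qquad
j=2:\quad \partial_{\mu_{P,1}}\mapsto \partial,\quad \partial_{\mu_{P,2}}\mapsto \tfrac12-\partial .
\]
For $j=1$ this gives $\mathcal D_1^{(i)}\mapsto(\tfrac12+\partial)\bigl(\tfrac12+(1-\tfrac{\sigma_{P,2}}{\sigma_{P,1}})\partial\bigr)^{i-1}$ and $\mathcal D_2^{(i)}\mapsto-\partial\bigl(\tfrac{\sigma_{P,1}}{2\sigma_{P,2}}+(\tfrac{\sigma_{P,1}}{\sigma_{P,2}}-1)\partial\bigr)^{i-1}$. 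Collecting powers of $\partial$ then yields \eqref{eq:c0}--\eqref{eq:c3}. Within your framework, the repair is to eliminate $\partial_{\mu_{P,1}}=\tfrac12-\partial_{\mu_{P,2}}$ when $j=1$, and then use $\partial^i_{\mu_{P,2}}\omega_{n,1}=(-1)^i\partial^i_{\mu_{P,1}}\omega_{n,1}$.

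Separately, your boundary-cancellation argument for the moving threshold $A$ works only at first order, where $-\ee^k\Phi(-A)$ is present to cancel the boundary term. For the second- and third-order operators the boundary contributions do not cancel. In the limit $\lambda=1$ they are exactly the $\phi$-terms of \eqref{eq:bs_call_derivatives} for $i\ge2$. The stated formula, with the strike term appearing once and unchanged, is what you get by letting the operators act on $C^j_{P,0,N}$ only through $\ee^{x_{P,j}}$ and $\omega_{n,j}$ with $A$ held fixed. That is how the paper computes it, so adopt it explicitly as the convention rather than deriving it from a cancellation.
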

The proof for the above result is included in 
Appendix \ref{sec:proof_of_vix_opt_price_hermite_expan}. Although the $N$th-order Hermite approximation of the VIX call price in \eqref{eq:vix_call_hermite_expan} can be further written as
\eqstar{
C^j_{P,N}
&= F_P \Phi\! \bigl(-A + \tfrac{\sigma_{P,j}}{2} \bigr)
- \ee^{k} \Phi\!(-A) 
+ \sqrt{\lambda_j} \ee^{x_{P,j}}\! \sum_{n=1}^N \sum_{i=0}^3 c_{i,j} \partial_{\mu_{P,1}}^i \omega_{n,j} 
\He_{n-1} \bigl(A - \tfrac{\sigma_{P,j}}{2} \bigr) 
\phi\! \bigl(A - \tfrac{\sigma_{P,j}}{2} \bigr),
}
the leading term is not yet in the form of the BS price formula as in \eqref{eq:bs_call}. Indeed, letting $\wtspj := \sigma_{P,j}/\sqrt{T}$, the triple $(\ln(F_P), k, \wtspj/2)$ does not satisfy both the $d_1$ and $d_2$ terms in \eqref{eq:d_bs} in the BS call price formula. Thus, to obtain a BS price representation, we introduce auxiliary log-spot and log-strike variables defined as follows, for $\theta\in [0,1]$, 
\begin{align}
x_{\theta,j} &:= \ln(F_P) 
+ \theta \Bigl( k - \ln(F_P) - \frac{A \wtspj \sqrt{T}}{2} + \frac{\wtspj^2 T}{8}
\Bigr),
\label{eq:x_theta}\\
k_{\theta,j} &:= k
- (1-\theta) \Bigl( k - \ln(F_P) - \frac{A \wtspj \sqrt{T}}{2} + \frac{\wtspj^2 T}{8} \Bigr).
\label{eq:k_theta}
\end{align}
With these choices, a direct computation shows that the corresponding identities in the BS price formula satisfy the following relations
\eqstar{
d_1 \bigl( x_{\theta,j}, k_{\theta,j}, \tfrac{\wtspj}{2} \bigr) 
= -A + \frac{\wtspj \sqrt{T}}{2},
&&
d_2 \bigl(x_{\theta,j}, k_{\theta,j}, \tfrac{\wtspj}{2} \bigr) = -A,
}
independently of $\theta$. Hence, the $N$th-order Hermite approximation of the VIX call price in Proposition \ref{prop:vix_opt_price_hermite_expan}
can be rewritten as
\eqlnostar{eq:vix_call_hermite_expan_bs}{
C^j_{P,N} &= C^\BS \bigl(x_{\theta,j}, k_{\theta,j}, \tfrac{\wtspj}{2} \bigr) 
+ (F_P - \ee^{x_{\theta,j}}) \Phi \bigl(-A + \tfrac{\wtspj \sqrt{T}}{2} \bigr)
+ (\ee^{k_{\theta,j}} - \ee^{k}) \Phi(-A) \nonumber\\
&\quad+ \sqrt{\lambda_j}\, \ee^{x_{P,j}} \sum_{n=1}^N \sum_{i=0}^3 c_{i,j}\, \partial_{\mu_{P,1}}^i \omega_{n,j} 
\He_{n-1} \bigl(A - \tfrac{\wtspj \sqrt{T}}{2} \bigr) 
\phi \bigl(A - \tfrac{\wtspj \sqrt{T}}{2} \bigr),
}
where the second and third terms are correction terms arising from replacing the original inputs by the auxiliary log-spot and log-strike. 

\begin{theorem}
\label{thm:vix_iv_mixed_kernel}
Let $x_{\theta,j}$ be the auxiliary log-spot defined in \eqref{eq:x_theta}. For the $N$th-order Hermite approximation of the VIX call price \eqref{eq:vix_call_hermite_expan_bs} under mixed-kernel forward variance models, the VIX implied volatility, defined in \eqref{eq:vix_iv}, is now viewed as a function of $(x_{\theta,j},k,T)$. Under Assumptions \ref{assump:initial_instantaneous_forward_variance}, \ref{assump:kernel_function} and \ref{assump:no_degenerate}, it admits the following approximation for all $\theta \in [0,1]$ and $j \in \{1,2\}$
\eqlnostar{eq:vix_iv_proxy_mixed_kernel}{
\sbs^j (x_{\theta,j}, k, T) 
\approx \frac{\wtspj}{2} 
+ \frac{\sqrt{\lambda_j} \ee^{x_{P,j}}}{\ee^{x_{\theta,j}} \sqrt{T}} \sum_{n=1}^N \sum_{i=0}^3 c_{i,j} \partial_{\mu_{P,1}}^i \omega_{n,j} \He_{n-1} \bigl(A - \tfrac{\wtspj \sqrt{T}}{2} \bigr), 
}
where $\wtspj = \sigma_{P,j} /\sqrt{T}$.
\end{theorem}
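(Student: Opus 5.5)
We follow the strategy of Theorem~\ref{thm:vix_iv_single_kernel}. The starting point is the representation \eqref{eq:vix_call_hermite_expan_bs} of $C^j_{P,N}$. In it, the leading term is a genuine Black--Scholes price $C^\BS(x_{\theta,j},k_{\theta,j},\tfrac{\wtspj}{2})$, followed by two input-correction terms and a Hermite remainder. We use the implied volatility as a function of $(x_{\theta,j},k,T)$, so the spot argument is $x_{\theta,j}$.

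\textbf{Step 1: a Vega factor.} We express the Hermite remainder as a multiple of a Vega. Since $d_1(x_{\theta,j},k_{\theta,j},\tfrac{\wtspj}{2}) = -A+\tfrac{\wtspj\sqrt T}{2}$, and $\phi$ and $\He_{n-1}$ have parity $(-1)^{n-1}$, we get
\begin{align}
\phi\bigl(A-\tfrac{\wtspj\sqrt T}{2}\bigr)=\phi(d_1)=\frac{\Vega(x_{\theta,j},k_{\theta,j},\tfrac{\wtspj}{2})}{\ee^{x_{\theta,j}}\sqrt T}.
\end{align}
The remainder therefore equals $\Delta\sigma_j\,\Vega(x_{\theta,j},k_{\theta,j},\tfrac{\wtspj}{2})$, where
\begin{align}
\Delta\sigma_j := \frac{\sqrt{\lambda_j}\,\ee^{x_{P,j}}}{\ee^{x_{\theta,j}}\sqrt T}\sum_{n=1}^N\sum_{i=0}^3 c_{i,j}\,\partial^i_{\mu_{P,1}}\omega_{n,j}\,\He_{n-1}\bigl(A-\tfrac{\wtspj\sqrt T}{2}\bigr).
\end{align}
This is exactly the correction in \eqref{eq:vix_iv_proxy_mixed_kernel}. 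It mirrors how the identities \eqref{eq:identities_bs_price} converted spatial derivatives into Vega in the single-kernel proof.

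\textbf{Step 2: first-order expansion in volatility.} A first-order Taylor expansion in the volatility argument gives
\begin{align}
C^\BS\bigl(x_{\theta,j},k_{\theta,j},\tfrac{\wtspj}{2}+\Delta\sigma_j\bigr) = C^\BS\bigl(x_{\theta,j},k_{\theta,j},\tfrac{\wtspj}{2}\bigr)+\Delta\sigma_j\Vega+O((\Delta\sigma_j)^2).
\end{align}
Substituting this into \eqref{eq:vix_call_hermite_expan_bs} yields $C^j_{P,N}\approx C^\BS(x_{\theta,j},k_{\theta,j},\tfrac{\wtspj}{2}+\Delta\sigma_j)$ plus the two correction terms $(F_P-\ee^{x_{\theta,j}})\Phi(d_1)+(\ee^{k_{\theta,j}}-\ee^k)\Phi(d_2)$.

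\textbf{Step 3: the input corrections.} We treat these two terms as the analogue of the futures correction $\delta\,\partial_xC^\BS$ in Theorem~\ref{thm:vix_iv_single_kernel}. A first-order Taylor expansion in the spot and strike arguments gives
\begin{align}
C^\BS(\ln F_P,k,\sigma)\approx C^\BS(x_{\theta,j},k_{\theta,j},\sigma)+(F_P-\ee^{x_{\theta,j}})\Phi(d_1)+(\ee^{k_{\theta,j}}-\ee^{k})\Phi(d_2),
\end{align}
using $\partial_x C^\BS=\ee^x\Phi(d_1)$ and $\partial_k C^\BS=-\ee^k\Phi(d_2)$. Hence these terms are first-order effects of returning from the auxiliary inputs $(x_{\theta,j},k_{\theta,j})$ to the true inputs $(\ln F_P,k)$, and they do not alter the volatility argument. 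Combining this with Step~2 and matching against the definition \eqref{eq:vix_iv} identifies $\sbs^j\approx\tfrac{\wtspj}{2}+\Delta\sigma_j$. The identification holds for every $\theta\in[0,1]$, because $d_1$ and $d_2$ do not depend on $\theta$.

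\textbf{Main obstacle.} The delicate point is Step~3. The shifts $x_{\theta,j}-\ln F_P$ and $k_{\theta,j}-k$ are not a priori small, since they involve $A$ and $k-\ln F_P$. Their first-order absorption must therefore be justified in the same heuristic sense as the $O(\delta^2)$ step of Theorem~\ref{thm:vix_iv_single_kernel}: we would argue that the exponential differences are exactly linearised by the $\Phi(d_1)$ and $\Phi(d_2)$ weights evaluated at $\theta$-independent $d_1$ and $d_2$, and that the neglected cross terms are of second order. Under Assumption~\ref{assump:no_degenerate}, $\sigma_{P,j}$ is bounded above and below, so $\Vega$ is nondegenerate and $\Delta\sigma_j$ is bounded by the small coefficients $\gamma_{i,j}$ and $\omega_{n,j}$. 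This makes the $O((\Delta\sigma_j)^2)$ remainder in Step~2 negligible.
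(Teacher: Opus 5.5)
Your proposal is correct and follows the paper's proof essentially step for step. The paper likewise rewrites the Hermite remainder as $\Delta\sigma_j\,\Vega\bigl(x_{\theta,j},k_{\theta,j},\tfrac{\wtspj}{2}\bigr)$ and absorbs it by a first-order Taylor expansion in $\sigma$. It then shows that the two input-correction terms are exactly $\delta_{x,j}\,\partial_x C^{\BS}+\delta_{k,j}\,\partial_k C^{\BS}$, with $\delta_{x,j}=\ee^{-\theta\delta_j}-1$ and $\delta_{k,j}=\ee^{(1-\theta)\delta_j}-1$, and matches these against the same first-order expansion of $C^{\BS}(\ln F,k,\sbs)$ around $(x_{\theta,j},k_{\theta,j})$. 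One small point: in Step~1 only the evenness of $\phi$ is used, since $\He_{n-1}$ keeps its argument $A-\tfrac{\wtspj\sqrt T}{2}$, so your parity remark about $\He_{n-1}$ is unnecessary.
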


\begin{proof}
Let us set
\eqstar{
\delta_{x,j} := \ee^{-\theta \delta_j} -1,
\quad
\delta_{k,j} := \ee^{(1-\theta) \delta_j } -1,
\quad
\delta_j := k - \ln(F_P) - \frac{A \wtspj \sqrt{T}}{2} + \frac{\wtspj^2 T}{8},
\quad
j \in \{1,2\}.
}
By the definitions of the $x_{\theta,j}$ and $k_{\theta,j}$ in \eqref{eq:x_theta} and \eqref{eq:k_theta}, respectively, we have
\eqstar{
\ln (F) \approx
\ln (F_P) &= 
x_{\theta,j} + \ln(1+\delta_{x,j}) = x_{\theta,j} + \delta_{x,j} + O (\delta_{x,j}^2), \\
k &= 
k_{\theta,j} + \ln(1+\delta_{k,j}) = k_{\theta,j} + \delta_{k,j} + O (\delta_{k,j}^2).
}
Hence, a first-order Taylor expansion of $C^{\BS} (\ln (F),k,\sbs)$ around $(x_{\theta,j},k_{\theta,j})$, keeping $\sbs$ fixed, gives us the following
\eqlnostar{eq:general_price_expansion}{
C = C^{\BS} (\ln (F),k,\sbs)
&\approx
C^{\BS} (x_{\theta,j},k_{\theta,j},\sbs)
+ \delta_{x,j} \partial_x C^{\BS}(x_{\theta,j},k_{\theta,j},\sbs)\nonumber\\
&+ \delta_{k,j} \partial_k C^{\BS}(x_{\theta,j},k_{\theta,j},\sbs).
}
The $\partial_x C^{\BS}$- and $\partial_k C^{\BS}$-terms are precisely the first-order corrections induced by shifting from $(\ln (F_P), k)$ to the auxiliary variables $(x_{\theta,j}, k_{\theta,j})$. Combining the expansion in \eqref{eq:general_price_expansion} with the $N$th-order Hermite approximation in \eqref{eq:vix_call_hermite_expan_bs} and the identities in \eqref{eq:identities_bs_price}, we obtain
\eqlnostar{eq:vix_call_hermite_expan_bs_re}{
C^j_{P,N} &= C^{\BS}\! \bigl(x_{\theta,j}, k_{\theta,j}, \tfrac{\wtspj}{2} \bigr)
+ \delta_{x,j} \partial_x C^{\BS}\! \bigl(x_{\theta,j}, k_{\theta,j}, \tfrac{\wtspj}{2}\bigr)
+ \delta_{k,j} \partial_k C^{\BS}\! \bigl(x_{\theta,j}, k_{\theta,j}, \tfrac{\wtspj}{2}\bigr)\nonumber\\
&\quad+ \Delta\sigma_j \Vega\! \bigl(x_{\theta,j}, k_{\theta,j}, \tfrac{\wtspj}{2}\bigr),   
}
where
\eqstar{
\Delta\sigma_j :=
\frac{\sqrt{\lambda_j} \ee^{x_{P,j}}}{\ee^{x_{\theta,j}} \sqrt{T}}
\sum_{n=1}^N \sum_{i=0}^3 c_{i,j}\, \partial_{\mu_{P,1}}^i \omega_{n,j} \He_{n-1} \bigl(A - \tfrac{\wtspj \sqrt{T}}{2} \bigr).
}
Since, from the first-order Taylor series expansion in $\sigma$ we have
\eqstar{
C^\BS (\cdot, \cdot, \sigma+\Delta \sigma) = C^\BS (\cdot, \cdot, \sigma) + \Delta \sigma \Vega(\cdot, \cdot, \sigma) + O ((\Delta \sigma)^2),
}
we can rewrite \eqref{eq:vix_call_hermite_expan_bs_re} as the following approximation
\eqlnostar{eq:vix_call_hermite_expan_taylor}{
C^j_{P,N} \approx C^{\BS} \bigl(x_{\theta,j}, k_{\theta,j}, \tfrac{\wtspj}{2}+\Delta\sigma_j \bigr)
+ \delta_{x,j}\, \partial_x C^{\BS} \bigl(x_{\theta,j}, k_{\theta,j}, \tfrac{\wtspj}{2}\bigr)
+ \delta_{k,j}\,\partial_k C^{\BS} \bigl(x_{\theta,j}, k_{\theta,j}, \tfrac{\wtspj}{2}\bigr).
}
This representation above shows that the VIX call price can be approximated by a BS price with shifted volatility $\wtspj/2+\Delta\sigma_j$, together with first-order corrections in the log-spot and log-strike weighted by $\delta_{x,j}$ and $\delta_{k,j}$, respectively.
\end{proof}

\begin{remark}
\label{re:choice_of_sigma_base}
In Proposition~\ref{prop:rewrite_vix_opt_price_approx_mixed_kernel} and Theorem~\ref{thm:vix_iv_mixed_kernel}, we note that both the VIX option price and implied volatility can be rewritten in two different forms. Although these two representations do not affect the accuracy of VIX call, put, or futures price approximations, they lead to different accuracies for the implied volatility formulas, since the zeroth-order term in \eqref{eq:vix_iv_proxy_mixed_kernel}, namely either $\widetilde{\sigma}_{P,1}/2$ or $\widetilde{\sigma}_{P,2}/2$, differs across these two forms. Moreover, for each level of log-moneyness, we can also choose the value of $\theta$ that provides the best approximation for each $j$.
This is achieved by minimizing the first-order correction term in \eqref{eq:vix_iv_proxy_mixed_kernel}. More precisely, for $j \in\{1,2\}$, we define
\eqstar{
\theta^* = \underset{\theta\in[0,1]}{\argmin}\, \Big| \frac{\sqrt{\lambda_j}\, \ee^{x_{P,j}}}{\ee^{x_{\theta,j}} \sqrt{T}} \sum_{n=1}^N \sum_{i=0}^3 c_{i, j}\, 
\partial_{\mu_{P,1}}^i \omega_{n, j} \He_{n-1} \bigl(A - \tfrac{\widetilde{\sigma}_{P,j} \sqrt{T}}{2} \bigr) \Big|.
}
For a fixed $j$, the only $\theta$-dependent variable in the objective function is $\ee^{-x_{\theta,j}}$. Therefore, minimizing with respect to $\theta$ is equivalent to maximizing $x_{\theta,j}$ over $[0,1]$, which leads to
\eqstar{
\theta^*=
\begin{cases}
1, &\delta_j > 0,\\
0, &\delta_j < 0,\\
\text{any\ } \theta \in [0,1], &\delta_j = 0. 
\end{cases}
}
\end{remark}

\begin{remark}
\label{re:extended_vix_iv_mixed_kernel}
Consider the case of $\theta=0,$ in \eqref{eq:x_theta}--\eqref{eq:k_theta}, which gives 
\eqstar{
x_{0,j} = \ln (F_P), 
\qquad
k_{0,j} = \ln(F_P) + \frac{A \wtspj \sqrt{T}}{2} - \frac{\wtspj ^2 T}{8},
\qquad
j \in \{1,2\}.
}
For this particular choice of $x_{\theta,j} = x_{0,j},$ and $k_{\theta,j} = k_{0,j}$, in \eqref{eq:vix_call_hermite_expan_bs}, we get 
\eqlnostar{eq:vix_call_hermite_expan_bs_re_0}{
C^j_{P,N} &= C^\BS \bigl(\ln (F_P), k_{0,j}, \tfrac{\wtspj}{2} \bigr)
+ \delta_{k,j}\, \partial_k C^\BS \bigl(\ln (F_P), k_{0,j}, \tfrac{\wtspj}{2} \bigr)\nonumber\\
&\quad+ \sqrt{\lambda_j}\, \ee^{x_{P,j}}
\sum_{n=1}^N \sum_{i=0}^3 c_{i,j}\, \partial_{\mu_{P,1}}^i \omega_{n,j} \He_{n-1} \bigl(A - \tfrac{\wtspj \sqrt{T}}{2} \bigr) \phi \bigl(A - \tfrac{\wtspj \sqrt{T}}{2} \bigr).
}
In the above, the first two terms constitute the first-order Taylor series expansion of 
$C^{\BS} (\ln (F_P),$
\newline\noindent
$k,\wtspj/2)$ at $k=k_{0,j}$. Thus, we can rewrite \eqref{eq:vix_call_hermite_expan_bs_re_0} as the following approximation
\eqlnostar{eq:vix_call_proxy_intermediate}{
C^j_{P,N} \approx C^{\BS} \bigl(\ln (F_P), k, \tfrac{\wtspj}{2} \bigr)
+ \Delta\sigma_j\,\Vega \bigl(\ln (F_P), k, \tfrac{\wtspj}{2} \bigr),
}
where the volatility adjustment $\Delta \sigma_j$ is defined as follows
\eqstar{
\Delta\sigma_j :=
\frac{\sqrt{\lambda_j}\, \ee^{x_{P,j}}} {\Vega \bigl(\ln (F_P), k, \tfrac{\wtspj}{2} \bigr)}
\sum_{n=1}^N \sum_{i=0}^3 c_{i,j}\, \partial_{\mu_{P,1}}^i \omega_{n,j}
\He_{n-1} \bigl(A - \tfrac{\wtspj \sqrt{T}}{2} \bigr) \phi \bigl(A - \tfrac{\wtspj \sqrt{T}}{2} \bigr).
}
Furthermore, we can see that \eqref{eq:vix_call_proxy_intermediate} is a first-order Taylor series expansion in $\sigma = \wtspj/2$. Thus, we can write the following approximation
\eqstar{
C_{P,N} \approx C^{\BS} \bigl(\ln (F_P), k, \tfrac{\wtspj}{2} \bigr)
+ \Delta\sigma_j\,\Vega \bigl(\ln (F_P), k, \tfrac{\wtspj}{2} \bigr)
\approx 
C^{\BS} \bigl(\ln (F_P), k, \tfrac{\wtspj}{2} + \Delta\sigma_j \bigr).
}
Therefore, the VIX implied volatility expansion formula corresponding to this approximation is given by
\eqlnostar{eq:vix_iv_proxy_mixed_kernel_extended}{
&\sbs^j (\ln (F_P), k,T) \nonumber \\
&\approx \frac{\wtspj}{2}
+ \frac{\sqrt{\lambda_j}\, \ee^{x_{P,j}}} {\Vega \bigl(\ln (F_P), k, \tfrac{\wtspj}{2} \bigr)}
\sum_{n=1}^N \sum_{i=0}^3 c_{i,j} \partial_{\mu_{P,1}}^i \omega_{n,j}
\He_{n-1} \bigl(A - \tfrac{\wtspj \sqrt{T}}{2} \bigr) 
\phi \bigl(A - \tfrac{\wtspj \sqrt{T}}{2} \bigr).
}
Note that the VIX implied volatility expansion obtained in \eqref{eq:vix_iv_proxy_mixed_kernel_extended} differs from \eqref{eq:vix_iv_proxy_mixed_kernel} in Theorem~\ref{thm:vix_iv_mixed_kernel}. This difference arises because the present derivation first specializes to the case $\theta=0,$ and then successively absorbs the strike correction and the Hermite-approximation correction into the BS price formula through two first-order Taylor approximations in $k$ and $\sigma$ for $C^{\BS} \bigl(\ln (F_P), k, \sigma \bigr)$. While this approach avoids discarding terms as in the proof of Theorem~\ref{thm:vix_iv_mixed_kernel}, it relies on choosing $\theta =0,$ for $x_{\theta,j}$ and $k_{\theta,j}$. Moreover, the correction term in \eqref{eq:vix_iv_proxy_mixed_kernel_extended} contains the Gaussian density term $\phi$ and $\Vega$, which together can be numerically unstable across different levels of moneyness. Consequently, the resulting formula \eqref{eq:vix_iv_proxy_mixed_kernel_extended} is expected to suffer from a larger loss of accuracy compared with \eqref{eq:vix_iv_proxy_mixed_kernel}. This loss of accuracy is also observed in the numerical tests reported in Appendix~\ref{sec:numerical_tests_vix_iv_extended_mixed_kernel}.
\end{remark}

\begin{remark}
\label{re:vix_iv_mixed_kernel_moment_matched}
In the mixed-kernel forward variance models, the squared VIX index proxy $\VIX_{T,P}^2$ in \eqref{eq:proxy_mixed_kernel} is given by a linear combination of two log-normal random variables $\VIX_{T,P,j}^2$, $j \in \{1,2\}$. A natural approximation choice is to use a single log-normal random variable with the first two moments matched with the first two moments of $\VIX_{T,P}^2$. Applying the VIX implied volatility expansion in \eqref{eq:vix_iv_proxy_single_kernel} from Theorem \ref{thm:vix_iv_single_kernel}, originally derived for the single-kernel case, we then obtain an approximation for the implied volatility of VIX options under the mixed-kernel framework. Numerically, this approach performs significantly worse than approximations obtained from Theorem \ref{thm:vix_iv_mixed_kernel}. Further details are given in Appendix \ref{sec:vix_iv_mixed_kernel_moment_matched}.
\end{remark}

\subsection{Numerical tests}
\label{sec:numerical_tests_mixed_kernel}

Since the VIX implied volatility expansion in Theorem~\ref{thm:vix_iv_mixed_kernel} is based on a Hermite series expansion for the weak option price approximation in Proposition~\ref{prop:vix_opt_price_hermite_expan}, we first test the accuracy of the $N$th-order Hermite expansion of the VIX call price itself in \eqref{eq:vix_call_hermite_expan} with respect to its corresponding weak approximation in Theorem \ref{thm:vix_opt_price_approx_mixed_kernel}. However, since it is only an intermediate step towards computing the implied volatility expansions, we relegate the corresponding numerical procedure and results to Appendix \ref{sec:numerical_tests_vix_opt_mixed_kernel}. In short, the results indicate that the $N$th-order Hermite expansion of the VIX call price is quite accurate across all the parameter cases considered in Table \ref{tab:parameters_order_mb}-\ref{tab:parameters_order_mrb}. 

To assess the accuracy of our VIX implied volatility expansions (referred to as \textit{Expansion}) in \eqref{eq:vix_iv_proxy_mixed_kernel} derived in Theorem~\ref{thm:vix_iv_mixed_kernel} with the choice of $\theta$ as explained in Remark~\ref{re:choice_of_sigma_base}, we perform numerical tests for maturities of 1, 3, and 6 months, and under two parameter scenarios each in the mixed standard and mixed rough Bergomi models. Moreover, we estimate the optimal order $N$ using a nonlinear least squares procedure to obtain the most accurate Hermite expansion of $g^j$ in \eqref{eq:hermite_expan_of_func_g}. Recall that the Hermite polynomials $\He_n (y)$ are orthogonal with respect to the standard normal distribution. Thus, a natural and mathematically consistent objective function is the mean squared error under the standard Gaussian measure,
\eqstar{
\min_{N \in \Nb}\ \Eb_{y \sim \mathcal{N} (0,1)} \biggl[ \Bigl( \sum_{n=0}^N \omega_{n,j} \He_n (y) - g^j(y) \Bigr)^2 \biggr], 
\qquad
\ 
j \in \{1,2\},
}
with $\omega_{n,j}$ defined in \eqref{eq:weight_func}. In practice, we approximate the above expectation using a Monte Carlo estimator with $10^6$ samples. The reference implied volatility values in the mixed standard Bergomi model are computed via the root-finding procedure using the two-dimensional deterministic quadrature with 120 nodes in both time and space dimensions (referred to as \textit{quadrature}). In the mixed rough Bergomi model, these reference values are computed using Monte Carlo simulation with $10^6$ samples and $300$ discretization points (referred to as \textit{Monte Carlo}). The details of these procedures are the same as previously performed in Section \ref{sec:numerical_tests_single_kernel}. We also compute the implied volatility values obtained from the weak approximation of VIX option prices in Theorem \ref{thm:vix_opt_price_approx_mixed_kernel} (referred to as \textit{Approx}) using a one-dimensional Gauss-Hermite quadrature with 120 nodes (cf. \cite[Section 3.2.1]{bourgey_weak_2023}). 

We set $\kappa = 1$ and $H = 0.1$ in the mixed standard and mixed rough Bergomi models, respectively, with a flat initial forward variance curve that $\xi_0 = \xi_0^u = 0.24^2$, for all $u \in [T, T+\Delta]$, and consider ten evenly spaced values of log-moneyness ranging from -0.1 to 0.4. The remaining parameters and the optimal order $N$ of the Hermite expansion for $g^j$ in \eqref{eq:hermite_expan_of_func_g} for these two models are given in Tables \ref{tab:parameters_order_mb} and \ref{tab:parameters_order_mrb}. Moreover, we choose the lower of the two values $\sigma_{P,1}$ and $\sigma_{P,2}$ for the implied volatility expansion in \eqref{eq:vix_iv_proxy_mixed_kernel}.

\begin{table}[H]
\centering
\footnotesize
\captionsetup{font = footnotesize, skip = 5pt}
\caption{Model parameters and optimal orders of the Hermite expansion for $g^j$ in the mixed standard Bergomi model}
\begin{tabular}{lccc}
\hline
Scenario & $(\omega_1, \omega_2, \lambda)$ & $T$ (months) & optimal $N$ \\
\hline
  &              & 1 & 10 \\
1 & (10, 2, 0.2) & 3 & 16 \\
  &              & 6 & 18 \\
\hline
  &              & 1 & 7  \\
2 & (0.5, 6, 0.3) & 3 & 14  \\
  &              & 6 & 18 \\
\hline
\end{tabular}
\label{tab:parameters_order_mb}
\end{table}

\begin{table}[H]
\centering
\footnotesize
\captionsetup{font = footnotesize, skip = 5pt}
\caption{Model parameters and optimal orders of the Hermite expansion for $g^j$ in the mixed rough Bergomi model}
\begin{tabular}{lccc}
\hline
Scenario & $(\eta_1, \eta_2, \lambda)$ & $T$ (months) & optimal $N$ \\
\hline
  &              & 1 & 4 \\
3 & (1.4, 0.7, 0.3) & 3 & 4 \\
  &              & 6 & 5 \\
\hline
  &              & 1 & 8  \\
4 & (2, 0.2, 0.4) & 3 & 11  \\
  &              & 6 & 14 \\
\hline
\end{tabular}
\label{tab:parameters_order_mrb}
\end{table}

In the mixed standard Bergomi model, we can observe from Figure \ref{fig:mixed_case_bergomi_1} and \ref{fig:mixed_case_bergomi_2} that our implied volatility expansion is quite accurate (with relative errors less than 7\%) in Scenario 1, whereas the performance deteriorates in Scenario 2. In Scenario 1, the corresponding $\omega_1$ and $\omega_2$ values are close to each other, which results in close values of $\sigma_{P,1}$ and $\sigma_{P,2}$. Thus, choosing our implied volatility expansion around one or the other proxy volatility does not lead to the loss of accuracy. On the other hand, when $\omega_1$ and $\omega_2$ are significantly different from each other, the resulting values of $\sigma_{P,1}$ and $\sigma_{P,2}$ are far apart. In this case, the implied volatility expansion around one of the proxy volatilities does not remain accurate enough. This behavior is similarly repeated in the mixed rough Bergomi model, where we can see from Figures \ref{fig:mixed_case_rbergomi_1} and \ref{fig:mixed_case_rbergomi_2}, that the expansion formula is quite accurate (with relative errors less than 2\%) in Scenario 3, but performs poorly in Scenario 4. We also report additional numerical experiments for Scenario 2 (in Figure \ref{fig:mixed_case_bergomi_2}) and 4 (in Figure \ref{fig:mixed_case_rbergomi_2}), with the higher of $\sigma_{P,1}$ and $\sigma_{P,2}$ in the implied volatility expansion for ease of comparison.

This behavior is in fact not driven by the Hermite expansion of the function $g^j$ in \eqref{eq:hermite_expan_of_func_g} around a particular volatility $\sigma_{P,1}$ or $\sigma_{P,2}$. As we numerically show in Appendix \ref{sec:numerical_tests_vix_opt_mixed_kernel}, the Hermite expansion formulas for VIX option prices are quite accurate. 
The loss of accuracy in our implied volatility expansion formula around one of the volatilities $\sigma_{P,1}$ and $\sigma_{P,2}$ emanates from its derivation. As can be seen in the proof of Theorem \ref{thm:vix_iv_mixed_kernel}, when deriving the expansion, we drop the terms $\partial_x C^{\BS}$ and $\partial_k C^{\BS}$, which are the first-order corrections induced by shifting from $(\ln (F_P), k)$ to the auxiliary variables $(x_{\theta,j}, k_{\theta,j})$. This step is required to derive the explicit implied volatility expansion formula. This choice has a very limited impact in the case of a single kernel. However, in the mixed-kernel case, it leads to a loss of accuracy when $\sigma_{P,1}$ and $\sigma_{P,2}$ are far apart from each other.
In light of this issue, we also provide an alternative derivation of the VIX implied volatility in Remark~\ref{re:extended_vix_iv_mixed_kernel}, where we do not discard any terms. However, the resulting expansion \eqref{eq:vix_iv_proxy_mixed_kernel_extended} exhibits poorer numerical performance. This loss of accuracy is discussed in Remark~\ref{re:extended_vix_iv_mixed_kernel} and further confirmed by the numerical experiments reported in Appendix~\ref{sec:numerical_tests_vix_iv_extended_mixed_kernel}.
Having said that, even in Scenarios 2 and 4, the expansion formulas remain reasonably accurate around the in-the-money (ITM) region and only give poor accuracy in the out-of-the-money (OTM) region. 
The runtimes are $36.6 \, \ms \pm 625 \, \mus$ (\textit{quadrature}), $17.9 \, \ms \pm 161 \, \mus$ (\textit{Approx}), and $21.7 \, \ms \pm 1.1 \, \ms$ (\textit{Expansion}) under the mixed standard Bergomi model. In the mixed rough Bergomi model, the runtimes are $13.2 \, \s \pm 1.01 \, \s$ (\textit{Monte Carlo}), $18.9 \ms \pm 1.04 \, \ms$ (\textit{Approx}), and $20.9 \, \ms \pm 132 \, \mus$ (\textit{Expansion}). The implied volatility expansion has a runtime comparable to that of the weak approximation. It is approximately $1.7\times$ faster than quadrature in the mixed standard Bergomi model and about $632\times$ faster than Monte Carlo in the mixed rough Bergomi model.

\begin{figure}[H]
    \centering
    \hspace{-0.5cm}
    \includegraphics[width=0.5\linewidth]{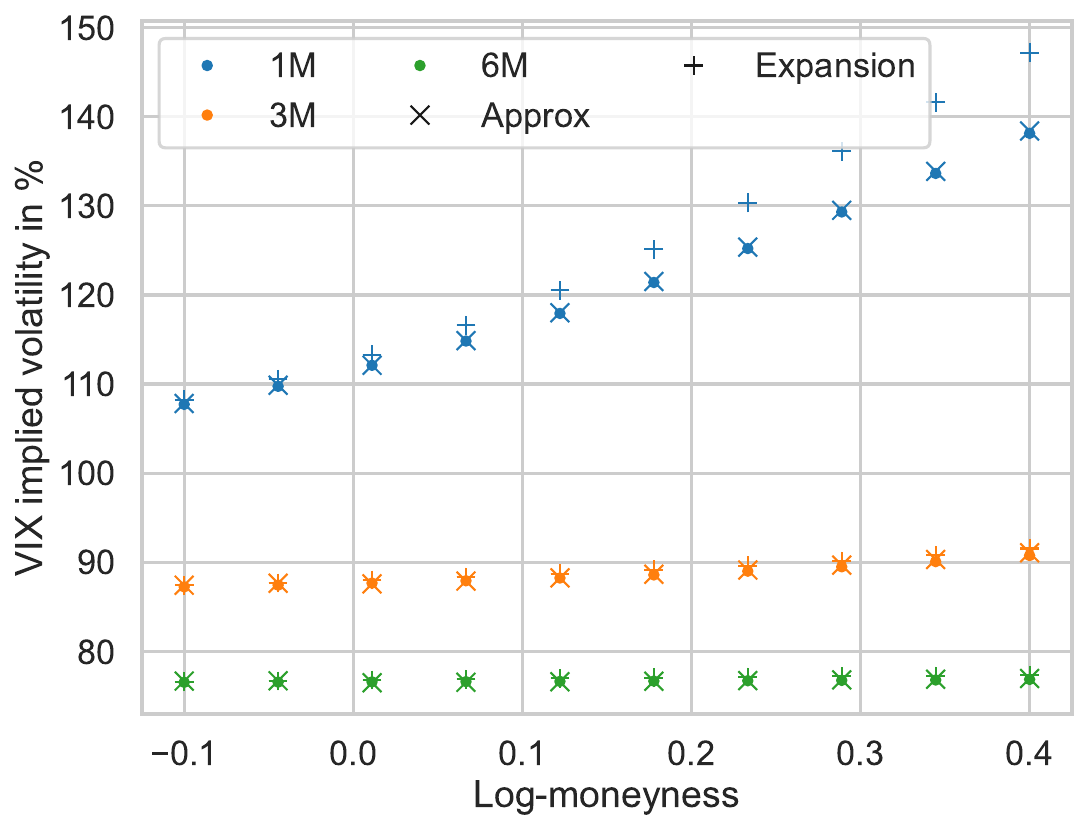}
    \includegraphics[width=0.5\linewidth]{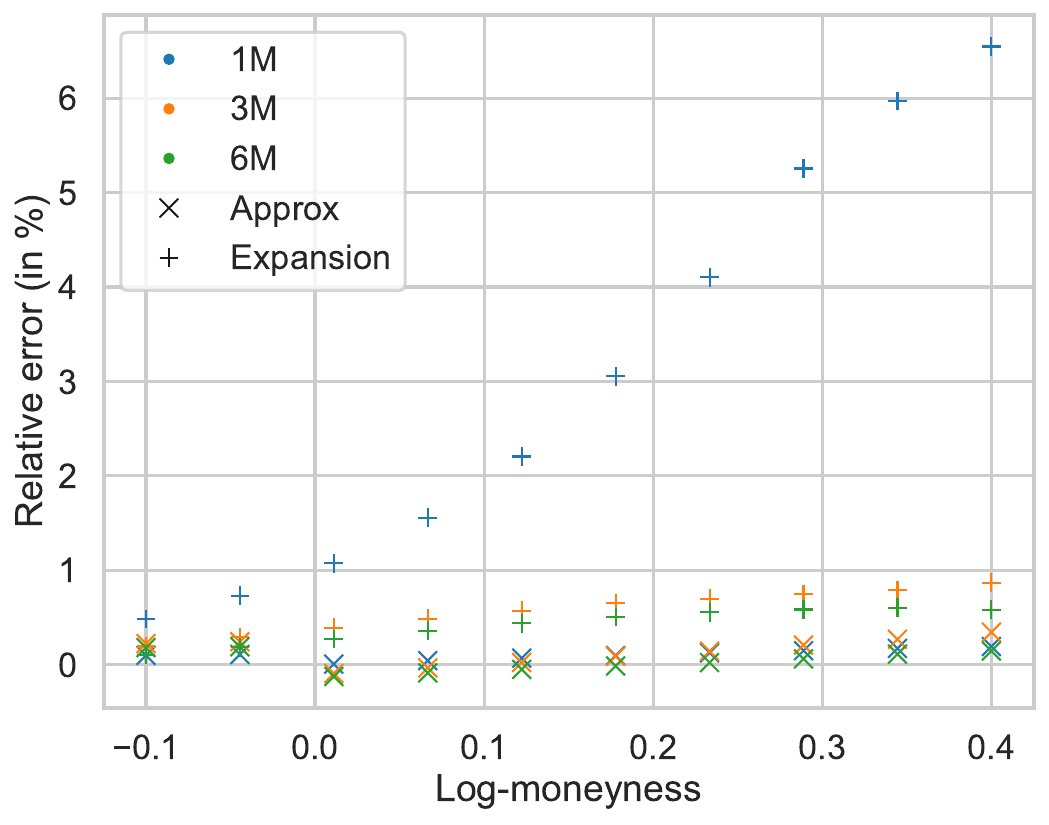}
    \caption{VIX smiles of Scenario 1 in the mixed standard Bergomi model.}
    \label{fig:mixed_case_bergomi_1}
\end{figure}

\begin{figure}[H]
    \centering
    \hspace{-0.5cm}
    \includegraphics[width=0.5\linewidth]{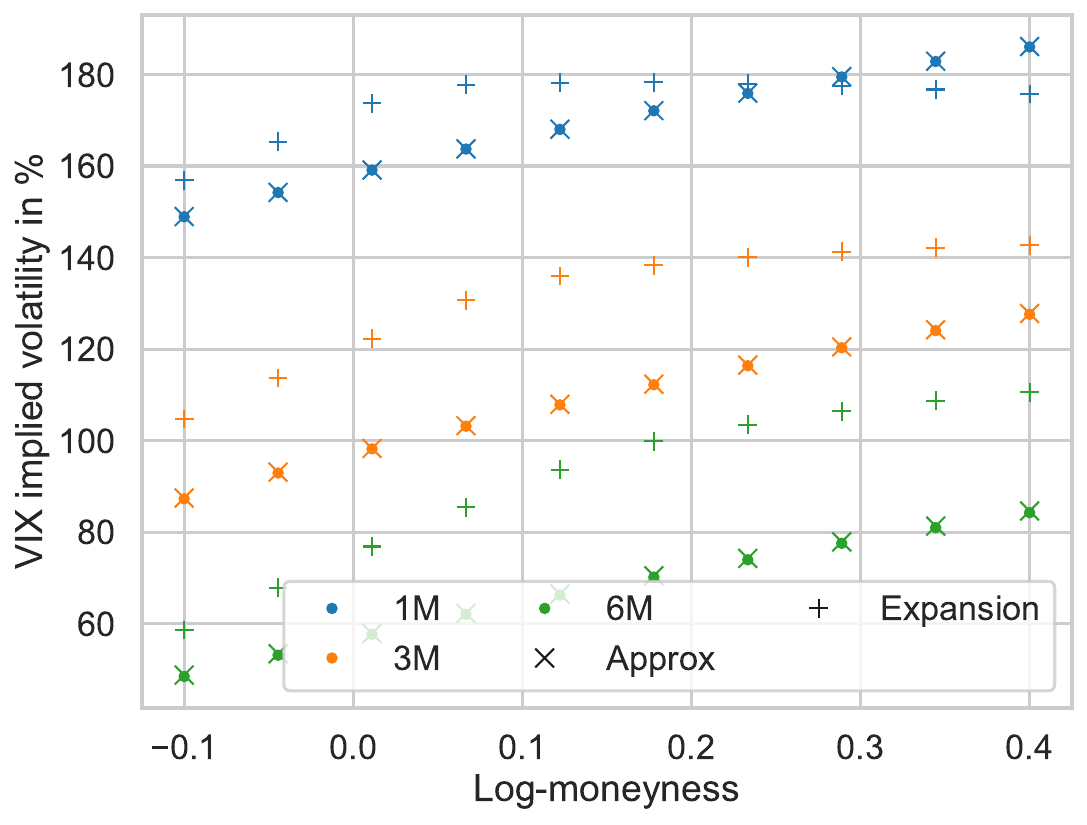}
    \includegraphics[width=0.5\linewidth]{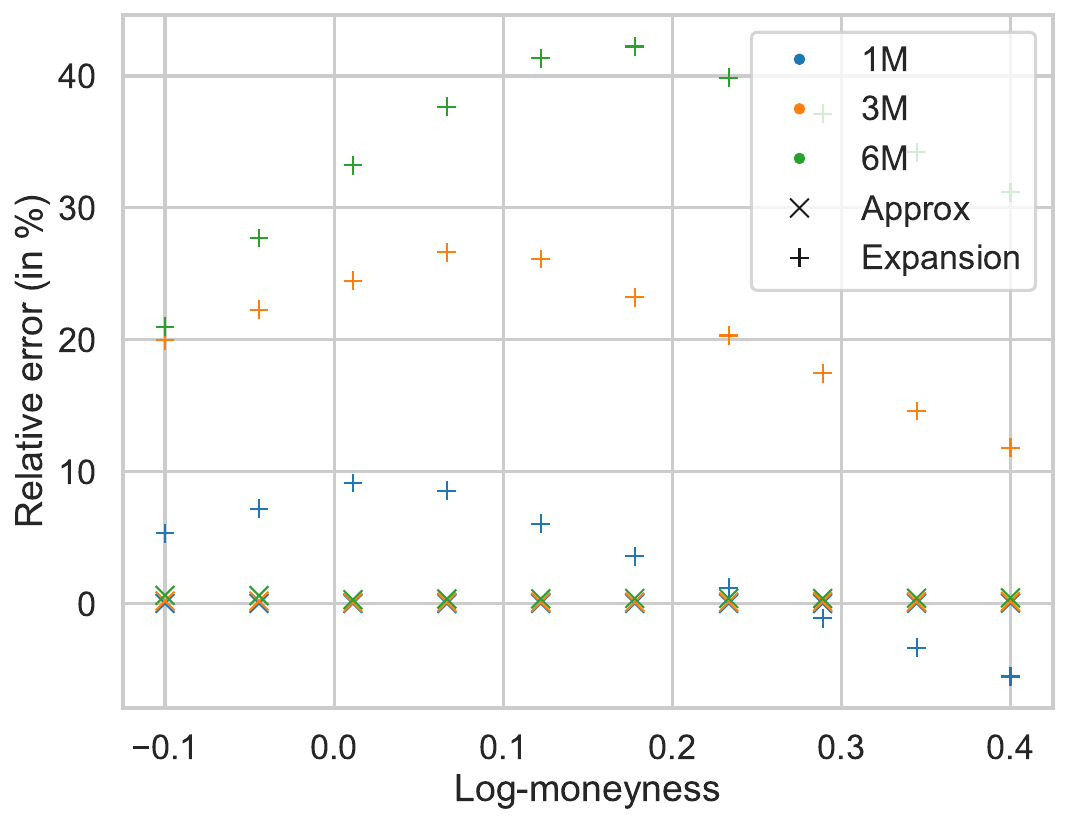}\\
    \hspace{-0.5cm}
    \includegraphics[width=0.5\linewidth]{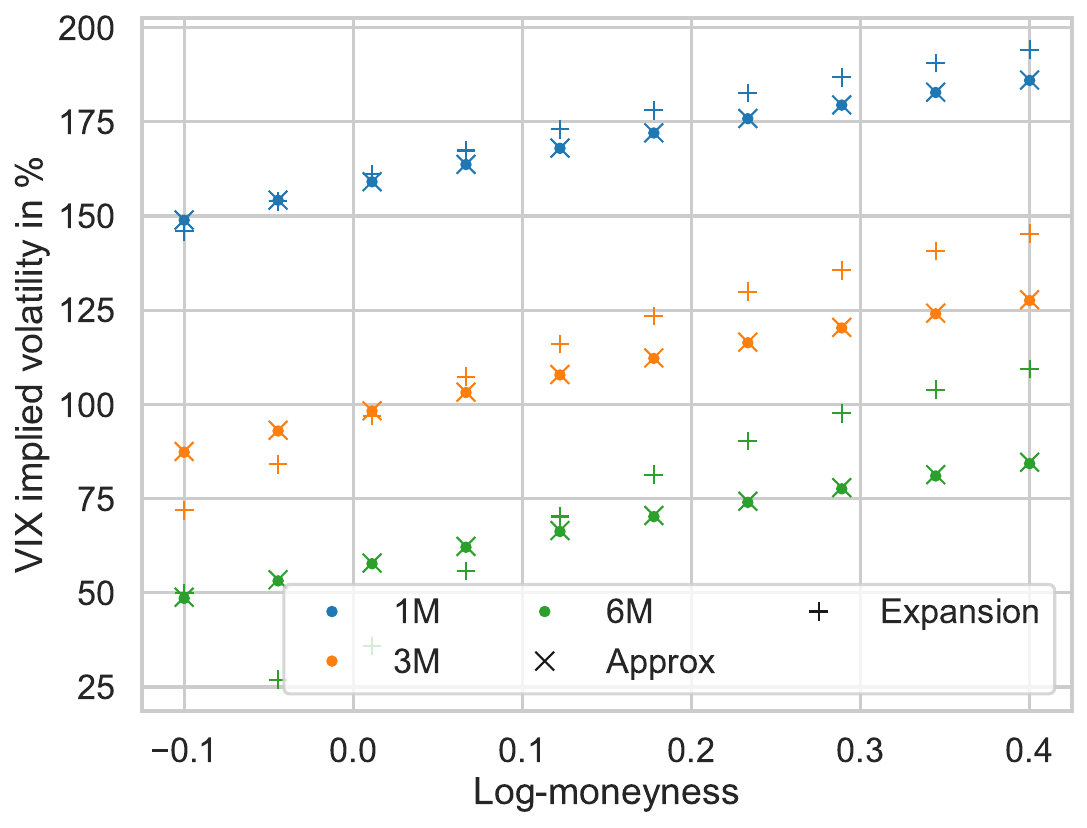}
    \includegraphics[width=0.5\linewidth]{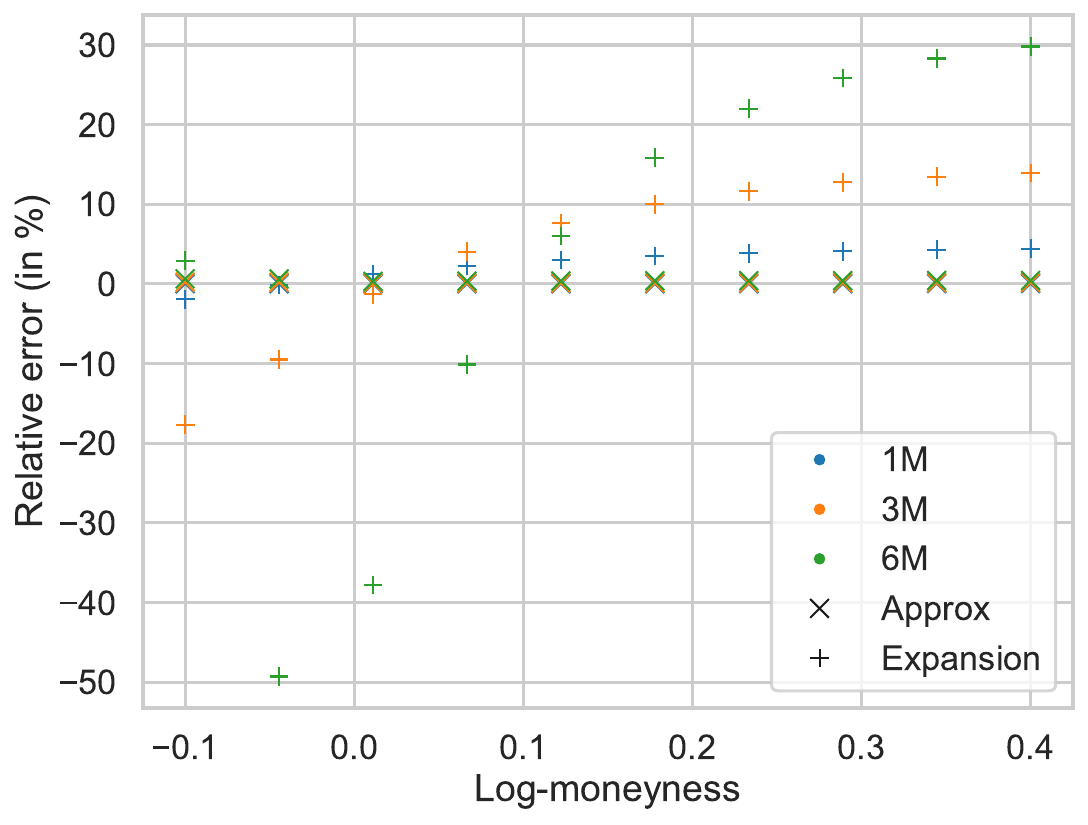}
    \caption{VIX smiles of Scenario 2 in the mixed standard Bergomi model. Top: Expansion around $\sigma_{P,1}$ (lower). Bottom: Expansion around $\sigma_{P,2}$ (higher).}
    \label{fig:mixed_case_bergomi_2}
\end{figure}

\begin{figure}[H]
    \centering
    \hspace{-0.5cm}
    \includegraphics[width=0.5\linewidth]{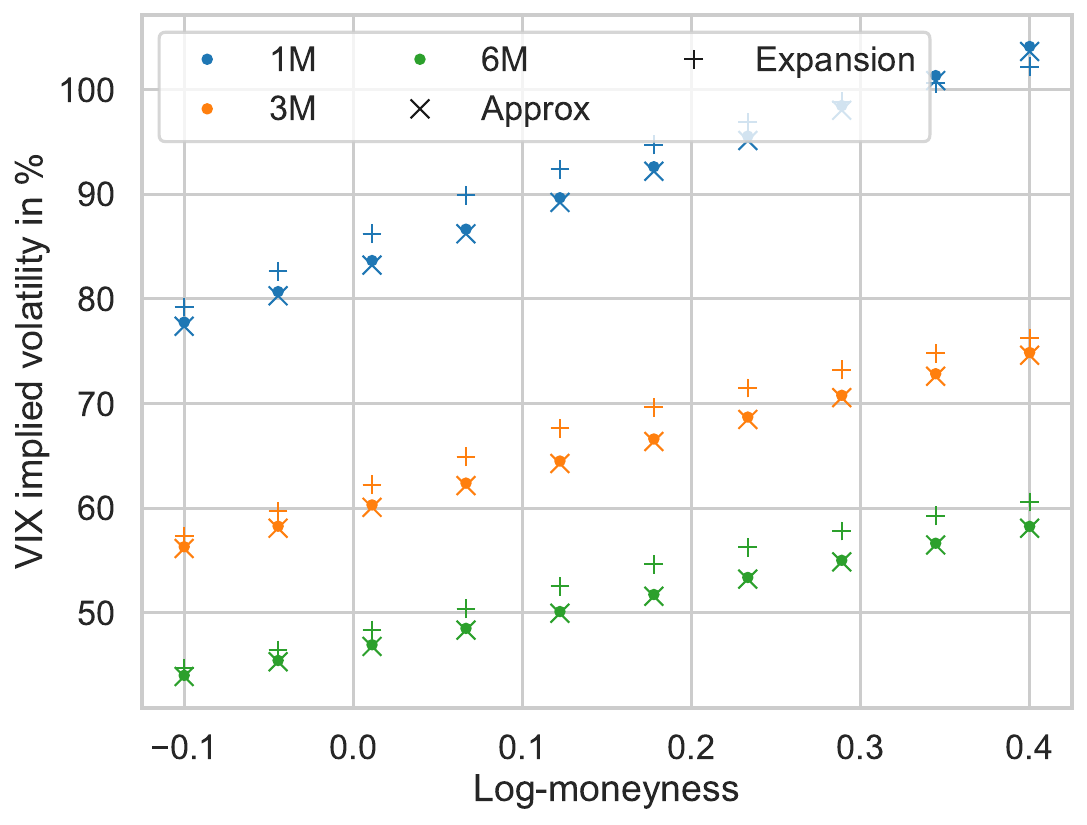}
    \includegraphics[width=0.5\linewidth]{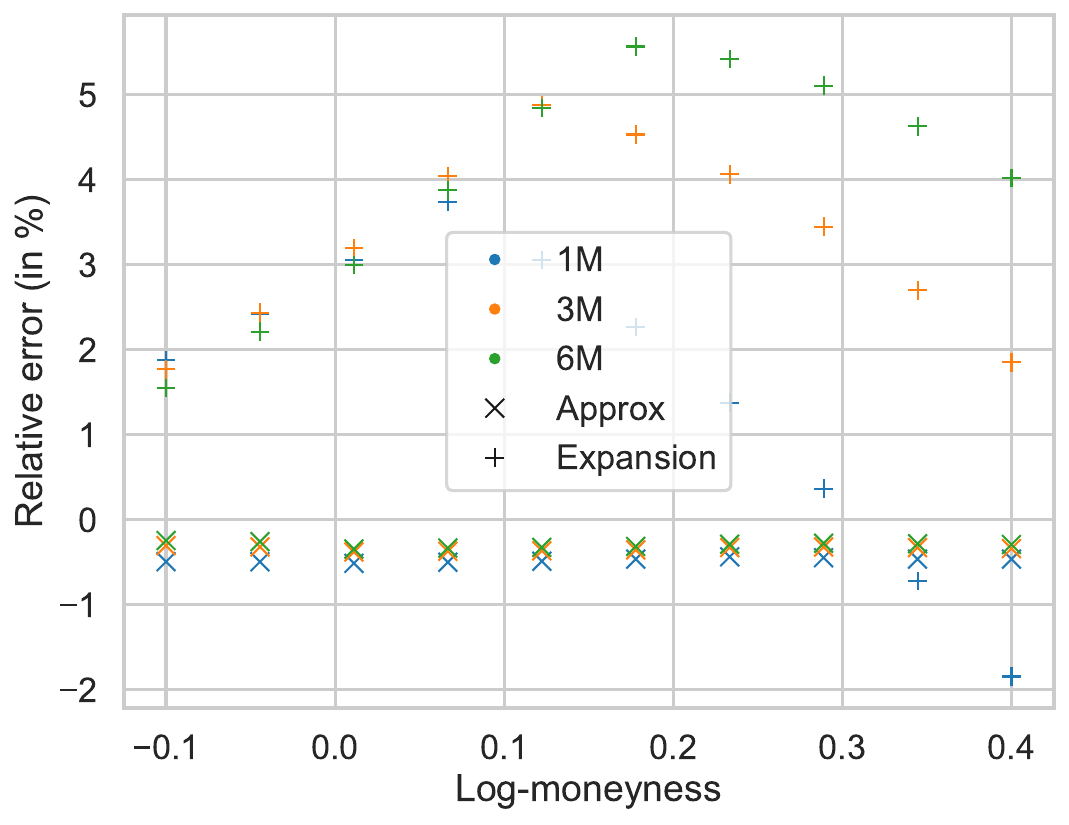}
    \caption{VIX smiles of Scenario 3 in the mixed rough Bergomi model.}
    \label{fig:mixed_case_rbergomi_1}
\end{figure}

\begin{figure}[H]
    \centering
    \hspace{-0.5cm}
    \includegraphics[width=0.5\linewidth]{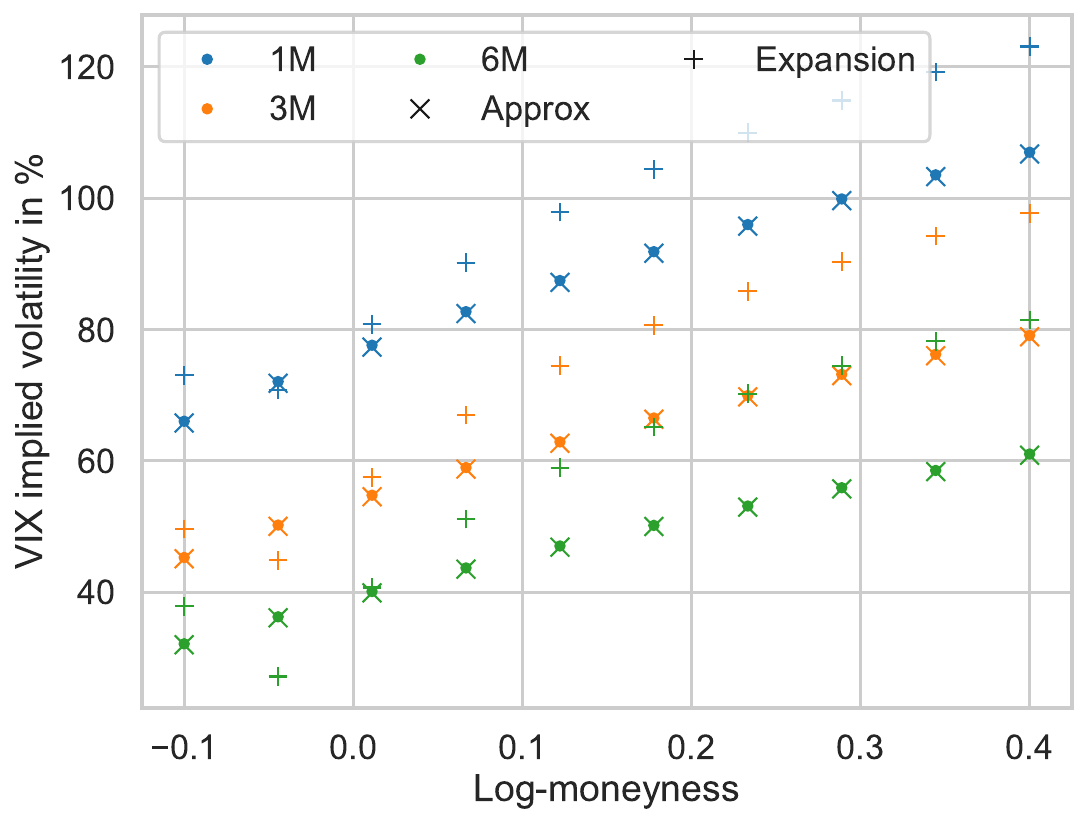}
    \includegraphics[width=0.5\linewidth]{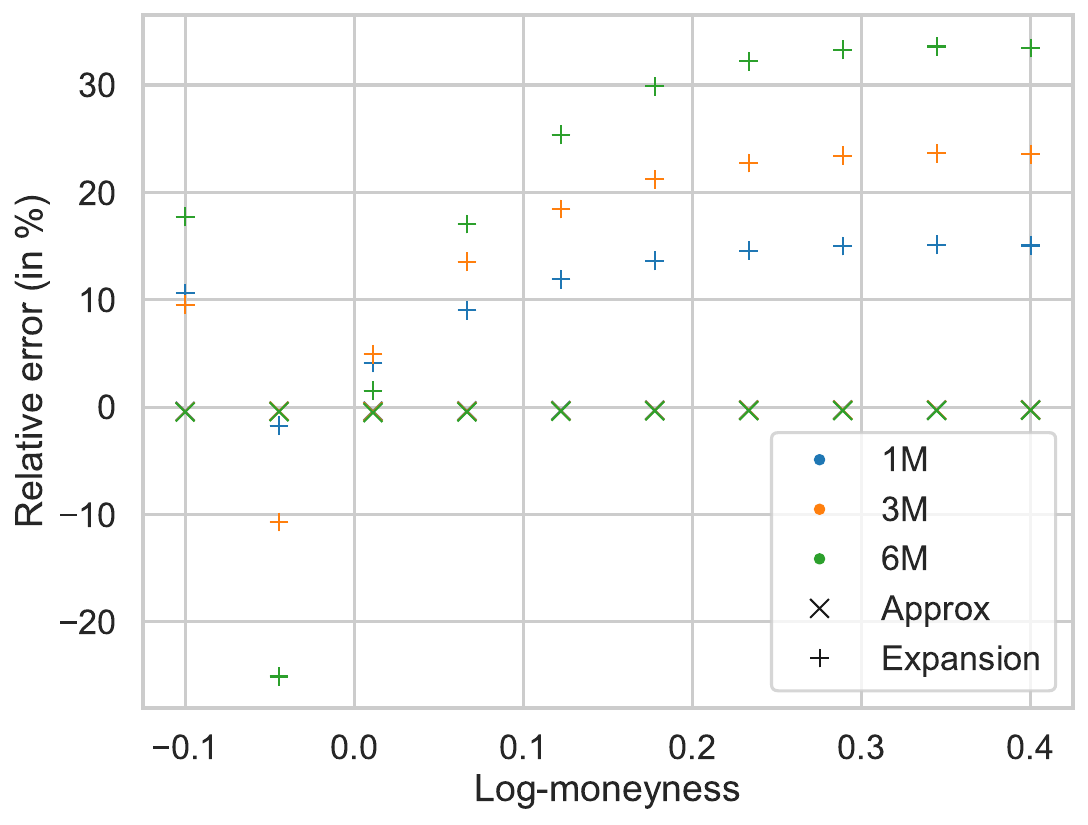}\\
    \hspace{-0.5cm}
    \includegraphics[width=0.5\linewidth]{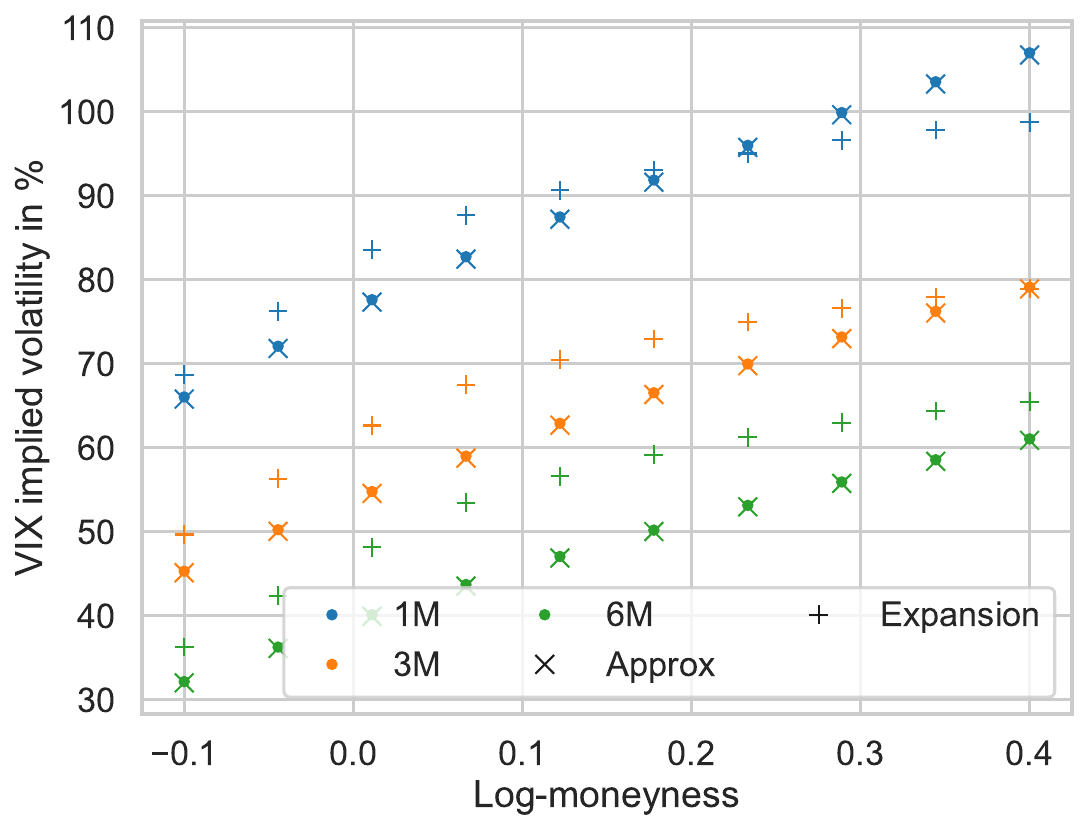}
    \includegraphics[width=0.5\linewidth]{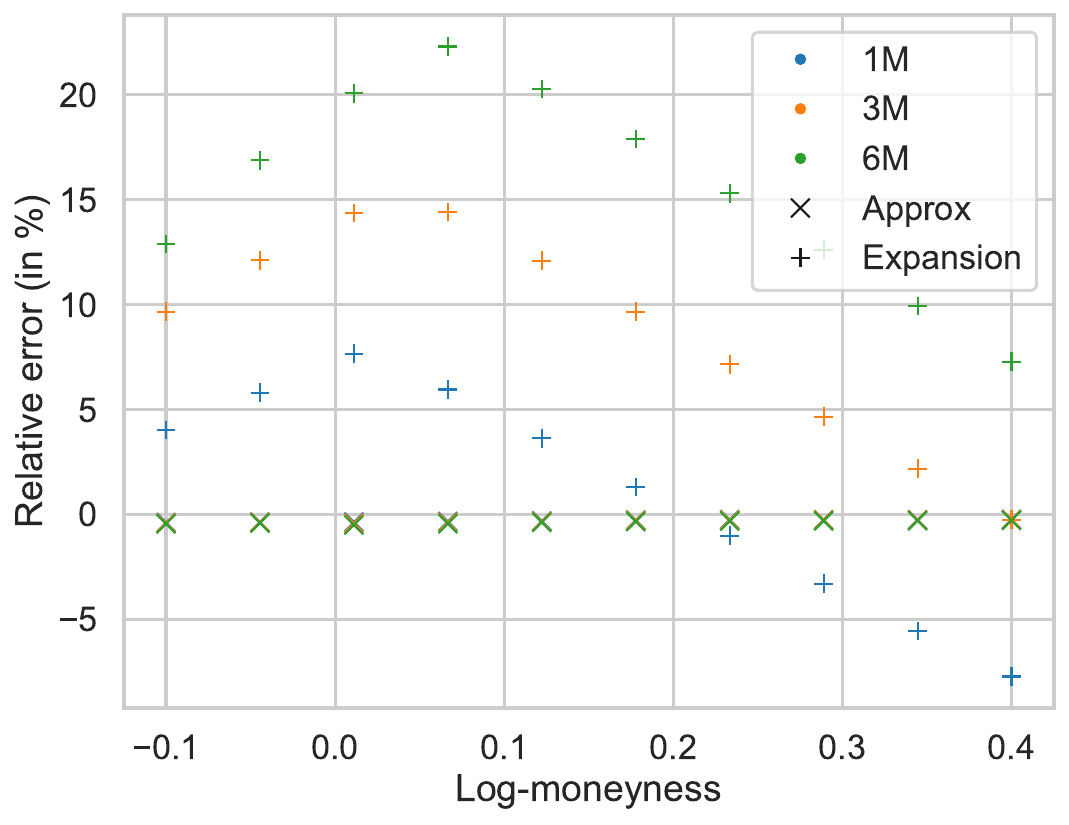}
    \caption{VIX smiles of Scenario 4 in the mixed rough Bergomi model. Top: Expansion around $\sigma_{P,2}$ (lower). Bottom: Expansion around $\sigma_{P,1}$ (higher).}
    \label{fig:mixed_case_rbergomi_2}
\end{figure}

\section{Calibration test}
\label{sec:calibration_test}

In this section, we perform a calibration test of the rough Bergomi and the mixed rough Bergomi models to market VIX smiles collected on 21 November 2025, using our implied volatility expansion formulas in \eqref{eq:vix_iv_proxy_single_kernel} and \eqref{eq:vix_iv_proxy_mixed_kernel}, respectively. 
Let us describe the calibration procedure for the mixed rough Bergomi model. The procedure for the rough Bergomi model follows similarly, with the reduction of parameters from $(\eta_1, \eta_2, \lambda, H, \xi_0^T)$ to $(\eta, H, \xi_0^T)$. We consider the four shortest maturities $(T_i)_{i=1,\dots, n}$ from the observation date, each associated with a market VIX futures price $(F_i)_{i=1,\dots,n}$. We assume that the initial forward variance $\xi_0^{T_i}$ is flat between $T_i$ and $T_{i+1}$, and introduce a term structure in the parameters $(\eta_1, \eta_2, H, \lambda)$ making them maturity-dependent and piecewise constant in $[T_i, T_{i+1}]$. We calibrate the model to the market data sequentially from the shortest to the longest maturity. The calibration procedure for each maturity $T_i$ has the following two steps:

\begin{itemize}
\item For a given choice of $(\eta_1^{T_i}, \eta_2^{T_i}, H, \lambda^{T_i})$, the initial forward variance curve $\xi_0^{T_i}$ is set as the unique solution to the $N$th-order Hermite expansion of the VIX futures price approximation in \eqref{eq:vix_fut_hermite} of Proposition~\ref{prop:vix_opt_price_hermite_expan}.
\item With the value found for $\xi_0^{T_i}$ in the first step, we compute the $L^2$ distance between the model implied volatility and the market implied volatility, and perform a minimization over $(\eta_1^{T_i}, \eta_2^{T_i}, H, \lambda^{T_i})$. We repeat the recursive procedure with updated $(\eta_1^{T_i}, \eta_2^{T_i}, H, \lambda^{T_i})$ until the minimum is obtained.
\end{itemize}
In our calibration test, we set $H = 0.1$, and assume that the implied volatilities have equal weights for all strikes and maturities while evaluating the $L^2$ distance used in the calibration procedure described above. We minimize this $L^2$ distance using \texttt{scipy.optimize.least\_squares} from the \texttt{scipy} library. In addition, we calibrate to the same market data using the same calibration procedure under the standard Bergomi and mixed standard Bergomi models, and detail the results in Appendix~\ref{sec:additional_calibration_tests}.

We first present the calibration result of the rough Bergomi model using the initial guess $(\eta, \xi_0) = (1.0, 0.04)$. The calibrated parameters are reported in Table \ref{tab:calibrated_params_rb}. We show the calibrated values of VIX futures prices using the $N$th-order Hermite expansion in \eqref{eq:vix_fut_hermite}, and the VIX smiles calibrated using our implied volatility expansion \eqref{eq:vix_iv_proxy_single_kernel} in Figure \ref{fig:vix_fut_rb} and \ref{fig:vix_iv_rb}, respectively. As discussed in Section \ref{sec:single_kernel}, the VIX smile generated by the rough Bergomi model is almost flat, which indicates that it is not appropriate for capturing the steep market VIX smiles.

\begin{table}[H]
\centering
\footnotesize
\captionsetup{font = footnotesize, skip = 5pt}
\caption{Term structure of the calibrated parameters $(\xi_0^T, \eta^T)$ in the rough Bergomi model.}
\begin{tabular}{lccccc}
\hline \addlinespace[1ex]
 & $T \in [\frac{1}{12}, \frac{2}{12}[$ & $T \in [\frac{2}{12}, \frac{3}{12}[$ & $T \in [\frac{3}{12}, \frac{4}{12}[$ & $T \in [\frac{4}{12}, \frac{5}{12}[$\\ \addlinespace[1ex]
\hline \addlinespace[1ex]
$\xi_0^T$  &  0.052799 & 0.059806 & 0.063893 & 0.065499 \\ \addlinespace[2ex]
$\eta^T$ & 0.891633 & 0.808221 & 0.787803 & 0.766176 \\ \addlinespace[1ex]
\hline
\end{tabular}
\label{tab:calibrated_params_rb}
\end{table}

\begin{figure}[H]
\centering
\hspace{-0.5cm}
\includegraphics[width=0.5\linewidth]{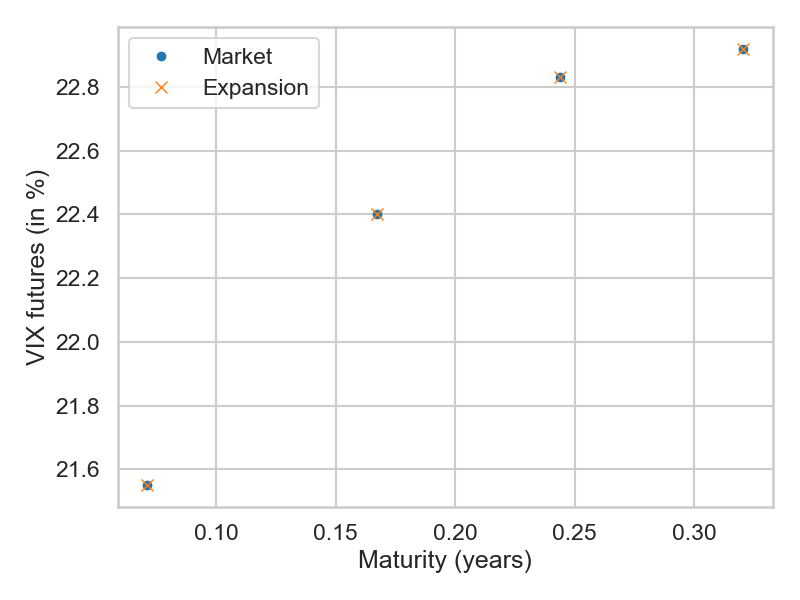}
\includegraphics[width=0.5\linewidth]{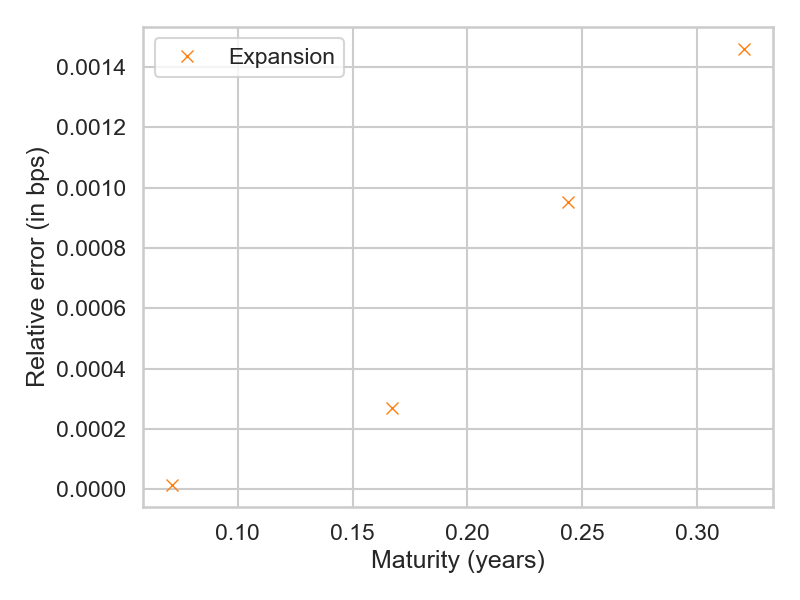}
\caption{Term structure of VIX futures calibrated with \eqref{eq:vix_fut_proxy_single_kernel} in the rough Bergomi model.}
\label{fig:vix_fut_rb}
\end{figure}

\begin{figure}[H]
\centering
\hspace{-0.5cm}
\includegraphics[width=0.5\linewidth]{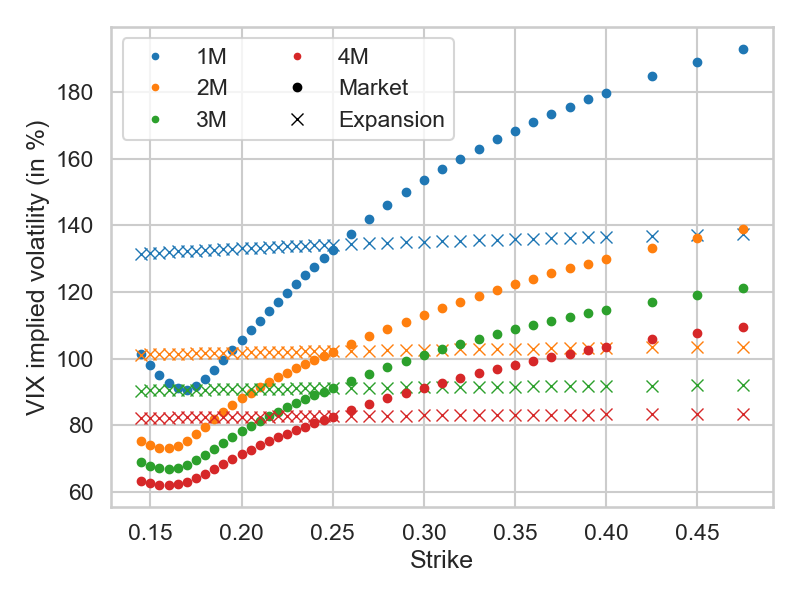}
\includegraphics[width=0.5\linewidth]{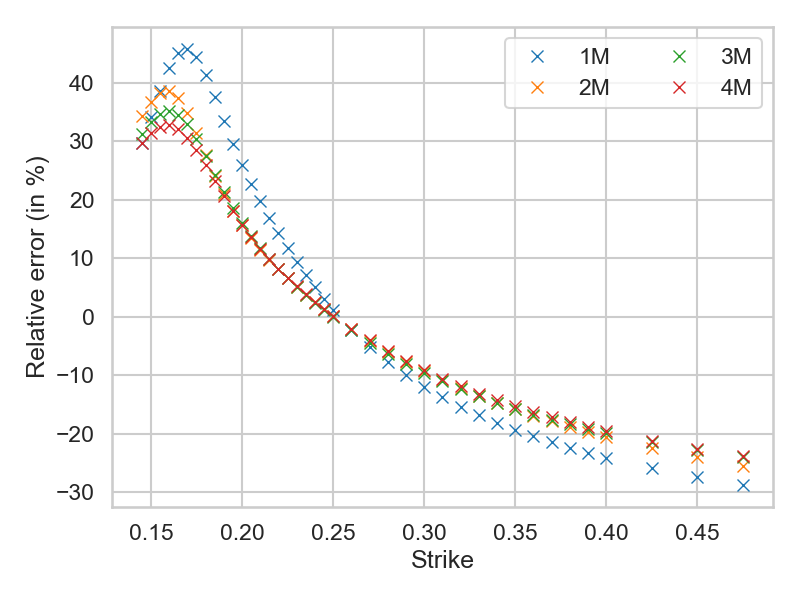}
\caption{Market VIX smiles calibrated using the VIX implied volatility expansion \eqref{eq:vix_iv_proxy_single_kernel} in the rough Bergomi model.}
\label{fig:vix_iv_rb}
\end{figure}
Thus, we next calibrate to the same market data as before using the mixed rough Bergomi model. The calibrated parameters obtained from the initial guess $(\eta_1, \eta_2, \lambda, \xi_0) = (1, 0.1, 0.5, 0.04)$ are reported in Table \ref{tab:calibrated_params_mrb}. The calibrated smiles reported in Figure \ref{fig:vix_iv_mrb} show that the implied volatility expansion formula \eqref{eq:vix_iv_proxy_mixed_kernel} captures the market smiles quite well. The parameter values reported in Table \ref{tab:calibrated_params_mrb} also show that the difference between the two volatilities $\sigma_{P,1}$ and $\sigma_{P,2}$ is closer to the value in parameter Scenario 3 used in Section \ref{sec:numerical_tests_mixed_kernel}. Thus, the performance of our implied volatility expansion in capturing the market smiles should be close to the performance reported for Scenario 3. We perform this check by using the calibrated parameters to compute the corresponding Monte Carlo price estimates and then find the implied volatility estimates (referred to as \textit{MC (cross-check)}). As can be seen from Figure \ref{fig:vix_iv_mrb_check}, there is some discrepancy between \textit{MC (cross-check)} and values obtained from the implied volatility expansion in \eqref{eq:vix_iv_proxy_mixed_kernel} for short maturity. However, this remains around 12.5\% relative error. Therefore, we conclude that our implied volatility expansion formulas return very good parameter estimates for the families of models considered when calibrated to market data. 

\begin{table}[H]
\centering
\footnotesize
\captionsetup{font = footnotesize, skip = 5pt}
\caption{Term structure for the calibrated parameters $(\xi_0^T, \eta_1^T, \eta_2^T, \lambda^T)$ in the mixed rough Bergomi model.}
\begin{tabular}{lccccc}
\hline \addlinespace[1ex]
 & $T \in [\frac{1}{12}, \frac{2}{12}[$ & $T \in [\frac{2}{12}, \frac{3}{12}[$ & $T \in [\frac{3}{12}, \frac{4}{12}[$ & $T \in [\frac{4}{12}, \frac{5}{12}[$\\ \addlinespace[1ex]
\hline \addlinespace[1ex]
$\xi_0^T$  & 0.053519 & 0.063174 & 0.069393 & 0.07242 \\ \addlinespace[2ex]
$\eta_1^T$ & 2.723135 & 2.091913 & 1.956434 & 1.89242  \\ \addlinespace[2ex]
$\eta_2^T$ & 0.503698 & 0.460136 & 0.447115 & 0.446803 \\ \addlinespace[2ex]
$\lambda^T$ & 0.228369 & 0.33932 & 0.390033 & 0.405827 \\ \addlinespace[1ex]
\hline
\end{tabular}
\label{tab:calibrated_params_mrb}
\end{table}

\begin{figure}[H]
\centering
\hspace{-0.5cm}
\includegraphics[width=0.5\linewidth]{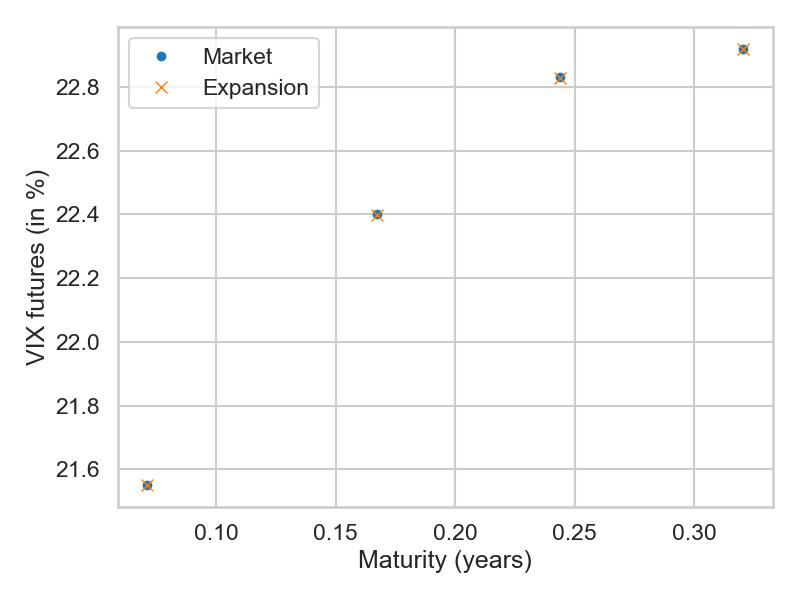}
\includegraphics[width=0.5\linewidth]{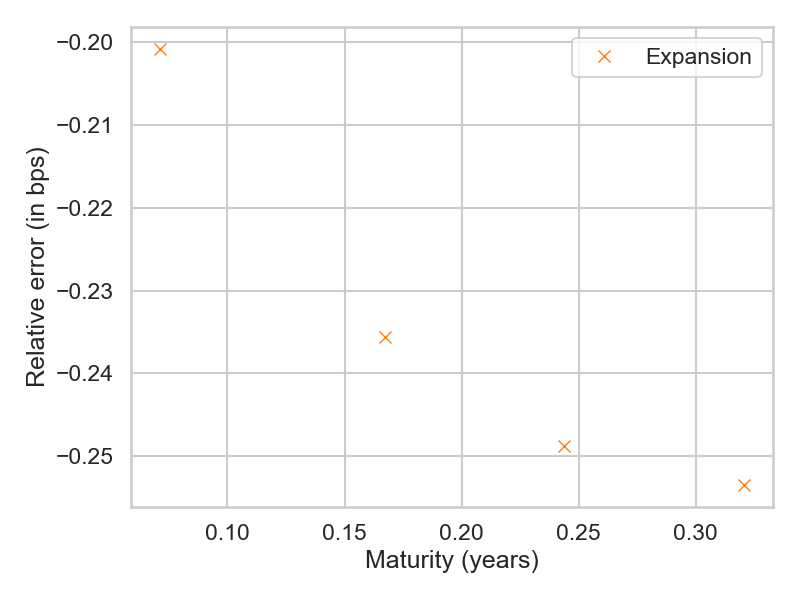}
\caption{Term structure of VIX futures calibrated with \eqref{eq:vix_fut_hermite} in the mixed rough Bergomi model.}
\label{fig:vix_fut_mrb}
\end{figure}

\begin{figure}[H]
\centering
\hspace{-0.5cm}
\includegraphics[width=0.5\linewidth]{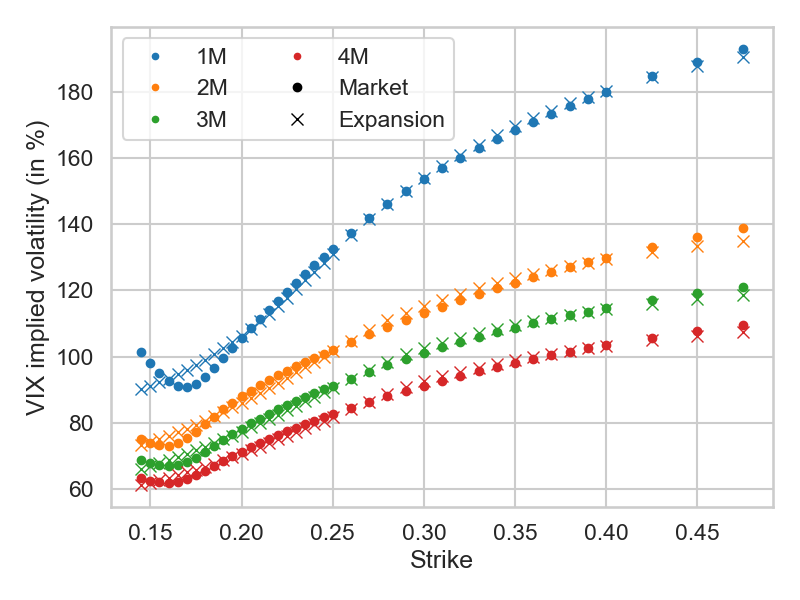}
\includegraphics[width=0.5\linewidth]{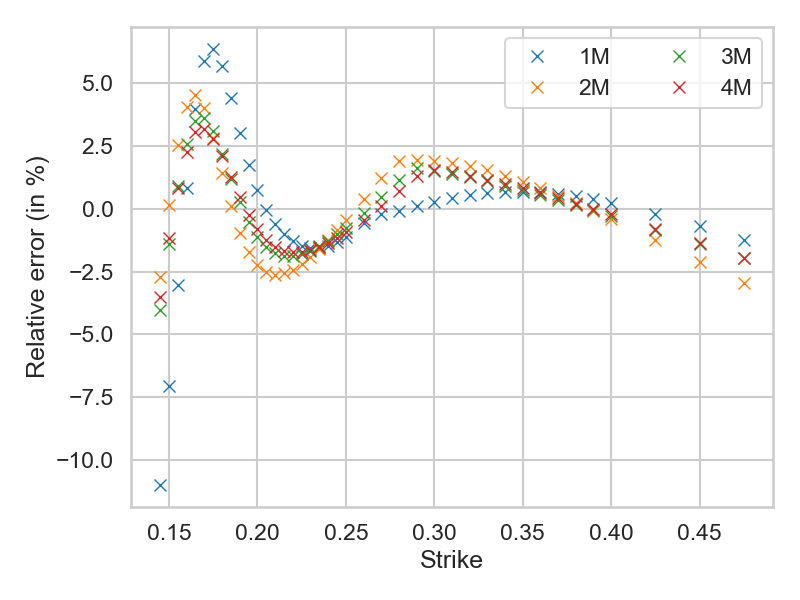}
\caption{Market VIX smiles calibrated using the VIX implied volatility expansion \eqref{eq:vix_iv_proxy_mixed_kernel} in the mixed rough Bergomi model.}
\label{fig:vix_iv_mrb}
\end{figure}

\begin{figure}[H]
\centering
\hspace{-0.5cm}
\includegraphics[width=0.5\linewidth]{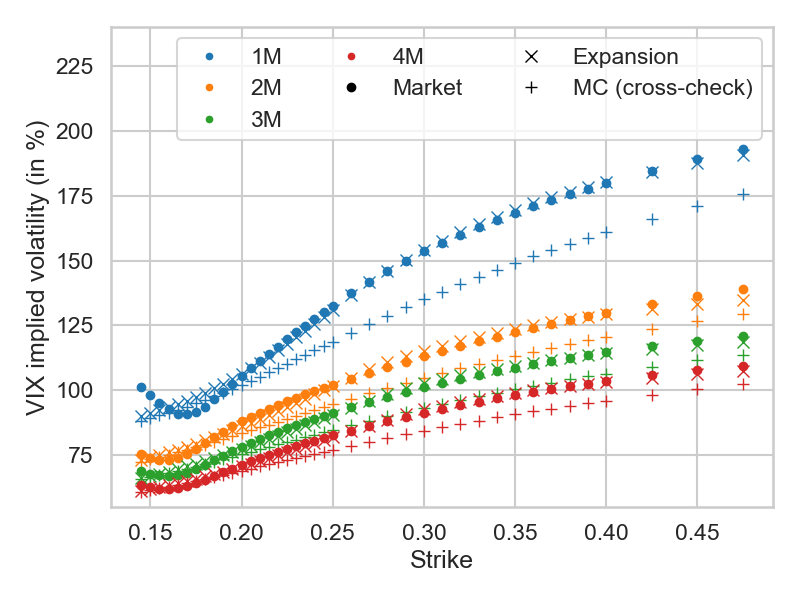}
\includegraphics[width=0.5\linewidth]{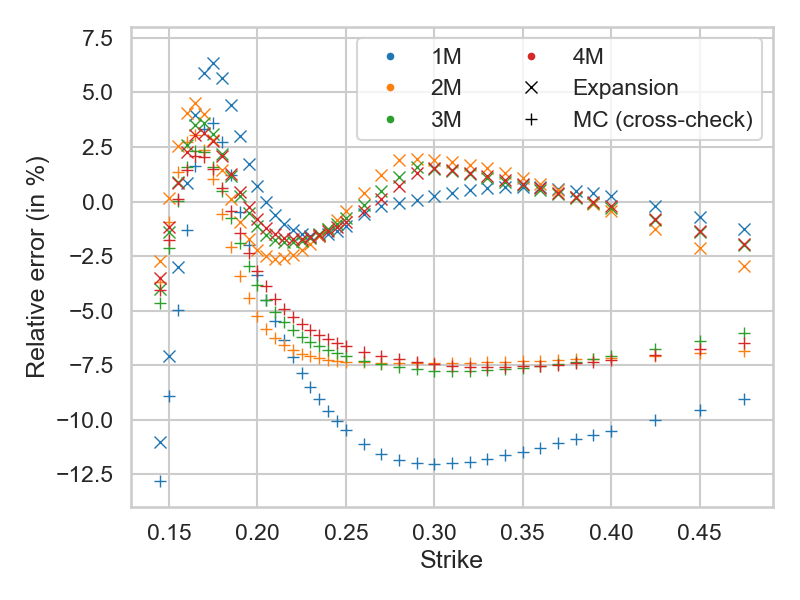}
\caption{VIX smiles implied by the weak price approximation in \eqref{eq:vix_opt_price_approx_mixed_kernel} under the calibrated parameters of Table \ref{tab:calibrated_params_mrb}.}
\label{fig:vix_iv_mrb_check}
\end{figure}

\section{Conclusion}
\label{sec:conclusion}

In this work, we derived explicit VIX implied volatility expansions for single-kernel and mixed-kernel forward variance models, building on weak VIX option price approximations. In the single-kernel case, the formulas are highly accurate and provide substantial speed-ups over numerical inversion. In the mixed-kernel case, the Hermite-based expansion captures market VIX skews more effectively, provided the two proxy volatilities are not too far apart. In more extreme parameter regimes, the expansion may lose accuracy, but it still provides reliable initial guesses for standard root-finding procedures and therefore remains useful for fast calibration. Derivation of theoretical error bounds for the implied volatility expansion formulas in both single- and mixed-kernel model specifications has been left as a direction for future research.

\section*{Disclosure statement}
No potential conflict of interest was reported by the authors.

\printbibliography
\appendix

\section{Proof of Proposition \ref{prop:vix_opt_price_hermite_expan}}
\label{sec:proof_of_vix_opt_price_hermite_expan}

\begin{proof}
By Remark \ref{re:rewrite_vix_opt_price_approx_mixed_kernel}, the $N$th-order Hermite approximations of the corresponding correction terms in the VIX call price approximation \eqref{eq:vix_opt_price_approx_mixed_kernel} under the mixed standard and mixed rough Bergomi models are defined, for $i \in \{1,2,3\}$ and $j\in \{1,2\}$, as follows
\eqstar{
C^j_{P,i,1,N} := \partial_{\mu_{P,1}} \Bigl( 
\partial_{\mu_{P,1}} + \frac{\sigma_{P,2}}{\sigma_{P,1}} \partial_{\mu_{P,2}}
\Bigr)^{i-1}
C^j_{P,0,N},
\quad
C^j_{P,i,2,N} := \partial_{\mu_{P,2}} \Bigl(
\partial_{\mu_{P,2}} + \frac{\sigma_{P,1}}{\sigma_{P,2}} \partial_{\mu_{P,1}}
\Bigr)^{i-1}
C^j_{P,0,N},
}
where $C^j_{P,0,N}$ is given in \eqref{eq:vix_call_main_term_hermite_expan}.
By the definition of $\omega_{n,j}$ in \eqref{eq:weight_func}, for any $\alpha_1, \alpha_2 \in \Nb^+$, with $\alpha_1 + \alpha_2 \le 3$, we have
\eqstar{
\partial_{\mu_{P,1}}^{\alpha_1} \partial_{\mu_{P,2}}^{\alpha_2} \omega_{n,j} 
= \frac{1}{n!} \int_{-\infty}^\infty \bigl(
\partial_{\mu_{P,1}}^{\alpha_1} \partial_{\mu_{P,2}}^{\alpha_2} g^j(y) \bigr)
\He_n(y) \phi(y) \dd y.
}
Direct differentiation yields, for $j\in \{1,2\}$,
\eqstar{
\partial_{\mu_{P,1}} g^j(y)
&= \frac{(-1)^j}{2} (g^j(y) - g^j(y)^{-1} )\\
\partial_{\mu_{P,1}}^2 g^j(y)
&= \frac{1}{4} (g^j(y) - g^j(y)^{-3} ),\\
\partial_{\mu_{P,1}}^3 g^j(y) 
&= \frac{(-1)^j}{8} (g^j(y) - g^j(y)^{-1} + 3 g^j(y)^{-3} - 3 g^j(y)^{-5}),
}
together with the relations
\eqstar{
\partial_{\mu_{P,1}} g^j(y)
&= -\, \partial_{\mu_{P,2}} g^j(y),\\
\partial_{\mu_{P,1}}^2 g^j(y)  
&= -\, \partial_{\mu_{P,1}} \partial_{\mu_{P,2}} g^j(y)   
= \partial_{\mu_{P,2}}^2 g^j(y),\\
\partial_{\mu_{P,1}}^3 g^j(y) 
&= -\, \partial_{\mu_{P,1}}^2 \partial_{\mu_{P,2}} g^j(y)
= \partial_{\mu_{P,1}} \partial_{\mu_{P,2}}^2 g^j(y) 
= -\, \partial_{\mu_{P,2}}^3 g^j(y).
}
By substituting these identities into the definitions of $C^j_{P,i,1,N}$ and $C^j_{P,i,2,N}$, we obtain, for $i \in \{1,2,3\}$, the following expressions
{\small
\begin{align}
C^1_{P,1,1,N} 
&= \sqrt{\lambda_1}\, \ee^{x_{P,1}} \Bigl(
\frac{1}{2} I_{0,N} \bigl(A - \tfrac{\sigma_{P,1}}{2} \bigr)
+ I_{1,N} \bigl(A - \tfrac{\sigma_{P,1}}{2} \bigr)
\Bigr), \\
C^1_{P,1,2,N} 
&= - \sqrt{\lambda_1}\, \ee^{x_{P,1}} 
I_{1,N} \bigl(A - \tfrac{\sigma_{P,1}}{2} \bigr),\\
C^1_{P,2,1,N} 
&= \sqrt{\lambda_1}\, \ee^{x_{P,1}} \biggl(
\frac{1}{4} I_{0,N} \bigl(A - \tfrac{\sigma_{P,1}}{2} \bigr)
+ \Bigl(1 - \frac{\sigma_{P,2}}{2 \sigma_{P,1}} \Bigr)
I_{1,N} \bigl(A - \tfrac{\sigma_{P,1}}{2} \bigr)
+ \Bigl(1 - \frac{\sigma_{P,2}}{\sigma_{P,1}} \Bigr)
I_{2,N} \bigl(A - \tfrac{\sigma_{P,1}}{2} \bigr)
\biggr),\\
C^1_{P,2,2,N} 
&= \sqrt{\lambda_1}\, \ee^{x_{P,1}} \biggl( 
\Bigl( 1- \frac{\sigma_{P,1}}{\sigma_{P,2}} \Bigr)
I_{2,N} \bigl(A - \tfrac{\sigma_{P,1}}{2} \bigr) 
- \frac{\sigma_{P,1}}{2 \sigma_{P,2}} 
I_{1,N} \bigl(A - \tfrac{\sigma_{P,1}}{2} \bigr) 
\biggr),\\
C^1_{P,3,1,N} 
&= \sqrt{\lambda_1}\, \ee^{x_{P,1}} \biggl( 
\frac{1}{8}  I_{0,N} \bigl(A - \tfrac{\sigma_{P,1}}{2} \bigr)
+ \Bigl(\frac{3}{4} - \frac{\sigma_{P,2}}{2 \sigma_{P,1}} \Bigr)
I_{1,N} \bigl(A - \tfrac{\sigma_{P,1}}{2} \bigr)\\
&\phantom{:\sqrt{\lambda_1}\, \ee^{x_{P,1}} \biggl( }\ + 
\Bigl( \frac{3}{2} - \frac{2 \sigma_{P,2}}{\sigma_{P,1}} + \frac{\sigma_{P,2}^2}{2 \sigma_{P,1}^2} \Bigr)
I_{2,N} \bigl(A - \tfrac{\sigma_{P,1}}{2} \bigr)
+ \Bigl(1 - \frac{2 \sigma_{P,2}}{\sigma_{P,1}} + \frac{\sigma_{P,2}^2}{\sigma_{P,1}^2} \Bigr)
I_{3,N} \bigl(A - \tfrac{\sigma_{P,1}}{2} \bigr)
\biggr),\\
C^1_{P,3,2,N}
&= \sqrt{\lambda_1}\, \ee^{x_{P,1}} \biggl(
\Bigl(-1 + \frac{2 \sigma_{P,1}}{\sigma_{P,2}} - \frac{\sigma_{P,1}^2}{\sigma_{P,2}^2} \Bigr)
I_{3,N} \bigl(A - \tfrac{\sigma_{P,1}}{2} \bigr) 
+ \Bigl(\frac{\sigma_{P,1}}{\sigma_{P,2}} - \frac{\sigma_{P,1}^2}{\sigma_{P,2}^2} \Bigr) 
I_{2,N} \bigl(A - \tfrac{\sigma_{P,1}}{2} \bigr) \\
&\phantom{:\sqrt{\lambda_1}\, \ee^{x_{P,1}} \biggl( }\ - 
\frac{\sigma_{P,1}^2}{4 \sigma_{P,2}^2} I_{1,N} \bigl(A - \tfrac{\sigma_{P,1}}{2} \bigr)
\biggr).
\end{align}}
and
{\small
\begin{align}
C^2_{P,1,1,N} 
&= \sqrt{\lambda_2}\, \ee^{x_{P,2}} 
I_{1,N} \bigl(A - \tfrac{\sigma_{P,2}}{2} \bigr), \\
C^2_{P,1,2,N} 
&= \sqrt{\lambda_2}\, \ee^{x_{P,2}} \Bigl(
\frac{1}{2} I_{0,N} \bigl(A - \tfrac{\sigma_{P,2}}{2} \bigr) 
- I_{1,N} \bigl(A - \tfrac{\sigma_{P,2}}{2} \bigr)
\Bigr),\\
C^2_{P,2,1,N} 
&= \sqrt{\lambda_2}\, \ee^{x_{P,2}} \biggl( 
\Bigl( 1- \frac{\sigma_{P,2}}{\sigma_{P,1}} \Bigr)
I_{2,N} \bigl(A - \tfrac{\sigma_{P,2}}{2} \bigr) 
+ \frac{\sigma_{P,2}}{2 \sigma_{P,1}} 
I_{1,N} \bigl(A - \tfrac{\sigma_{P,2}}{2} \bigr) 
\biggr),\\
C^2_{P,2,2,N} 
&= \sqrt{\lambda_2}\, \ee^{x_{P,2}} \biggl(
\frac{1}{4} I_{0,N} \bigl(A - \tfrac{\sigma_{P,2}}{2} \bigr)
+ \Bigl(\frac{\sigma_{P,1}}{2 \sigma_{P,2}} - 1 \Bigr)
I_{1,N} \bigl(A - \tfrac{\sigma_{P,2}}{2} \bigr)
+ \Bigl(1 - \frac{\sigma_{P,1}}{\sigma_{P,2}} \Bigr)
I_{2,N} \bigl(A - \tfrac{\sigma_{P,2}}{2} \bigr)
\biggr),\\
C^2_{P,3,1,N} 
&= \sqrt{\lambda_2}\, \ee^{x_{P,2}} \biggl(
\Bigl(1 - \frac{2 \sigma_{P,2}}{\sigma_{P,1}} + \frac{\sigma_{P,2}^2}{\sigma_{P,1}^2} \Bigr)
I_{3,N} \bigl(A - \tfrac{\sigma_{P,2}}{2} \bigr) 
+ \Bigl(\frac{\sigma_{P,2}}{\sigma_{P,1}} - \frac{\sigma_{P,2}^2}{\sigma_{P,1}^2} \Bigr) 
I_{2,N} \bigl(A - \tfrac{\sigma_{P,2}}{2} \bigr) \\
&\phantom{:\sqrt{\lambda_2}\, \ee^{x_{P,2}} \biggl(}\ +
\frac{\sigma_{P,2}^2}{4 \sigma_{P,1}^2} I_{1,N} \bigl(A - \tfrac{\sigma_{P,2}}{2} \bigr)
\biggr),\\
C^2_{P,3,2,N} 
&= \sqrt{\lambda_2}\, \ee^{x_{P,2}} \biggl( 
\frac{1}{8}  I_{0,N} \bigl(A - \tfrac{\sigma_{P,2}}{2} \bigr)
+ \Bigl( \frac{\sigma_{P,1}}{2 \sigma_{P,2}}- \frac{3}{4} \Bigr)
I_{1,N} \bigl(A - \tfrac{\sigma_{P,2}}{2} \bigr)\\
&\phantom{:\sqrt{\lambda_2}\, \ee^{x_{P,2}} \biggl(}\ +
\Bigl( \frac{3}{2} - \frac{2 \sigma_{P,1}}{\sigma_{P,2}} + \frac{\sigma_{P,1}^2}{2 \sigma_{P,2}^2} \Bigr)
I_{2,N} \bigl(A - \tfrac{\sigma_{P,2}}{2} \bigr)
- \Bigl(1 - \frac{2 \sigma_{P,1}}{\sigma_{P,2}} + \frac{\sigma_{P,1}^2}{\sigma_{P,2}^2} \Bigr)
I_{3,N} \bigl(A - \tfrac{\sigma_{P,2}}{2} \bigr) 
\biggr).
\end{align}}

Combining all the above terms and the coefficients $(\gamma_{i,j})_{i\in\{1,2,3\}, \, j\in\{1,2\}}$, we obtain the $N$th-order Hermite approximation of a VIX call price given as in \eqref{eq:vix_call_hermite_expan}. Following the same process, we can also derive the $N$th-order Hermite approximation for the VIX put price in \eqref{eq:vix_put_hermite_expan} and VIX futures price in \eqref{eq:vix_fut_hermite}.
\end{proof}

\section{Moment-matching log-normal approximation}
\label{sec:vix_iv_mixed_kernel_moment_matched}

In the mixed-kernel models, as introduced in Section \ref{sec:price_approx_mixed_kernel}, the squared VIX index $\VIX_T^2$ is approximated by a weighted sum of two log-normal random variables given as follows
\eqstar{
\VIX_{T,P}^2 = \sum_{i=1}^{2} \lambda_i \VIX_{T,P,i}^2,
}
where $\ln ( \VIX_{T,P,j}^2 ) \sim \Nc (\mu_{P, j}, \sigma_{P, j}^2)$, $j \in \{1, 2\}$, with $\mu_{P,j}$ and $\sigma_{P, j}^2$ defined in \eqref{eq:mean_mixed} and \eqref{eq:variance_mixed}. Since the two proxy components in the mixed standard and mixed rough Bergomi models are driven by the same Gaussian factor, the proxy $\VIX_{T,P}^2$ can be rewritten as
\eqstar{
\VIX_{T,P}^2 
= \lambda \ee^{\mu_{P,1} + \sigma_{P,1} Z} + (1 - \lambda) \ee^{\mu_{P,2} + \sigma_{P,2} Z}, 
\qquad
Z \sim \Nc(0,1).
}
A natural approximation is to match the first two moments of $\VIX_{T,P}^2$ with those of a single log-normal random variable $Y = \ee^{\mu_Y + \sigma_Y Z_Y}$. Letting $m_1$ and $m_2$ denote the first and second moments of $\VIX_{T,P}^2$, respectively, we get 
\eqstar{
m_1 &:= \Eb [\VIX_{T,P}^2 ] 
= \sum_{j=1}^2 \lambda_j\, \ee^{\mu_{P,j} + \frac{\sigma_{P,j}^2}{2}}, \\
m_2 &:= \Eb [ (\VIX_{T,P}^2 )^2 ] 
= \lambda_1^2\, \ee^{2\mu_{P,1} + 2\sigma_{P,1}^2}
+ \lambda_2^2\, \ee^{2\mu_{P,2} + 2\sigma_{P,2}^2} 
+ 2\lambda_1 \lambda_2\, \ee^{\mu_{P,1} + \mu_{P,2} + \frac{(\sigma_{P,1} + \sigma_{P,2})^2}{2}}.
}
Matching the moments yields
\eqstar{
\sigma_Y = \sqrt{\ln \Bigl(\frac{m_2}{m_1^2} \Bigr)}, 
\qquad
\mu_Y = \ln(m_1) - \frac{1}{2} \sigma_Y^2.
}
Using the expressions for $\mu_P$ and $\sigma_P$ in \eqref{eq:mean} and \eqref{eq:variance}, respectively, we obtain
\eqstar{
\frac{2 (\ln \VIXsqExp - \mu_Y )}{\int_0^T (\frac{1}{\Delta} \int_{T}^{T+\Delta} \frac{\xi_0^u}{\VIXsqExp} K_0^u(t)^2 \du ) \dt} = 
\frac{\sigma_Y^2}
{\int_0^T ( \frac{1}{\Delta} \int_{T}^{T+\Delta} \frac{\xi_0^u}{\VIXsqExp} K_0^u(t) \du)^2 \dt},
}
where $K_0^u(t)=\ee^{-\kappa(u-t)}$ and $K_0^u(t)=(u-t)^{H-\frac{1}{2}}$ under the standard and rough Bergomi models, respectively. Since this equation cannot be solved explicitly for $\kappa$ or $H$, even under the assumption of a flat initial forward variance curve, we determine the value of $\kappa$ or $H$ numerically by a root-finding procedure. We then compute the square root on the left-hand side directly to obtain the corresponding value of $\omega$ or $\eta$. Once the parameters in the kernel function are determined, we compute the coefficients $(\gamma_i)_{i\in\{1,2,3\}}$ in \eqref{eq:coefficients} and apply the implied volatility expansion in \eqref{eq:vix_iv_proxy_single_kernel}, thereby obtaining an approximation of the VIX implied volatility in the mixed standard and mixed rough Bergomi models.

To check the accuracy of the VIX implied volatility expansion formula derived from this procedure (referred to as \textit{Expansion (match)}), we consider three different VIX maturities $T \in \{1, 3, 6\ \text{months}\}$, and assume the initial forward variance is constant with $\xi_0^u = \xi_0 = 0.24^2$, for all $u \in [T, T+\Delta]$, in the mixed standard and mixed rough Bergomi models. The other parameter values are given in Tables \ref{tab:parameters_order_mb} and \ref{tab:parameters_order_mrb}. From the results reported in Figure \ref{fig:mixed_case_bergomi_1_mm} -- \ref{fig:mixed_case_rbergomi_2_mm}, we can see that the VIX smiles obtained via this method (\textit{Expansion (match)}) are almost flat and inaccurate, with signed relative errors exceeding 50\% for Scenarios 1, 2, and 4. Even in the best-performing case, Scenario 3, the relative errors are more than 10\%. The main issue in this approach is that the second moment $m_2$ is disproportionately large compared to the squared first moment $m_1^2$, which forces an excessively high matched log-variance and leads to a distorted moment-matched approximation.

\begin{figure}[H]
    \centering
    \hspace{-0.5cm}
    \includegraphics[width=0.5\linewidth]{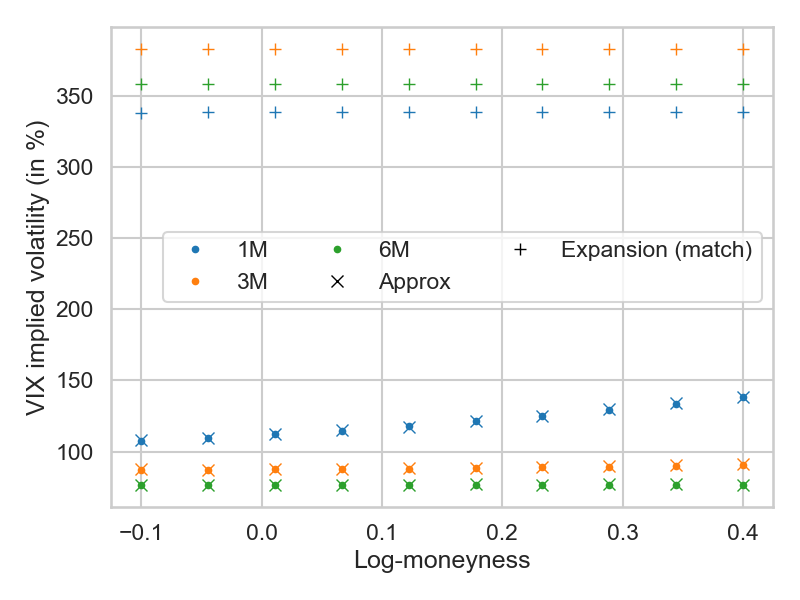}
    \includegraphics[width=0.5\linewidth]{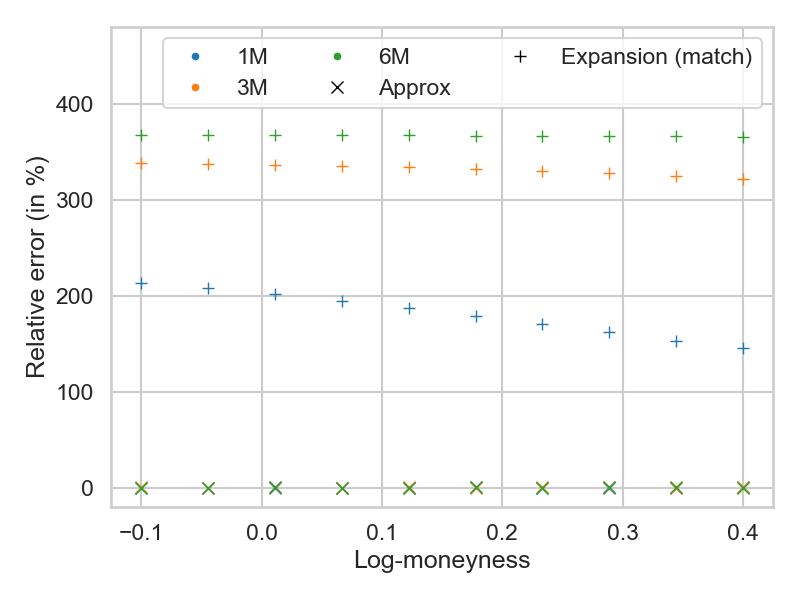}
    \caption{VIX smiles of Scenario 1 in the mixed standard Bergomi model.}
    \label{fig:mixed_case_bergomi_1_mm}
\end{figure}

\begin{figure}[H]
    \centering
    \hspace{-0.5cm}
    \includegraphics[width=0.5\linewidth]{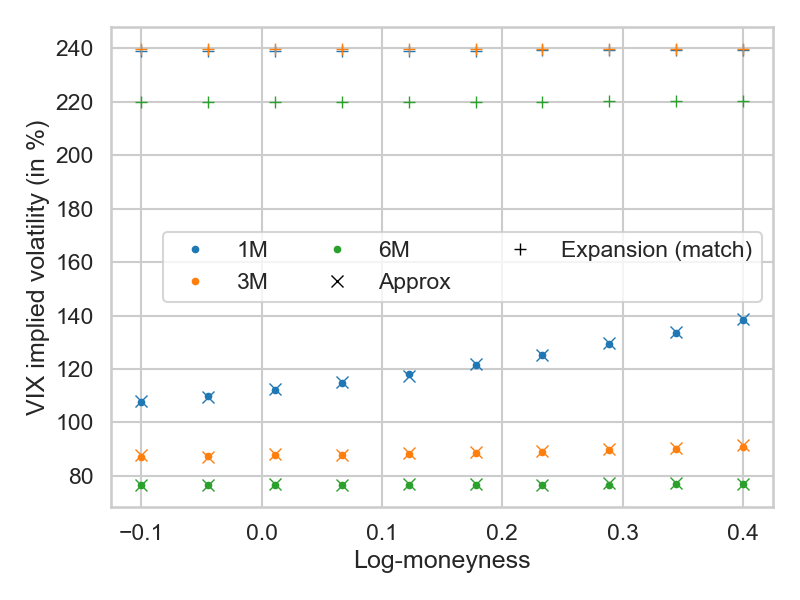}
    \includegraphics[width=0.5\linewidth]{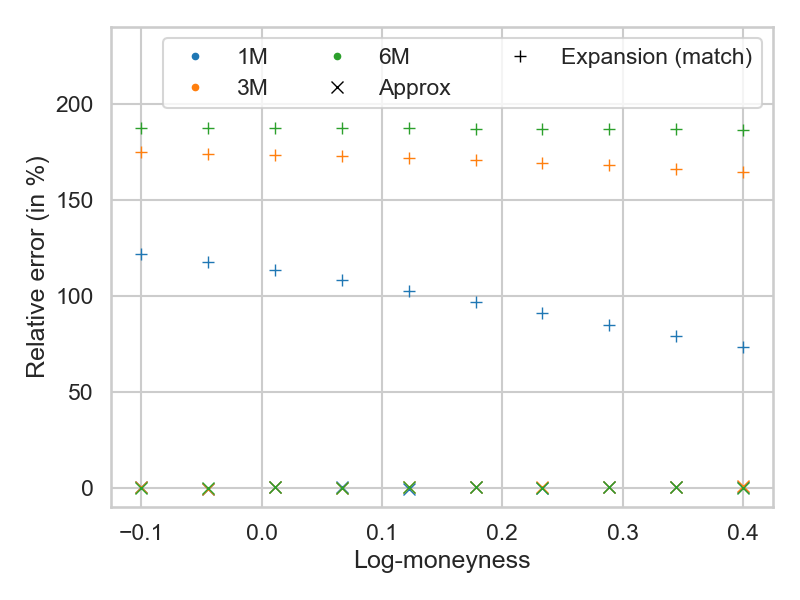}
    \caption{VIX smiles of Scenario 2 in the mixed standard Bergomi model.}
    \label{fig:mixed_case_bergomi_2_mm}
\end{figure}

\begin{figure}[H]
    \centering
    \hspace{-0.5cm}
    \includegraphics[width=0.5\linewidth]{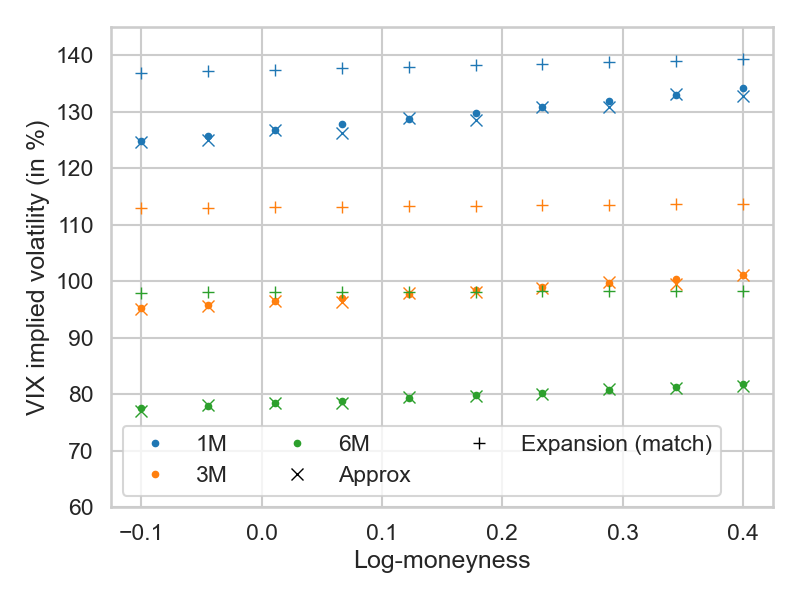}
    \includegraphics[width=0.5\linewidth]{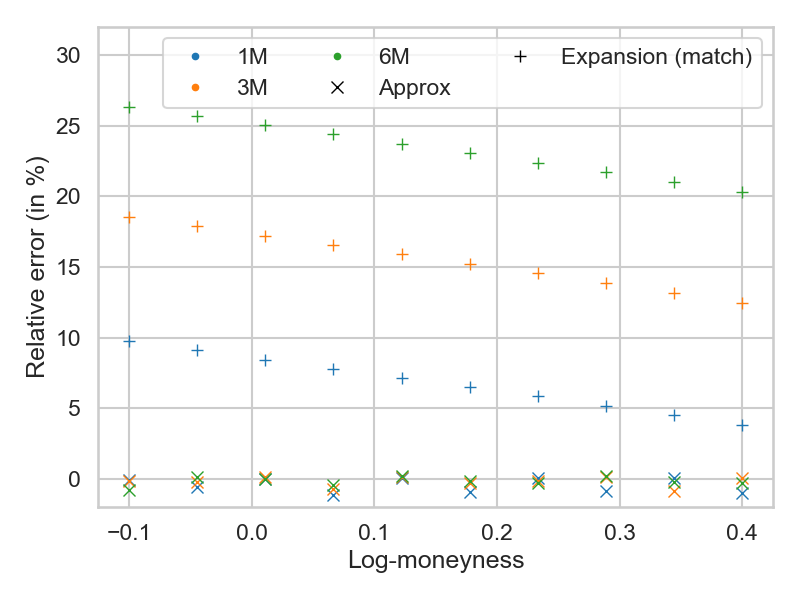}
    \caption{VIX smiles of Scenario 3 in the mixed rough Bergomi model.}
    \label{fig:mixed_case_rbergomi_1_mm}
\end{figure}

\begin{figure}[H]
    \centering
    \hspace{-0.5cm}
    \includegraphics[width=0.5\linewidth]{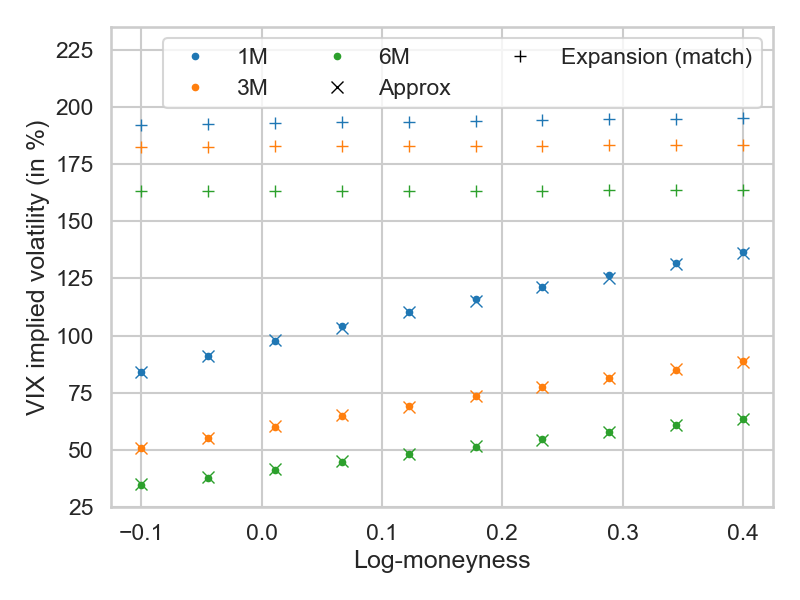}
    \includegraphics[width=0.5\linewidth]{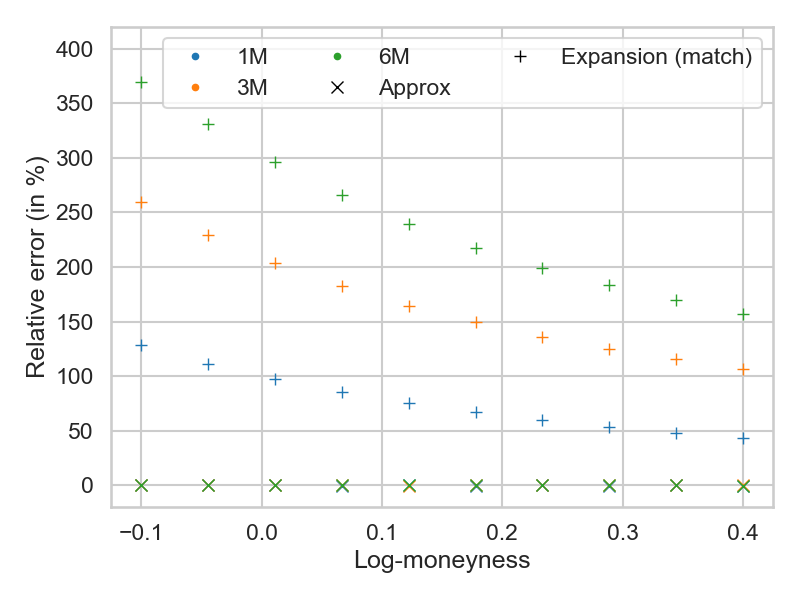}
    \caption{VIX smiles of Scenario 4 in the mixed rough Bergomi model.}
    \label{fig:mixed_case_rbergomi_2_mm}
\end{figure}

\section{Additional numerical results}
\label{sec:additional_numerical_results}

In this section, we first present the numerical results of the $N$th-order Hermite expansion for the VIX call price in Proposition~\ref{prop:vix_opt_price_hermite_expan}, and then report the accuracy of the implied volatility expansion given in Remark~\ref{re:extended_vix_iv_mixed_kernel}. 
For both numerical tests, we consider the parameter scenarios as given in Tables \ref{tab:parameters_order_mb} and \ref{tab:parameters_order_mrb} under both the mixed standard and mixed rough Bergomi models, with maturities of 1, 3, and 6 months. Moreover, we assume the initial forward variance is flat, that is, $\xi_0 = \xi_0^u = 0.24^2$, for all $u \in [T, T+\Delta]$, and $\kappa = 1$ (resp. $H=0.1$) in the mixed standard Bergomi (resp. rough Bergomi) model.

\subsection{Option price}
\label{sec:numerical_tests_vix_opt_mixed_kernel}

For the mixed standard Bergomi model, the reference VIX call prices are computed using the two-dimensional quadrature with $120$ nodes in both the time and space dimensions, whereas, for the mixed rough Bergomi model, the reference VIX call prices are computed using $10^6$ Monte Carlo samples and $300$ discretization points. While computing the $N$th-order Hermite expansion of the VIX call price in \eqref{eq:vix_call_hermite_expan}, we factor out the lower value between $x_{P,1}$ and $x_{P,2}$. Moreover, we compare the performance of the weak option price approximation in \eqref{eq:vix_opt_price_approx_mixed_kernel} (referred to as \textit{Approx}) computed using a one-dimensional Gauss-Hermite quadrature with 120 nodes (cf. \cite[Section 3.2.1]{bourgey_weak_2023}). We consider 10 evenly spaced strike values $\ee^{k}$ ranging from 0.1 to 0.4, and present the results in Figure \ref{fig:mixed_case_bergomi_call_1}--\ref{fig:mixed_case_rbergomi_call_2}. The reported results show that the $N$th-order Hermite expansion for the VIX call price is quite accurate with respect to the reference price, with relative errors below 8\%. On the other hand, the one-dimensional Gauss-Hermite quadrature for the weak price approximation has similar accuracy, except in Scenario 4, where the accuracy is higher than our $N$th-order Hermite expansion.

\begin{figure}[H]
    \centering
    \hspace{-0.5cm}
    \includegraphics[width=0.5\linewidth]{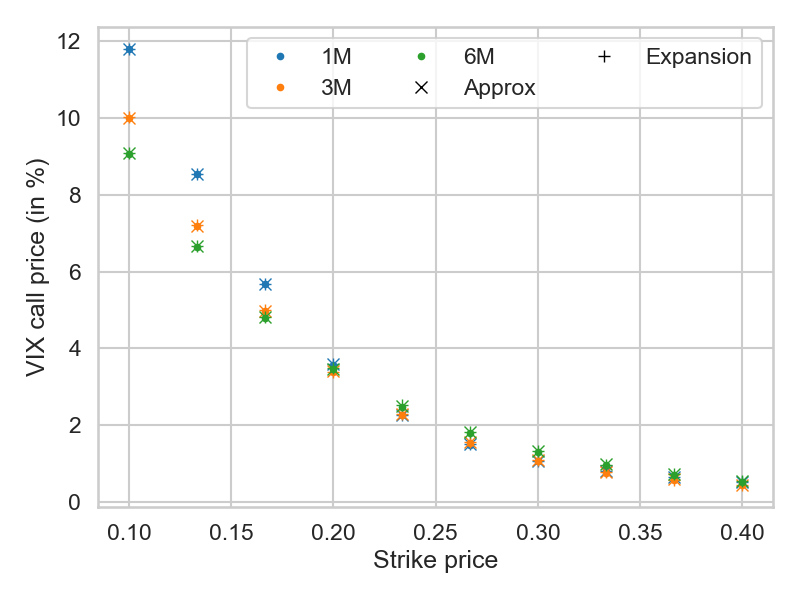}
    \includegraphics[width=0.5\linewidth]{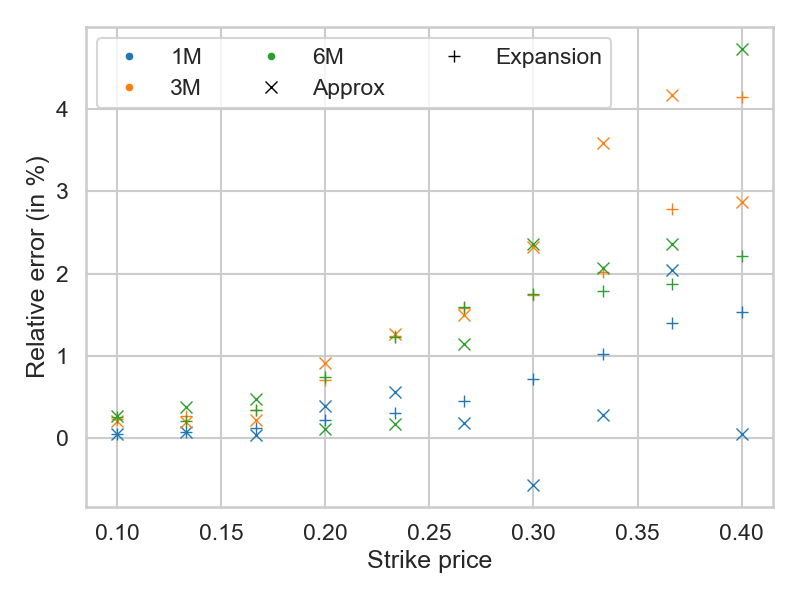}
    \caption{VIX call prices of Scenario 1 in the mixed standard Bergomi model.}
    \label{fig:mixed_case_bergomi_call_1}
\end{figure}

\begin{figure}[H]
    \centering
    \hspace{-0.5cm}
    \includegraphics[width=0.5\linewidth]{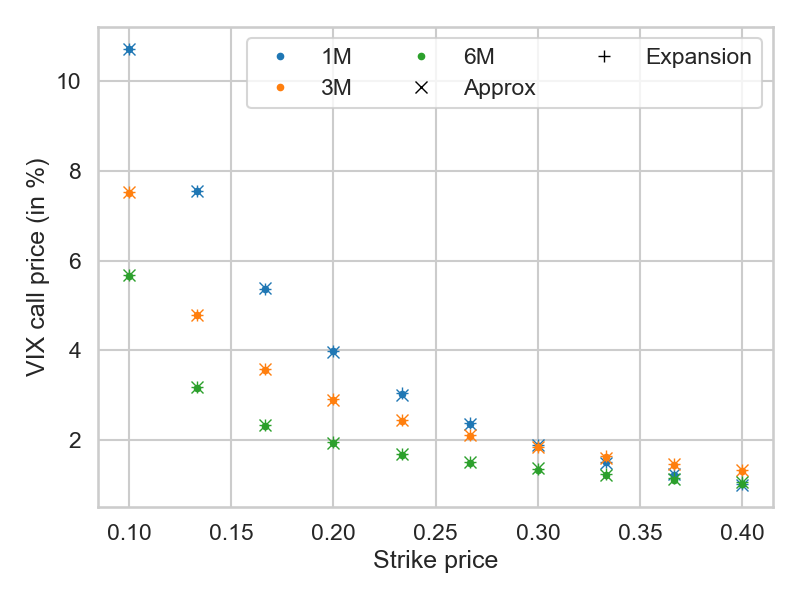}
    \includegraphics[width=0.5\linewidth]{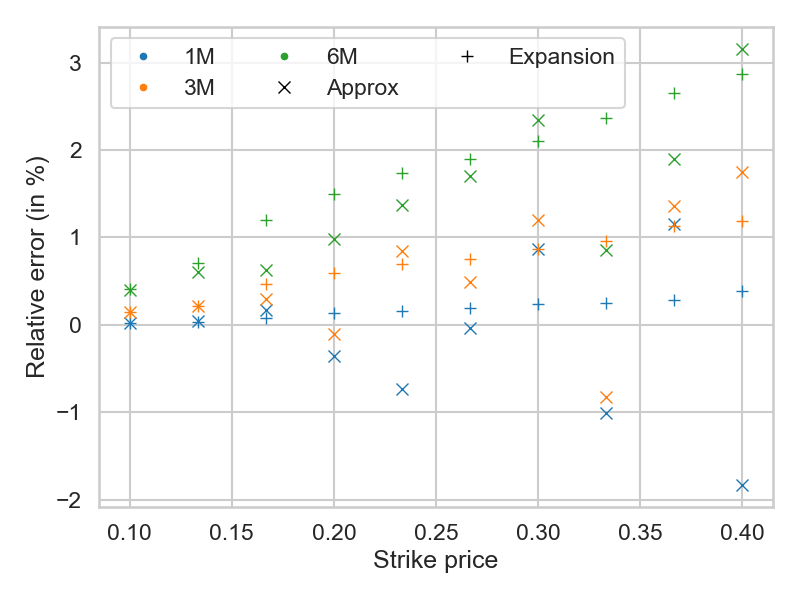}\\
    \caption{VIX call prices of Scenario 2 in the mixed standard Bergomi model.}
    \label{fig:mixed_case_bergomi_call_2}
\end{figure}

\begin{figure}[H]
    \centering
    \hspace{-0.5cm}
    \includegraphics[width=0.5\linewidth]{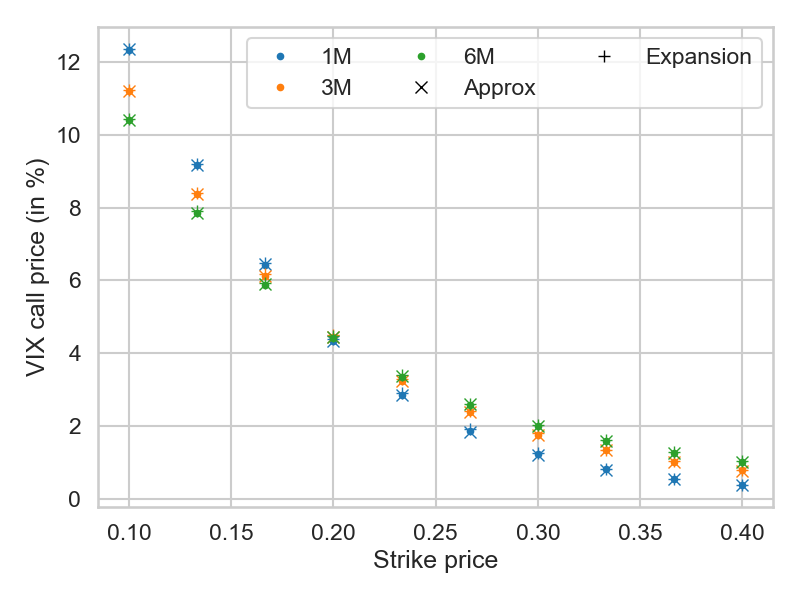}
    \includegraphics[width=0.5\linewidth]{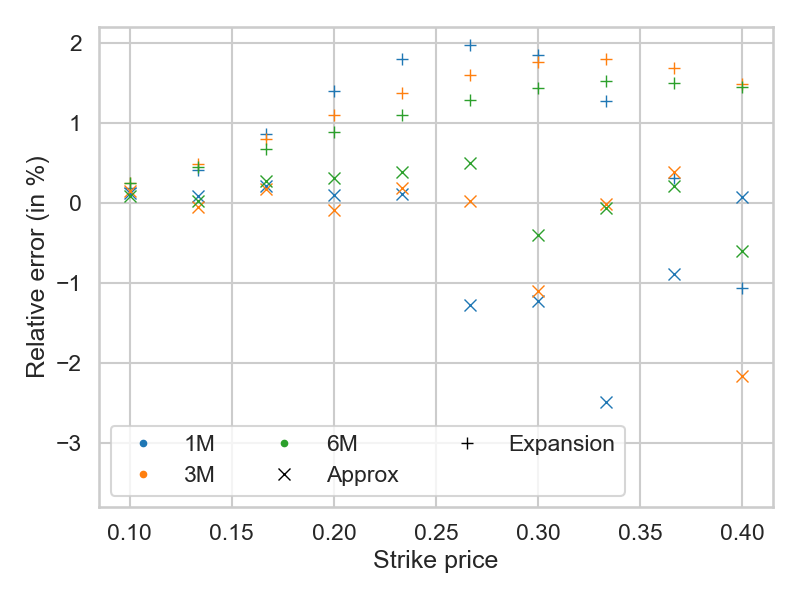}\\
    \caption{VIX call prices of Scenario 3 in the mixed rough Bergomi model.}
    \label{fig:mixed_case_rbergomi_call_1}
\end{figure}

\begin{figure}[H]
    \centering
    \hspace{-0.5cm}
    \includegraphics[width=0.5\linewidth]{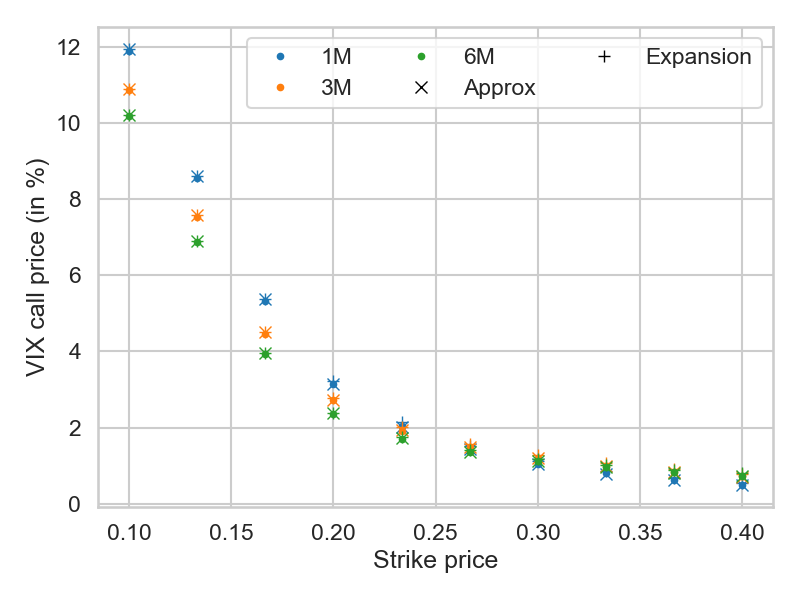}
    \includegraphics[width=0.5\linewidth]{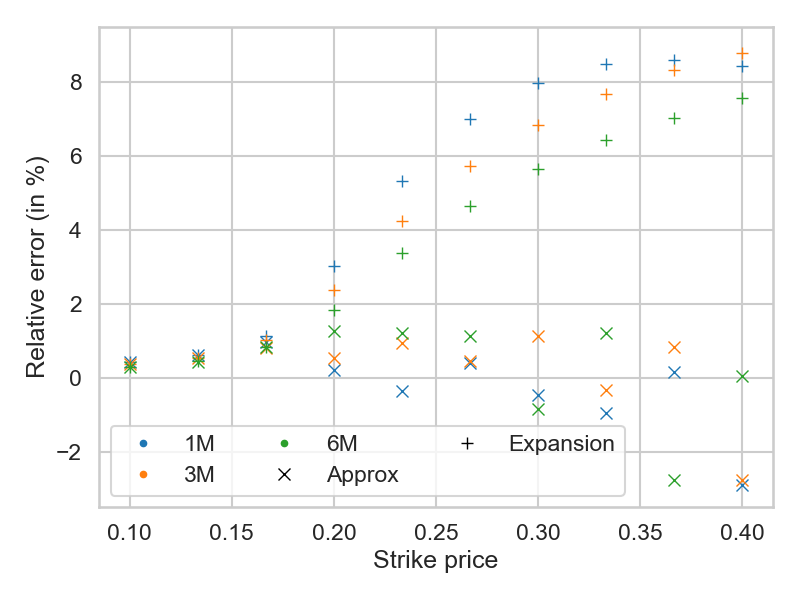}\\
    \caption{VIX call prices of Scenario 4 in the mixed rough Bergomi model.}
    \label{fig:mixed_case_rbergomi_call_2}
\end{figure}

\subsection{Implied volatility}
\label{sec:numerical_tests_vix_iv_extended_mixed_kernel}

In this section, we test the accuracy of the VIX implied volatility expansion \eqref{eq:vix_iv_proxy_mixed_kernel_extended} in Remark~\ref{re:extended_vix_iv_mixed_kernel}, which is labeled as \textit{Expansion (extended)} in the plots. This expansion is derived without discarding any terms in the $N$th-order Hermite expansion of the VIX call price in \eqref{eq:vix_call_hermite_expan}.
The reference implied volatility values are computed using the same methods described in Section \ref{sec:numerical_tests_mixed_kernel}. Moreover, we also report the values obtained from the implied volatility expansion in \eqref{eq:vix_iv_proxy_mixed_kernel}, whose derivation discards some terms. 
The numerical results under the mixed standard Bergomi model are presented in Figure~\ref{fig:mixed_case_bergomi_1_re} and \ref{fig:mixed_case_bergomi_2_re}. We observe that the performance of \eqref{eq:vix_iv_proxy_mixed_kernel_extended} is slightly worse than that of \eqref{eq:vix_iv_proxy_mixed_kernel} in Scenario 1, where the parameters $\omega_1$ and $\omega_2$ are close, or equivalently, where $\sigma_{P,1}$ and $\sigma_{P,2}$ are close to each other. In Scenario 2, although the implied volatility expansion \eqref{eq:vix_iv_proxy_mixed_kernel_extended}, which includes all terms, gives a closer approximation for some values of log-moneyness, its overall performance is worse.
The same pattern can be observed for the mixed rough Bergomi model in Figure~\ref{fig:mixed_case_rbergomi_1_re} and \ref{fig:mixed_case_rbergomi_2_re}. Therefore, we can conclude that the implied volatility expansion \eqref{eq:vix_iv_proxy_mixed_kernel_extended} generally performs worse than the expansion proposed in Theorem~\ref{thm:vix_iv_mixed_kernel}.

\begin{figure}[H]
    \centering
    \hspace{-0.5cm}
    \includegraphics[width=0.5\linewidth]{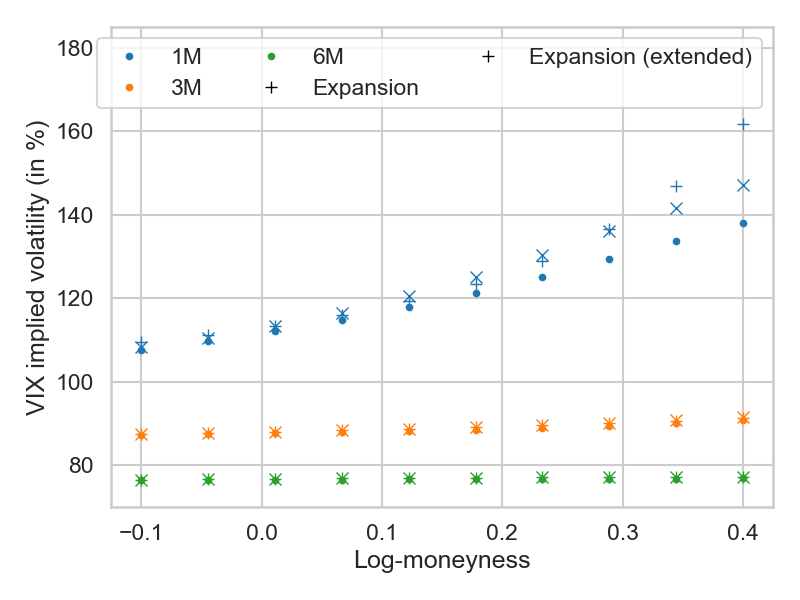}
    \includegraphics[width=0.5\linewidth]{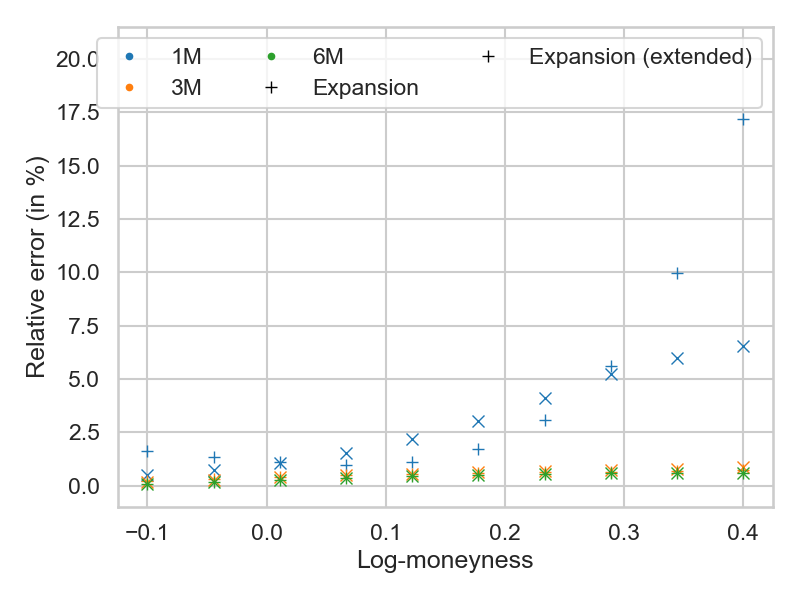}
    \caption{VIX smiles of Scenario 1 in the mixed standard Bergomi model.}
    \label{fig:mixed_case_bergomi_1_re}
\end{figure}

\begin{figure}[H]
    \centering
    \hspace{-0.5cm}
    \includegraphics[width=0.5\linewidth]{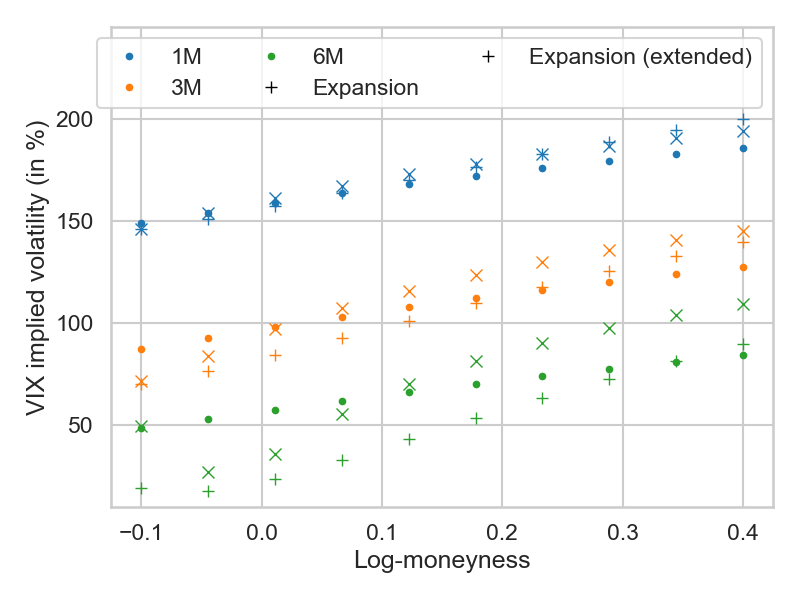}
    \includegraphics[width=0.5\linewidth]{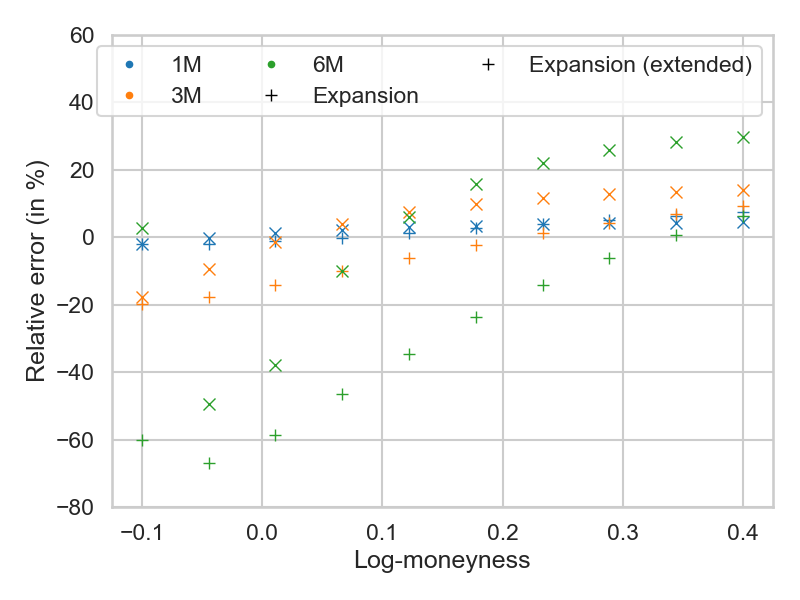}
    \caption{VIX smiles of Scenario 2 in the mixed standard Bergomi model.}
    \label{fig:mixed_case_bergomi_2_re}
\end{figure}

\begin{figure}[H]
    \centering
    \hspace{-0.5cm}
    \includegraphics[width=0.5\linewidth]{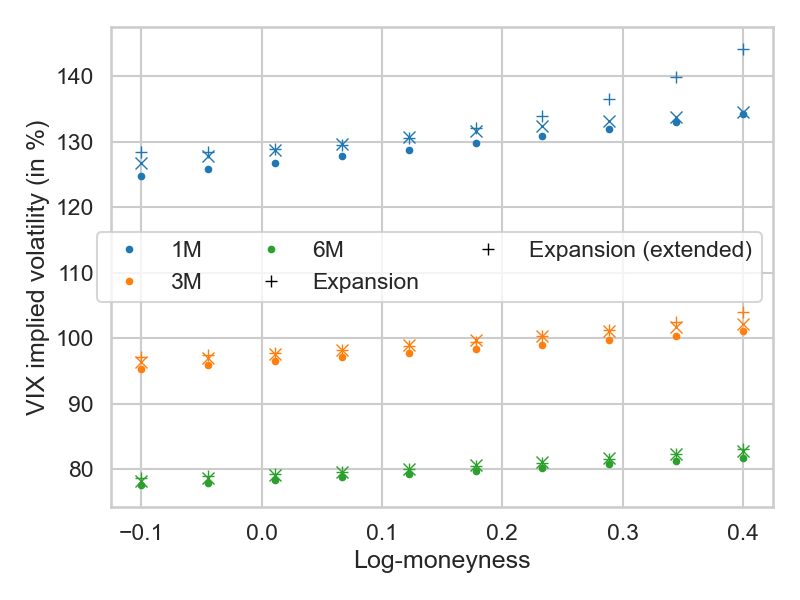}
    \includegraphics[width=0.5\linewidth]{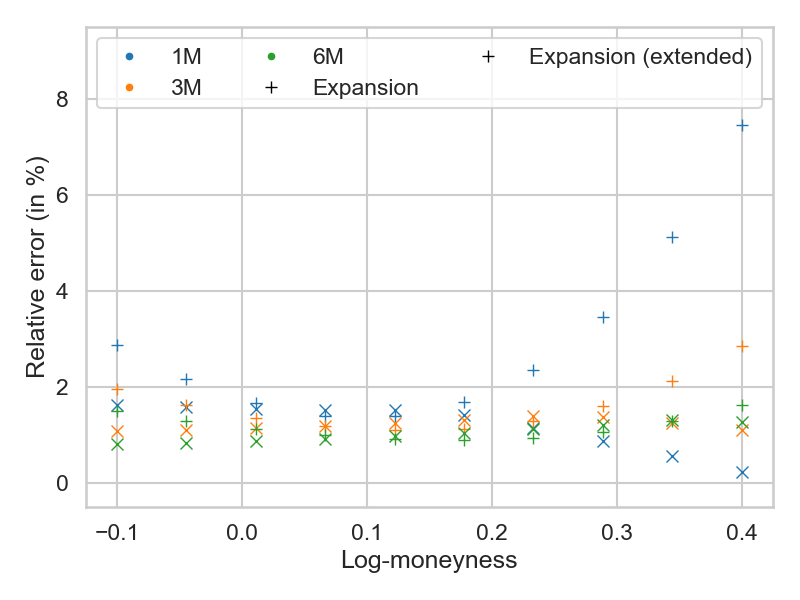}
    \caption{VIX smiles of Scenario 3 in the mixed rough Bergomi model.}
    \label{fig:mixed_case_rbergomi_1_re}
\end{figure}

\begin{figure}[H]
    \centering
    \hspace{-0.5cm}
    \includegraphics[width=0.5\linewidth]{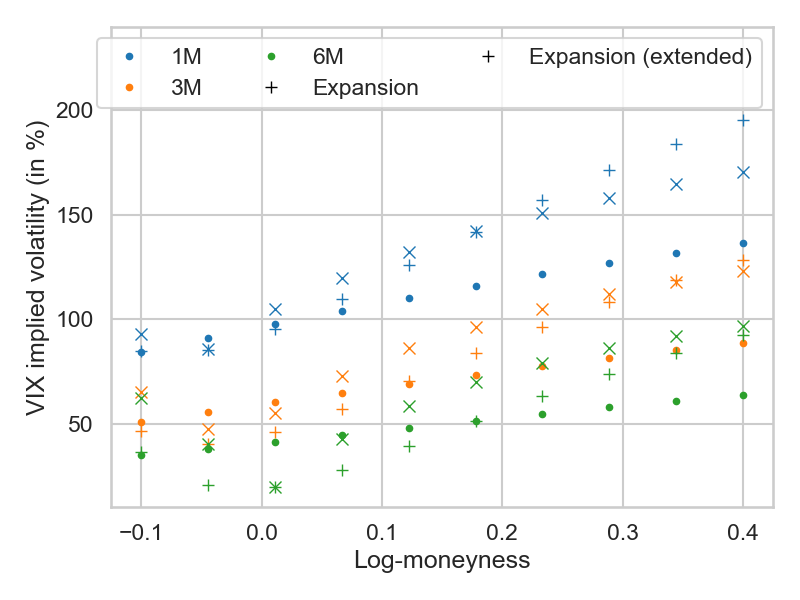}
    \includegraphics[width=0.5\linewidth]{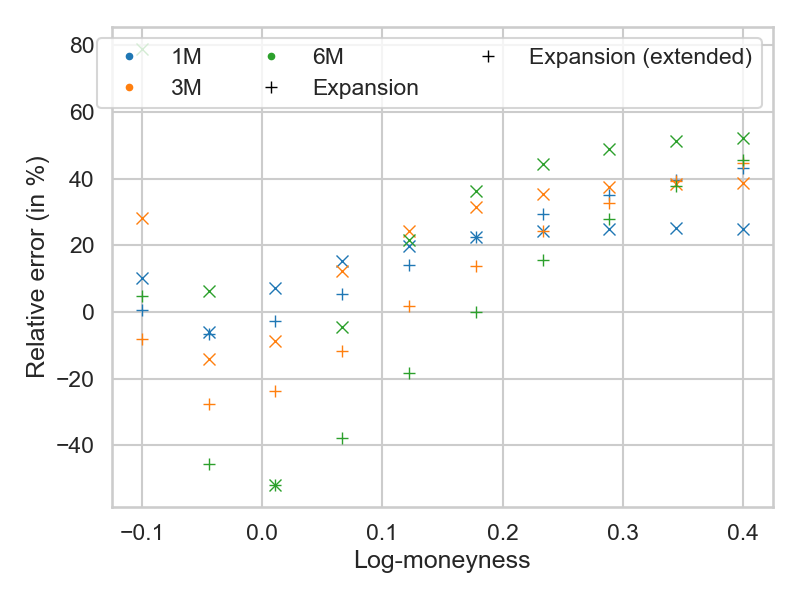}
    \caption{VIX smiles of Scenario 4 in the mixed rough Bergomi model.}
    \label{fig:mixed_case_rbergomi_2_re}
\end{figure}

\section{Additional calibration tests}
\label{sec:additional_calibration_tests}

In this section, we present the calibration results of the standard Bergomi and the mixed standard Bergomi models to market VIX smiles observed on 21 November 2025, using our implied volatility expansion formulas in \eqref{eq:vix_iv_proxy_single_kernel} and \eqref{eq:vix_iv_proxy_mixed_kernel}, respectively. The calibration procedure is the same as that in Section~\ref{sec:calibration_test}, except that the parameter set is changed from $(\eta, H, \xi_0^T)$ (resp. $(\eta_1, \eta_2, H, \lambda, \xi_0^T)$) to $(\omega, \kappa, \lambda, \xi_0^T)$ (resp. $(\omega_1, \omega_2, \kappa, \lambda, \xi_0^T)$) for the standard (resp. mixed standard) Bergomi model.
In this calibration test, we fix the mean-reversion speed $\kappa$ at 1, and choose the initial guesses $(\omega, \xi_0) = (1.0, 0.04)$ and $(\omega_1, \omega_2, \lambda, \xi_0) = (1, 0.1, 0.5, 0.04)$ in the standard and mixed standard Bergomi models, respectively.
The calibrated parameters are reported in Tables \ref{tab:calibrated_params_b} and \ref{tab:calibrated_params_mb}, and the calibrated VIX futures prices and VIX smiles are shown in Figures \ref{fig:vix_fut_b}--\ref{fig:vix_iv_mb} for both the standard and mixed standard Bergomi models. As discussed and observed in Section~\ref{sec:calibration_test}, the implied volatility expansion in \eqref{eq:vix_iv_proxy_single_kernel} in the single-kernel model generates almost flat VIX smiles as in Figure \ref{fig:vix_iv_b}. In contrast, the implied volatility expansion in \eqref{eq:vix_iv_proxy_mixed_kernel} in the mixed-kernel case captures the market VIX smile behavior well. Since the difference between the calibrated values of $\omega_1$ and $\omega_2$ is small compared with Scenario 2, and the sanity check result in Figure \ref{fig:vix_iv_mb_check} shows that the largest relative error is less than 13\%, the calibration results obtained from our implied volatility expansion can be regarded as reliable.

\begin{table}[H]
\centering
\footnotesize
\captionsetup{font = footnotesize, skip = 5pt}
\caption{Term structure for the calibrated parameters $(\xi_0^T, \omega^T)$ in the standard Bergomi model.}
\begin{tabular}{lccccc}
\hline \addlinespace[1ex]
 & $T \in [\frac{1}{12}, \frac{2}{12}[$ & $T \in [\frac{2}{12}, \frac{3}{12}[$ & $T \in [\frac{3}{12}, \frac{4}{12}[$ & $T \in [\frac{4}{12}, \frac{5}{12}[$\\ \addlinespace[1ex]
\hline \addlinespace[1ex]
$\xi_0^T$  & 0.052761 & 0.059737 & 0.063807 & 0.065413 \\ \addlinespace[2ex]
$\omega^T$ & 2.888529 & 2.30858 & 2.133329 & 2.133329 \\ \addlinespace[1ex]
\hline
\end{tabular}
\label{tab:calibrated_params_b}
\end{table}

\begin{figure}[H]
\centering
\hspace{-0.5cm}
\includegraphics[width=0.5\linewidth]{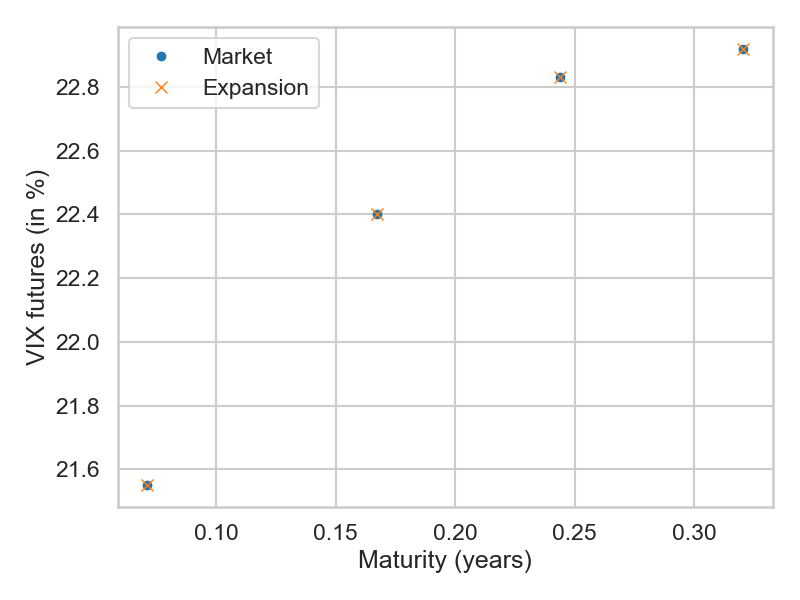}
\includegraphics[width=0.5\linewidth]{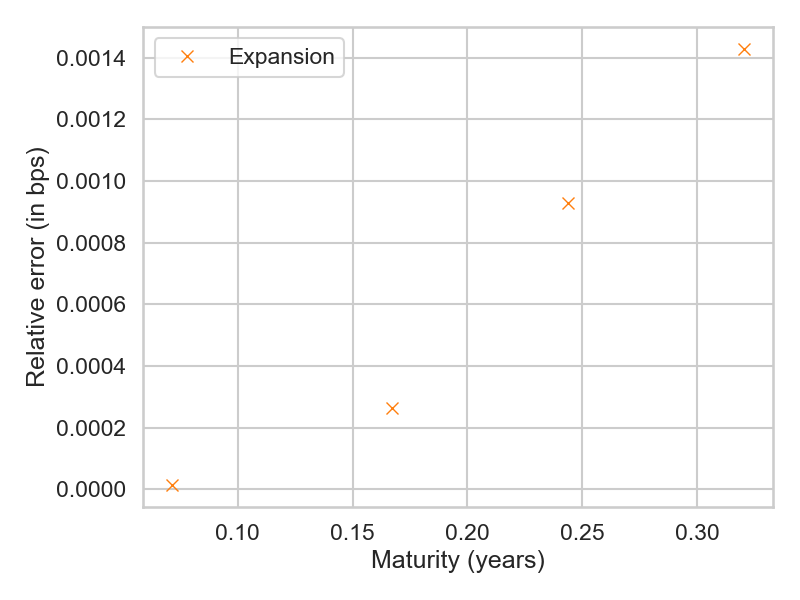}
\caption{Term structure of VIX futures calibrated with \eqref{eq:vix_fut_proxy_single_kernel} in the standard Bergomi model.}
\label{fig:vix_fut_b}
\end{figure}

\begin{figure}[H]
\centering
\hspace{-0.5cm}
\includegraphics[width=0.5\linewidth]{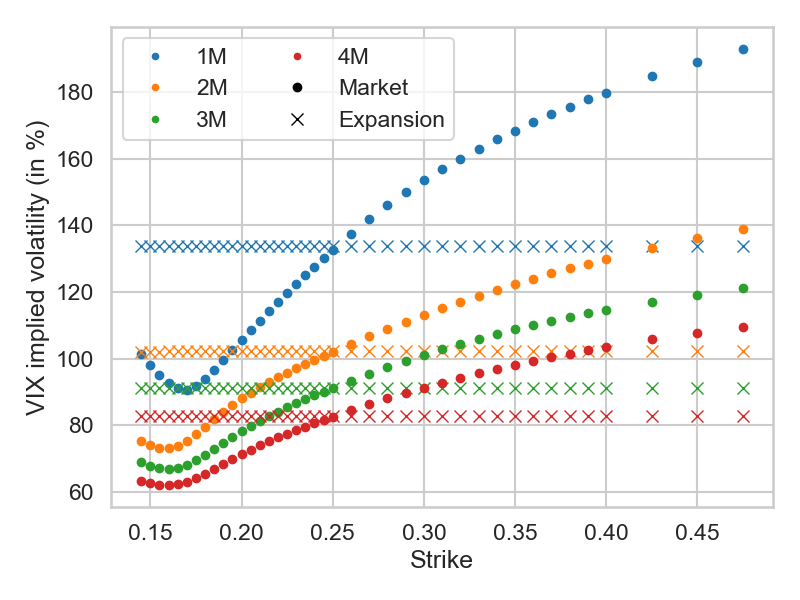}
\includegraphics[width=0.5\linewidth]{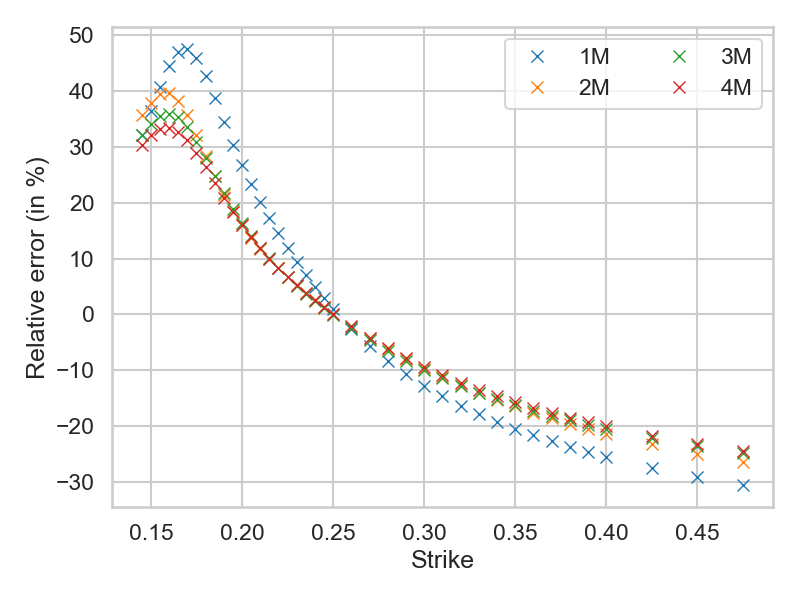}
\caption{Market VIX smiles calibrated using the VIX implied volatility expansion \eqref{eq:vix_iv_proxy_single_kernel} in the standard Bergomi model.}
\label{fig:vix_iv_b}
\end{figure}

\begin{table}[H]
\centering
\footnotesize
\captionsetup{font = footnotesize, skip = 5pt}
\caption{Term structure for the calibrated parameters $(\xi_0^T, \omega_1^T, \omega_2^T, \lambda^T)$ in the mixed standard Bergomi model.}
\begin{tabular}{lccccc}
\hline \addlinespace[1ex]
 & $T \in [\frac{1}{12}, \frac{2}{12}[$ & $T \in [\frac{2}{12}, \frac{3}{12}[$ & $T \in [\frac{3}{12}, \frac{4}{12}[$ & $T \in [\frac{4}{12}, \frac{5}{12}[$\\ \addlinespace[1ex]
\hline \addlinespace[1ex]
$\xi_0^T$  & 0.055597 & 0.06507 & 0.071619 & 0.074775 \\ \addlinespace[2ex]
$\omega_1^T$ & 9.823259 & 6.334923 & 5.570763 & 5.182657  \\ \addlinespace[2ex]
$\omega_2^T$ & 1.684106 & 1.341656 & 1.233346 & 1.188591 \\ \addlinespace[2ex]
$\lambda^T$ & 0.246807 & 0.351664 & 0.402835 & 0.419116 \\ \addlinespace[1ex]
\hline
\end{tabular}
\label{tab:calibrated_params_mb}
\end{table}

\begin{figure}[H]
\centering
\hspace{-0.5cm}
\includegraphics[width=0.5\linewidth]{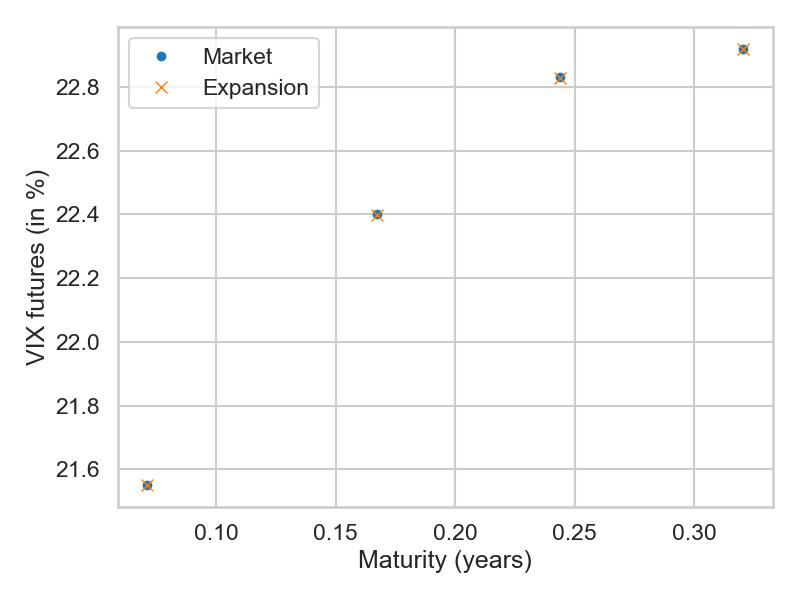}
\includegraphics[width=0.5\linewidth]{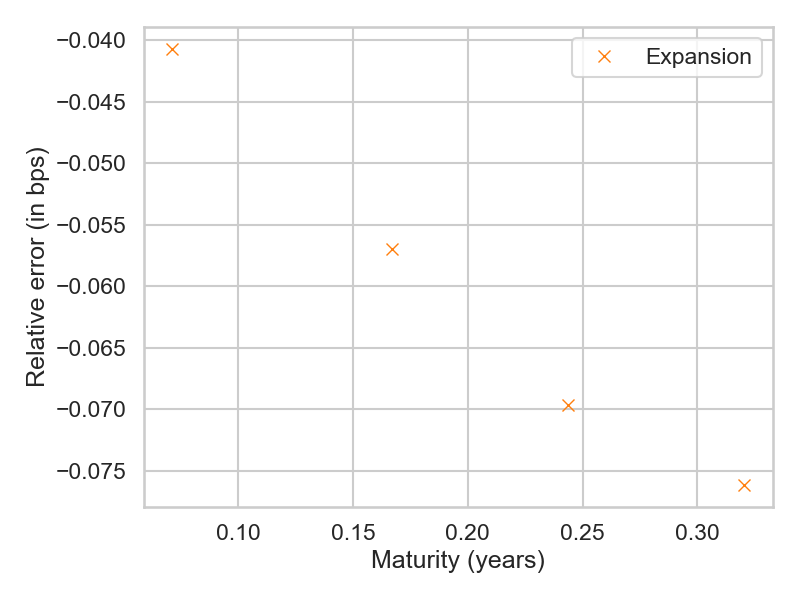}
\caption{Term structure of VIX futures calibrated with \eqref{eq:vix_fut_hermite} in the mixed standard Bergomi model.}
\label{fig:vix_fut_mb}
\end{figure}

\begin{figure}[H]
\centering
\hspace{-0.5cm}
\includegraphics[width=0.5\linewidth]{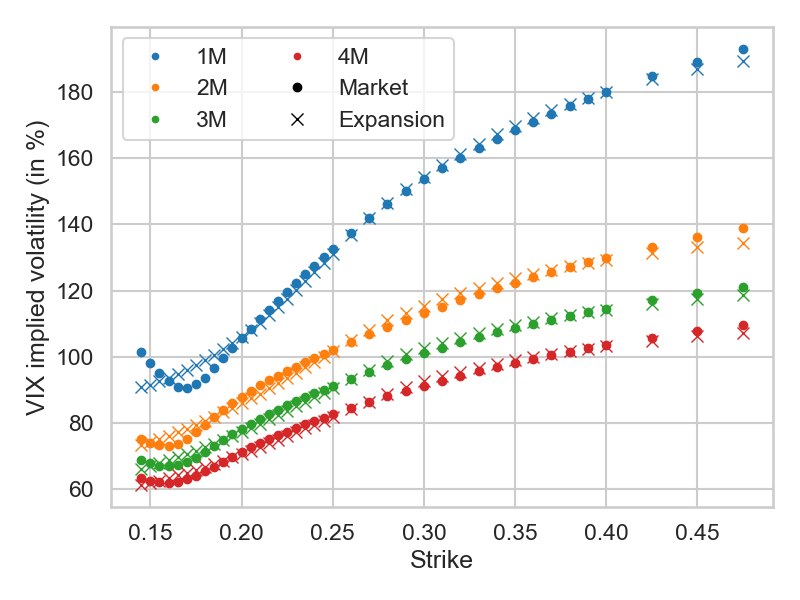}
\includegraphics[width=0.5\linewidth]{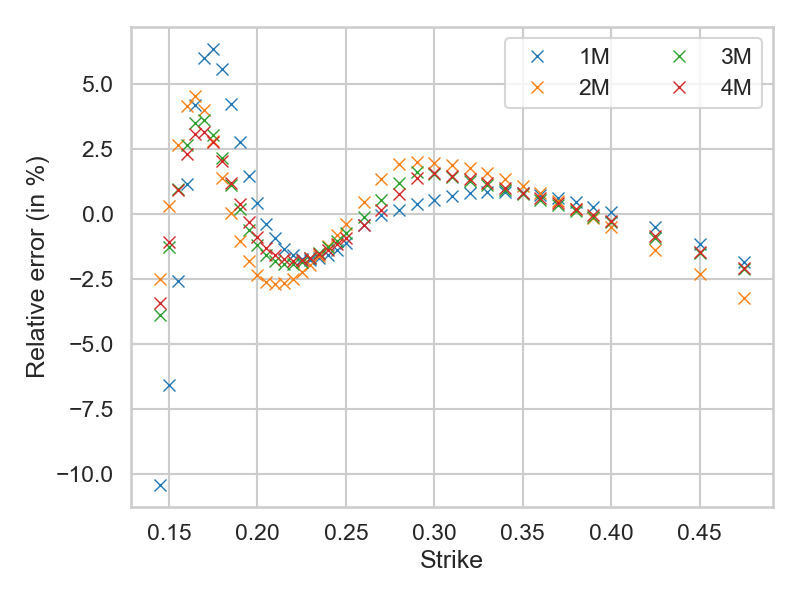}
\caption{Market VIX smiles calibrated using the VIX implied volatility expansion \eqref{eq:vix_iv_proxy_mixed_kernel} in the mixed standard Bergomi model.}
\label{fig:vix_iv_mb}
\end{figure}

\begin{figure}[H]
\centering
\hspace{-0.5cm}
\includegraphics[width=0.5\linewidth]{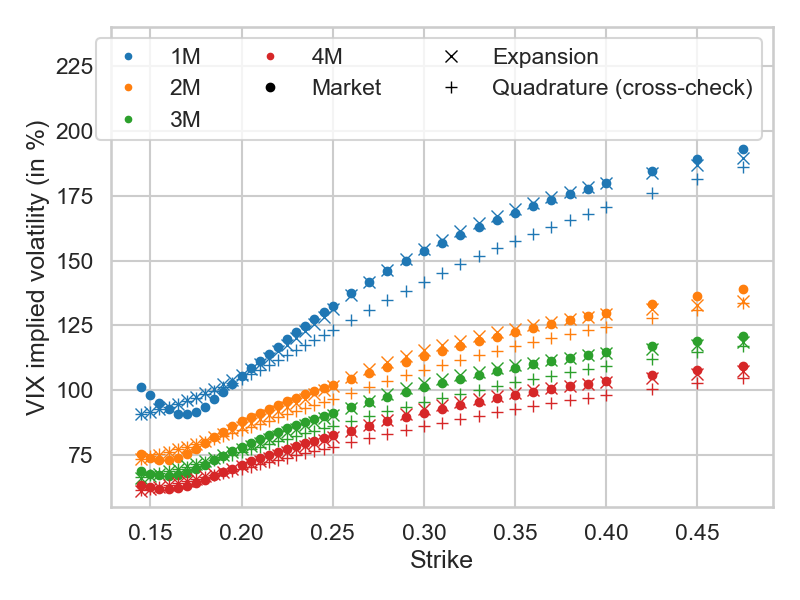}
\includegraphics[width=0.5\linewidth]{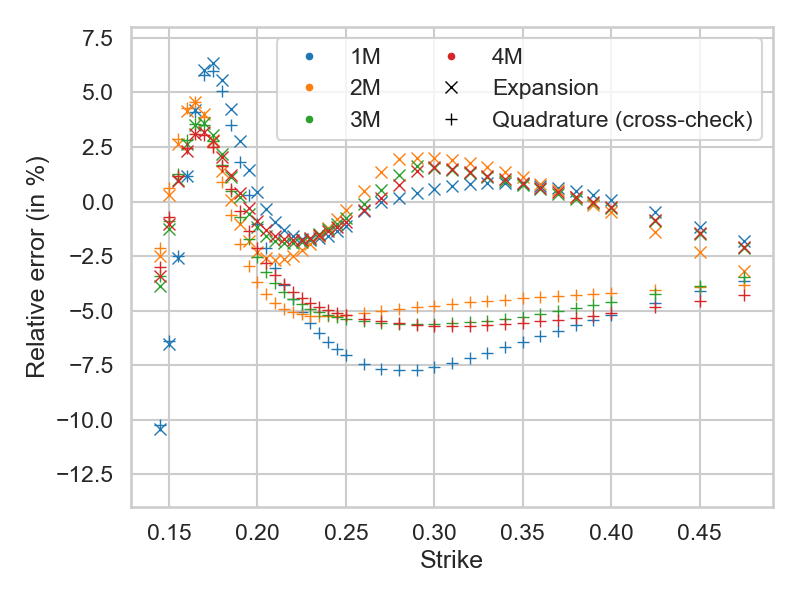}
\caption{VIX smiles implied by the weak price approximation in \eqref{eq:vix_opt_price_approx_mixed_kernel} under the calibrated parameters of Table \ref{tab:calibrated_params_mb}.}
\label{fig:vix_iv_mb_check}
\end{figure}

\end{document}